\newtheorem{theorem}{Theorem}
\newtheorem{proposition}{Proposition}
\numberwithin{equation}{section}
\begin{document}

\begin{titlepage}

\begin{flushright}
YITP-SB-17-54
\end{flushright}
\begin{center}
\vspace{1.0cm}
\Large{\textbf{The unitary subsector of generalized minimal models}}\\
\vspace{0.8cm}
\small{\textbf{Connor Behan}}\\
\vspace{0.5cm}
\textit{C. N. Yang Institute for Theoretical Physics, Stony Brook University, \\ Stony Brook, NY 11794, USA}
\end{center}

\vspace{1.0cm}
\begin{abstract}
We revisit the line of non-unitary theories that interpolate between the Virasoro minimal models. Numerical bootstrap applications have brought about interest in the four-point function involving the scalar primary of lowest dimension. Using recent progress in harmonic analysis on the conformal group, we prove the conjecture that global conformal blocks in this correlator appear with positive coefficients. We also compute many such coefficients in the simplest mixed correlator system. Finally, we comment on the status of using global conformal blocks to isolate the truly unitary points on this line.
\end{abstract}

\end{titlepage}

\tableofcontents

\section{Introduction}
Conformal field theories (CFTs) in two dimensions enjoy invariance under two copies of the Virasoro algebra --- an algebra defined by
\begin{equation}
\left [ L_m, L_n \right ] = (m - n) L_{m + n} + \frac{c}{12} m(m-1)(m+1) \delta_{m + n, 0} \; , \label{virasoro}
\end{equation}
where $c$ is the central charge. The power of this infinite-dimensional symmetry was perhaps most famously demonstrated in \cite{bpz84} with the discovery of the minimal models. In addition to providing an exact solution, representation theory of the Virasoro algebra enabled \cite{fqs84, fqs86, gko85, gko86} to show that these models are the only unitary CFTs in two dimensions with $c < 1$. However, it has become known more recently that one can see hints of the special role played by minimal models without exploiting Virasoro symmetry at all \cite{rrtv08, rv09, cr09, rrv10a, rrv10b, v12, ps10, psv12}. The method in question is the numerical bootstrap, which uses only the global conformal transformations --- two copies of $\mathfrak{sl}(2)$ in this case. Exclusion plots, based on crossing symmetry and unitarity, are shown in Figure \ref{single-multi} where a straight line containing the minimal models is clearly visible.\footnote{The algorithm used to generate Figure \ref{single-multi} will become important in section 4. These details are summarized in Appendix B.} A kink is present at the Ising point $(\Delta_\sigma, \Delta_\epsilon) = \left ( \frac{1}{8}, 1 \right )$ but, unlike in the three-dimensional case \cite{kps14, s15, kpsv16}, this kink does not sharpen into an island when three correlators are used to restrict the number of relevant operators. Following \cite{lrv13}, it is worthwhile to see which features of the $\frac{1}{8} \leq \Delta_\sigma \leq \frac{1}{2}$ solution can be predicted analytically. The goal of this work is to put the one-correlator upper bound on a more rigorous footing and to explain why the three-correlator upper bound is unchanged.
\begin{figure}[t!]
\centering
\subfloat[][One correlator]{\includegraphics[scale=0.45]{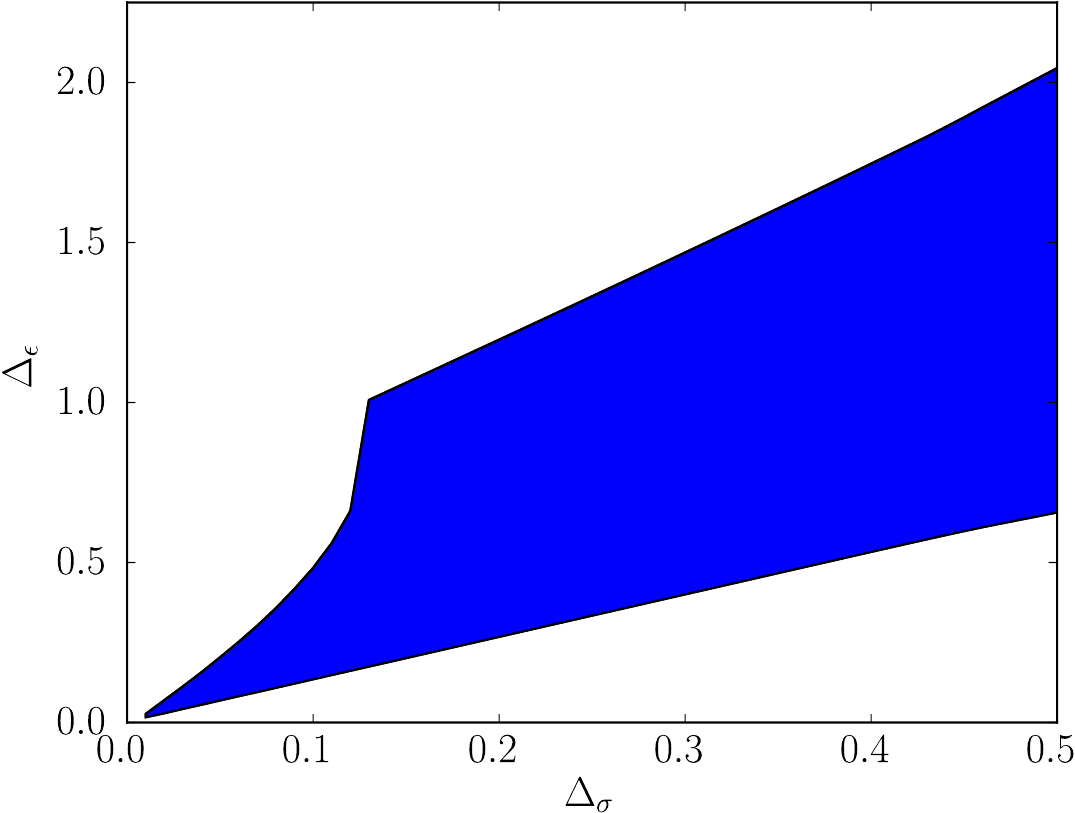}}
\subfloat[][Three correlators]{\includegraphics[scale=0.45]{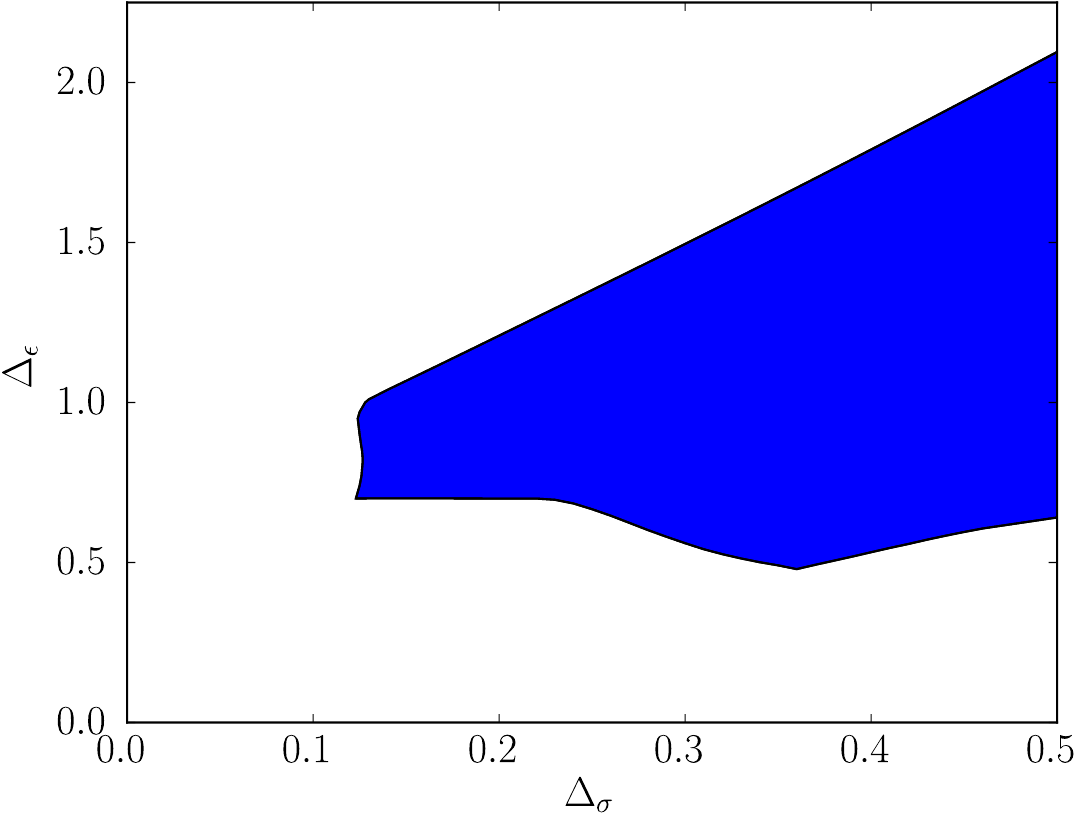}}
\caption{Allowed regions for the dimensions of $\sigma$ and $\epsilon$ --- the $\mathbb{Z}_2$-odd scalar of smallest dimension and the $\mathbb{Z}_2$-even scalar of smallest dimension respectively. The left plot follows from the constraints of crossing symmetry and unitarity on the four-point function $\left < \sigma\sigma\sigma\sigma \right >$. The right plot comes from the same constraints on $\left < \sigma\sigma\sigma\sigma \right >$, $\left < \sigma\sigma\epsilon\epsilon \right >$ and $\left < \epsilon\epsilon\epsilon\epsilon \right >$. In both cases, all OPEs are restricted to contain only one relevant scalar.}
\label{single-multi}
\end{figure}

In order to have a unitary 2D CFT with $c < 1$, it is necessary that all primary operators have conformal weights equal to Kac's formula $h_{r,s}(c)$ for some $(r, s)$. The Kac table of degenerate weights is given by
\begin{eqnarray}
c &=& 1 - \frac{6}{m(m + 1)} \;\;\;\;\;\;\;\;\;\;\;\;\;\; m > 2 \nonumber \\
h_{r, s} &=& \frac{[(m + 1)r - ms]^2 - 1}{4m(m + 1)} \;\;\; r, s \in \mathbb{Z}_{>0} \; . \label{kac-table}
\end{eqnarray}
Each of these Verma modules has a null state at level $rs$. In the operator product expansion (OPE) of primary operators $\phi_{r, s}$ and $\phi_{r^\prime, s^\prime}$, the new conformal families that appear are captured in the fusion rule
\begin{equation}
\phi_{r, s} \times \phi_{r^\prime, s^\prime} = \sum_{k = 0}^{\left \lfloor \frac{r + r^\prime - |r - r^\prime| - 2}{2} \right \rfloor} \sum_{l = 0}^{\left \lfloor \frac{s + s^\prime - |s - s^\prime| - 2}{2} \right \rfloor} \phi_{|r - r^\prime| + 2k + 1, |s - s^\prime| + 2l + 1} \; . \label{fusion}
\end{equation}
For generic values of $m$, this leads to an infinite discrete spectrum. All OPEs are finite, but as we raise the values of $r$ and $s$, these sums become arbitrarily long. A special situation occurs when $m$ is an integer. This precisely describes a central charge for which $h_{r, s}(c) = h_{m - r, m + 1 - s}(c)$. The Kac table for these doubly degenerate weights can be shown to truncate, allowing us to consider only $0 < r < m$ and $0 < s < m + 1$. This leads to a finite number of primary operators and, as it turns out, a unitary theory. This theory, called a (unitary) minimal model, is often denoted $\mathcal{M}(m + 1, m)$. Since Figure \ref{single-multi} only shows a kink for $m = 3$, it is evident that non-integer values of $m$ are still important for the bootstrap. Analytically continuing the unitary minimal models in this way is not new. In \cite{z05, bz05}, the generic $\mathcal{M}(m + 1, m)$ was found to be a solvable consistent theory and referred to as a \textit{generalized minimal model}.\footnote{This should not be confused with \textit{non-unitary minimal model}, which describes a non-unitary $c < 1$ theory with finitely many primaries. This discrete set is denoted $\mathcal{M}(p, q)$ with $p$ and $q$ relatively prime.} It obeys the 2D CFT axioms of associativity and Virasoro symmetry but not unitary \cite{r14}. The upper bound from the bootstrap
\begin{equation}
\Delta_\epsilon = \frac{1}{3} (8\Delta_\sigma + 2) \; , \label{upper-bound}
\end{equation}
is realized by the generalized minimal model four-point function $\left < \sigma\sigma\sigma\sigma \right >$ if we identify $\sigma \equiv \phi_{1, 2}$ and $\epsilon \equiv \phi_{1, 3}$.\footnote{This notation differs from the statistical physics literature in which it is natural to regard $\phi_{2, 2}$ as the spin-field.} The authors of \cite{lrv13} observed squared OPE coefficients of quasiprimaries in this correlator that were all positive. If this conjecture is correct, we must conclude that in an arbitrary $\mathcal{M}(m + 1, m)$ theory, the non-unitarity is mild enough that it cannot be diagnosed from the correlator of four $\sigma$ operators. What this means is that for a non-degenerate $\mathcal{O}$ appearing in the $\sigma \times \sigma$ OPE, the state $\mathcal{O}(x_1) \sigma(x_2) \sigma(x_3) \left | 0 \right >$ on the cylinder will have a positive norm. Similarly, within a degenerate subspace, all such norms will sum to a positive number. We use the terminology that $\left < \sigma\sigma\sigma\sigma \right >$ comprises a \textit{unitary subsector} of the theory since other OPEs, \textit{e.g.} $\sigma \times \epsilon$, are needed to construct negative norms. As we will show, the conjecture can indeed be proven with the help of a new formalism in \cite{hv17, h17} for deriving $\mathfrak{sl}(2)$ block expansions. This proof, along with a systematic look at the other two correlators, forms the main result of this work.\footnote{Looking ahead, the decompositions (\ref{4sigma-result-id}) and (\ref{4sigma-result-eps}) are essential for the positivity proof. We have learned that they were previously obtained, through a slightly different method, in unpublished work by Mikhail Isachenkov and Volker Schomerus.}
\begin{table}[h]
\centering
\begin{tabular}{l|l|l}
Correlator & $3 < m < 4$ & $4 < m < \infty$ \\
\hline
$\left < \sigma\sigma\sigma\sigma \right >$ & All coefficients $\geq 0$ & All coefficients $\geq 0$ \\
$\left < \sigma\sigma\epsilon\epsilon \right >$ & One checked coefficient $< 0$ & All checked coefficients $\geq 0$ \\
$\left < \epsilon\epsilon\epsilon\epsilon \right >$ & Infinitely many coefficients $< 0$ & All checked coefficients $\geq 0$
\end{tabular}
\caption{The status of three-correlator $\mathfrak{sl}(2)$ block coefficients in the generalized minimal models. Statements about $\left < \sigma\sigma\sigma\sigma \right >$ apply rigorously to the full set of coefficients. For the other two correlators, we have manually decomposed them up to order $15$.}
\label{result-table}
\end{table}
The summary of results in Table \ref{result-table} shows that for $3 < m < 4$, $\left < \sigma\sigma\sigma\sigma \right >$ is the only unitary subsector we have found in the sense described above. However, it appears that the generalized minimal models with $m > 4$ have larger unitary subsectors that include the other two four-point functions. The last line shows a surprising tension with Figure \ref{single-multi}. Despite the fact that $\left < \epsilon\epsilon\epsilon\epsilon \right >$ displays significant unitarity violation for $m < 4$, the line (\ref{upper-bound}) in the three-correlator exclusion plot is uninterrupted. The numerics are telling us that there is a partial solution to crossing, other than $\mathcal{M}(m + 1, m)$, which fills in this region. Using the properties of minimal models, we will show that the existence of this solution can be concluded from a \textit{simpler} numerical setup. It would be nice to eventually find a fully analytic construction.

Before deriving the results in Table \ref{result-table}, it is instructive to consider the lower bound
\begin{equation}
\Delta_\epsilon = \frac{4}{3} \Delta_\sigma \; , \label{lower-bound}
\end{equation}
which appears on the left side of Figure \ref{single-multi}. This gives us a more straightforward opportunity to use the techniques in \cite{hv17, h17}. The explicit solution for $\left < \sigma\sigma\sigma\sigma \right >$ along this line was found in \cite{efr16}, which focused on its special role in the non-unitary (severe truncation) bootstrap of \cite{g13, gr14, glmr15}.\footnote{The fact that it also appears in the unitary bootstrap has not received much attention. In \cite{s16}, it was mentioned that the lower bound at $\Delta_\sigma = \frac{1}{8}$ was somewhat close to $\Delta_\epsilon = \frac{1}{6}$. Interestingly, the region below the bound coincides with the region where standard OPE maximization techniques cannot constrain the central charge \cite{v11}.} This solution exhibits Virasoro symmetry with a central charge given by
\begin{equation}
c = 1 + 16 \Delta_\sigma \label{non-kac-table}
\end{equation}
but no Virasoro identity block. To find an analytic explanation for why this four-point function appears in the unitary bootstrap, one must be able to show that this vacuum decoupling is the only sign of non-unitarity that appears at the level of a single correlator. In other words, one must be able to repeat the logic of \cite{lrv13} and find positive squared OPE coefficients for all of the quasiprimaries that do appear. These will be seen as very large coefficients by the numerics because the algorithm used for Figure \ref{single-multi} fixes $\lambda_{\sigma\sigma I} = 1$. This is indeed what we find from the extremal functional method of \cite{ep12}.

This paper is organized as follows. In section 2, we study the lower bound (\ref{lower-bound}) as a warm-up. In this case, it is particularly easy to invert the OPE and find that all $\mathfrak{sl}(2)$ block coefficients are positive. The methods involved prepare us for our main interest, which is the upper bound (\ref{upper-bound}). Turning to this upper bound in section 3, we calculate the global block coefficients summarized in Table \ref{result-table}. In the case of $\left < \sigma\sigma\sigma\sigma \right >$, we find closed-form expressions. In showing that they are positive, we prove the conjecture made in the appendix of \cite{lrv13}. For the other two correlators, we expand them to high order recursively and conjecture that the $\mathfrak{sl}(2)$ block coefficients are positive again when $4 < m < \infty$. Even though the Virasoro blocks in $\left < \epsilon\epsilon\epsilon\epsilon \right >$ appear to have positive $\mathfrak{sl}(2)$ expansions everywhere, the unitarity violation for $3 < m < 4$ arises because of the coefficients multiplying the Virasoro blocks themselves. In section 4, we go back to the bootstrap and discuss what these patterns in the OPE coefficients mean for the results in Figure \ref{single-multi}. In particular, we perform a semi-analytic treatment of the problematic correlator. The result is that one does not need to perform a three-correlator bootstrap to predict that (\ref{upper-bound}) survives --- the search for an upper bound may be reduced to a one-correlator problem. Before concluding, in section 5, we discuss other 2D theories which might be possible to study using more correlators or more assumptions on the spectrum.

\section{The lower line: A warm-up}
The four-point function along the line (\ref{lower-bound}) consists of a single Virasoro block $V(h_i, h, c ; z)$. It was found in \cite{efr16} via the Coulomb gas formalism which writes the central charge as $c = 1 - 24\alpha_0^2$ and places a background charge of $2\alpha_0$ at infinity. This allows a number of four-point functions to be realized as correlators of vertex operators with additional insertions of screening charges. The simplest of these is a correlator of four scalars that all have charge $\frac{\alpha_0}{2}$. Since the neutrality condition for this is satisfied without any screening charges, one finds the manifestly crossing symmetric
\begin{eqnarray}
\left < \sigma(0) \sigma(z, \bar{z}) \sigma(1) \sigma(\infty) \right > &=& V \left ( -\frac{3}{4} \alpha_0^2, -\alpha_0^2, 1 - 24\alpha_0^2 ; z \right ) V \left ( -\frac{3}{4} \alpha_0^2, -\alpha_0^2, 1 - 24\alpha_0^2 ; \bar{z} \right ) \nonumber \\
&=& |z (1 - z)|^{\alpha_0^2} \; , \label{lower-4sigma1}
\end{eqnarray}
where we have used $h = \alpha (\alpha - 2\alpha_0)$. Expressing (\ref{lower-4sigma1}) in terms of $\Delta_\sigma$,
\begin{eqnarray}
\left < \sigma(0) \sigma(z, \bar{z}) \sigma(1) \sigma(\infty) \right > &=& \frac{g(z) g(\bar{z})}{|z|^{2\Delta_\sigma}} \nonumber \\
g(z) &=& z^{\frac{2}{3}\Delta_\sigma} (1 - z)^{-\frac{1}{3}\Delta_\sigma} \; . \label{lower-4sigma2}
\end{eqnarray}

Our task now is to expand $g(z)$ into $\mathfrak{sl}(2)$ blocks:
\begin{eqnarray}
g(z) &=& \sum_{n = 0}^\infty c_n K_{\frac{2}{3}\Delta_\sigma + n}(z) \nonumber \\
K_h(z) &\equiv& z^h {}_2F_1(h, h; 2h; z) \; . \label{sl2-block}
\end{eqnarray}
This is guaranteed to be an expansion in even integers due to the Bose symmetry of the $\sigma \times \sigma$ OPE. To proceed by the brute-force approach, we expand the hypergeometric function and switch the order of two sums.
\begin{eqnarray}
g(z) &=& \sum_{n = 0}^\infty \sum_{m = 0}^\infty c_n \frac{\left ( \frac{2}{3} \Delta_\sigma + n \right )_m^2}{\left ( \frac{4}{3} \Delta_\sigma + 2n \right )_m} \frac{z^{\frac{2}{3} \Delta_\sigma + n + m}}{m!} \nonumber \\
&=& \sum_{k = 0}^\infty \sum_{n = 0}^k c_n \frac{\left ( \frac{2}{3} \Delta_\sigma + n \right )_{k - n}^2}{\left ( \frac{4}{3} \Delta_\sigma + 2n \right )_{k - n}} \frac{z^{\frac{2}{3} \Delta_\sigma + k}}{(k - n)!} \label{sieve1}
\end{eqnarray}
We may now compare the inner finite sums to the Taylor coefficients of (\ref{lower-4sigma2}), given by $b_k = \frac{(\frac{1}{3} \Delta_\sigma)_k}{k!}$. Since the lower triangular system for $c_n$ yields to back-substitution,
\begin{equation}
c_{2k} = b_{2k} - \sum_{n = 0}^{k - 1} c_{2n} \frac{\left ( \frac{2}{3} \Delta_\sigma + 2n \right )_{2(k - n)}^2}{\left ( \frac{4}{3} \Delta_\sigma + 4n \right )_{2(k - n)}} \frac{1}{(2k - 2n)!} \; . \label{sieve2}
\end{equation}

Rather than using this recursive procedure, we will now review a method for computing the $c_n$ directly. The blocks, defined in (\ref{sl2-block}), are eigenfunctions of the conformal Casimir
\begin{eqnarray}
D K_h(z) &=& h(h - 1) K_h(z) \nonumber \\
D &=& z^2(1 - z)\frac{\partial^2}{\partial z^2} - z^2\frac{\partial}{\partial z} \; . \label{casimir}
\end{eqnarray}
It is well known that $D$ is self-adjoint on $[0, 1]$ with respect to the measure $z^{-2}$. The authors of \cite{hv17} used this fact to develop the Sturm-Liouville theory of this operator and construct the orthogonal eigenfunctions
\begin{equation}
\Psi_h(z) = \frac{\Gamma(1 - 2h)}{\Gamma(1 - h)^2} K_h(z) + (h \leftrightarrow 1 - h) \; . \label{eigenfunction}
\end{equation}
It is convenient to set $h = \frac{1}{2} + \alpha$ in which case (\ref{eigenfunction}) becomes a function $\Psi_\alpha(z)$ which is even in $\alpha$. In order for it to have a finite norm, $\alpha$ cannot be real. We must go to imaginary dimension space and take $\alpha \in i\mathbb{R}$.\footnote{There is another name for this space as evidenced by the title of \cite{hv17}.} The result is that to any four-point function $f(z)$, we may associate a density $\hat{f}(\alpha) = \hat{f}(-\alpha)$ via the invertible transform
\begin{eqnarray}
f(z) &=& \frac{1}{2\pi i} \int_{-i\infty}^{i\infty} \hat{f}(\alpha) \Psi_\alpha(z) \frac{\textup{d}\alpha}{N(\alpha)} \nonumber \\
N(\alpha) &\equiv& \frac{\Gamma(\alpha)\Gamma(-\alpha)}{2\pi \Gamma(\frac{1}{2} + \alpha) \Gamma(\frac{1}{2} - \alpha)} \; . \label{alpha-space}
\end{eqnarray}
It is now clear that OPE coefficients may be read off from the residues of $\hat{f}(\alpha)$ whenever its poles are on the real axis. A formula that \cite{hv17, h17} derived using this method is
\begin{equation}
z^p (1 - z)^{-q} = \sum_{n = 0}^\infty \frac{(p)_n^2}{(2p + n - 1)_n n!} {}_3F_2 \left ( \begin{tabular}{c} $-n, 2p + n - 1, p - q$ \\ $p, p$ \end{tabular} ; 1 \right ) K_{p + n}(z) \; . \label{power-law}
\end{equation}
We will use this in the current section and the next one.

Specializing (\ref{power-law}) to the four-point function (\ref{lower-4sigma2}), we immediately find
\begin{equation}
c_n = \frac{\left ( \frac{2}{3} \Delta_\sigma \right )_n^2}{\left ( \frac{4}{3} \Delta_\sigma + n - 1\right )_n n!} {}_3F_2 \left ( \begin{tabular}{c} $-n, \frac{4}{3} \Delta_\sigma + n - 1, \frac{1}{3} \Delta_\sigma$ \\ $\frac{2}{3} \Delta_\sigma, \frac{2}{3} \Delta_\sigma$ \end{tabular} ; 1 \right ) \; . \label{continuous-hahn-before}
\end{equation}
There are two ways to assess the positivity of (\ref{continuous-hahn-before}). The first is to recall the definition of a continuous Hahn polynomial \cite{ks96}.
\begin{equation}
\tilde{P}_n(a, b, c, d; x) = {}_3F_2 \left ( \begin{tabular}{c} $-n, n + a + b + c + d - 1, a + x$ \\ $a + c, a + d$ \end{tabular} ; 1 \right ) \label{continuous-hahn}
\end{equation}
Clearly,
\begin{equation}
c_n = \frac{\left ( \frac{2}{3} \Delta_\sigma \right )_n^2}{\left ( \frac{4}{3} \Delta_\sigma + n - 1\right )_n n!} \tilde{P}_n \left ( \frac{2}{3} \Delta_\sigma, \frac{2}{3} \Delta_\sigma, 0, 0; -\frac{1}{3} \Delta_\sigma \right ) \; , \label{continuous-hahn-after}
\end{equation}
is a valid rewriting of (\ref{continuous-hahn-before}).\footnote{In the notation of \cite{gkss17}, we would write $c_n = \frac{2^{-n}}{n!} Q_{n, 0}^{\frac{4}{3} \Delta_\sigma + n} \left ( -\frac{1}{3} \Delta_\sigma \right )$.} Suppressing their parameters, the polynomials $\tilde{P}_n(a, b, c, d; x)$ satisfy the following recurrence relation:
\begin{eqnarray}
(x + a)\tilde{P}_n(x) &=& A_n \tilde{P}_{n + 1}(x) - (A_n + B_n) \tilde{P}_n(x) + B_n \tilde{P}_{n - 1}(x) \nonumber \\
A_n &\equiv& -\frac{(n + a + b + c + d - 1)(n + a + c)(n + a + d)}{(2n + a + b + c + d - 1)(2n + a + b + c + d)} \nonumber \\
B_n &\equiv& \frac{n(n + b + c - 1)(n + b + d - 1)}{(2n + a + b + c + d - 2)(2n + a + b + c + d - 1)} \; . \label{continuous-hahn-recursion}
\end{eqnarray}
For our parameters, we may easily check that $A_n + B_n + a + x = 0$. It then follows by induction that all $c_{2k + 1}$ vanish. Once we know this, (\ref{continuous-hahn-recursion}) is effectively a two-term recursion. Seeing a positive constant of proportionality in
\begin{eqnarray}
\tilde{P}_{2k} \left ( -\frac{1}{3}\Delta_\sigma \right ) &=& -\frac{B_{2k - 1}}{A_{2k - 1}} \tilde{P}_{2k - 2} \left ( -\frac{1}{3}\Delta_\sigma \right ) \nonumber \\
&=& \frac{3(2k - 1)(\Delta_\sigma + 3k - 3)}{(2\Delta_\sigma + 3k - 3)(2\Delta_\sigma + 6k - 3)} \tilde{P}_{2k - 2} \left ( -\frac{1}{3}\Delta_\sigma \right ) \; , \label{effective-recursion}
\end{eqnarray}
we conclude that the sequence $c_{2k}$ decays to zero monotonically from above. It is therefore imperative that the bootstrap single out a region that includes (\ref{lower-bound}). To derive this without referring to continuous Hahn polynomials, one may instead express $c_n$ in terms of gamma functions. This is possible because Watson's theorem \cite{c12},
\begin{equation}
{}_3F_2 \left ( \begin{tabular}{c} $a, b, c$ \\ $\frac{a + b + 1}{2}, 2c$ \end{tabular} ; 1 \right ) = \frac{\Gamma(\frac{1}{2})\Gamma(\frac{1+a+b}{2})\Gamma(\frac{1}{2} + c)\Gamma(\frac{1-a-b}{2} + c)}{\Gamma(\frac{1+a}{2})\Gamma(\frac{1+b}{2})\Gamma(\frac{1-a}{2} + c)\Gamma(\frac{1-b}{2} + c)} \label{watson}
\end{equation}
applies whenever $p = 2q$ in (\ref{power-law}). Although $\Re(a + b - 2c) < 1$ is usually needed for convergence, we may drop this requirement for a hypergeometric function that terminates.

This analysis does not explain why (\ref{lower-bound}) saturates the lower bound in the one-correlator result of Figure \ref{single-multi}. However, it is encouraging that bounds of this form in one dimension have been proven in \cite{m16}. The Coulomb gas formalism does not yield an obvious way to solve for the correlators $\left < \sigma\sigma\epsilon\epsilon \right >$, $\left < \epsilon\epsilon\epsilon\epsilon \right >$ or even to verify that they exist. Because the three-correlator plot in Figure \ref{single-multi} excludes this line, any theory to which $\left < \sigma\sigma\sigma\sigma \right >$ could extend would have to be highly non-unitary.

In this section, we have seen two methods for proving that global block coefficients in (\ref{lower-4sigma2}) are positive. One uses Watson's theorem and the other uses a recurrence relation for orthogonal polynomials. We will need both of these methods when we prove positivity in the generalized minimal models. Before moving on, there is an interesting way to check our results in a spacetime with Minkowski signature. Even though (\ref{lower-4sigma2}) is not strictly a correlation function in a unitary theory, it is still bounded in the Regge limit. Its $\mathfrak{sl}(2)$ block expansion should therefore be calculable with the conformal Froissart-Gribov formula \cite{c17} which in our case reads
\begin{eqnarray}
c(\Delta, \ell) &=& \kappa_{\Delta + \ell} \int_0^1 \int_0^1 K_{\frac{\Delta + \ell}{2}}(z) K_{\frac{\ell - \Delta + 2}{2}}(\bar{z}) \mathrm{dDisc}[|z|^{2\Delta_\sigma}\left < \sigma(0) \sigma(z, \bar{z}) \sigma(1) \sigma(\infty) \right >] \frac{\textup{d}z}{z^2} \frac{\textup{d}\bar{z}}{\bar{z}^2} \nonumber \\
\kappa_\beta &\equiv& \frac{\Gamma(\frac{\beta}{2})^4}{2\pi^2 \Gamma(\beta - 1)\Gamma(\beta)} \; . \label{sch}
\end{eqnarray}
To define the double-discontinuity, we must treat $z, \bar{z}$ as independent variables and rotate around the $\bar{z} = 1$ branch point. Since this can be done in two ways, we subtract the average from our four-point function to find
\begin{equation}
\mathrm{dDisc} \left [ g(z) g(\bar{z}) \right ] = 2\sin^2 \left ( \frac{\pi \Delta_\sigma}{3} \right ) (z \bar{z})^{\frac{2}{3}\Delta_\sigma} [(1 - z)(1 - \bar{z})]^{-\frac{1}{3}\Delta_\sigma} \; . \label{ddisc}
\end{equation}
Performing the factored integrals yields a spectral density given by
\begin{eqnarray}
\frac{c(\Delta, \ell) \Gamma(1 - \frac{1}{3} \Delta_\sigma)^{-2}}{2\sin^2 \left ( \frac{\pi \Delta_\sigma}{3} \right ) \kappa_{\Delta + \ell}} &=& \frac{\Gamma(\frac{2}{3} \Delta_\sigma + \frac{\ell - \Delta}{2})}{\Gamma(\frac{1}{3} \Delta_\sigma + \frac{\ell - \Delta + 2}{2})} {}_3F_2 \left ( \begin{tabular}{c} $\frac{\ell - \Delta + 2}{2}, \frac{\ell - \Delta + 2}{2}, \frac{2}{3} \Delta_\sigma + \frac{\ell - \Delta}{2}$ \\ $\ell - \Delta + 2, \frac{1}{3} \Delta_\sigma + \frac{\ell - \Delta + 2}{2}$ \end{tabular} ; 1 \right ) \nonumber \\
&& \frac{\Gamma(\frac{2}{3} \Delta_\sigma + \frac{\Delta + \ell - 2}{2})}{\Gamma(\frac{1}{3} \Delta_\sigma + \frac{\Delta + \ell}{2})} {}_3F_2 \left ( \begin{tabular}{c} $\frac{\Delta + \ell}{2}, \frac{\Delta + \ell}{2}, \frac{2}{3} \Delta_\sigma + \frac{\Delta + \ell - 2}{2}$ \\ $\Delta + \ell, \frac{1}{3} \Delta_\sigma + \frac{\Delta + \ell}{2}$ \end{tabular} ; 1 \right ) \label{sch-density}
\end{eqnarray}
with poles at $\Delta - \ell = \frac{4}{3} \Delta_\sigma + 2n$. As none of these are integers, the correct prescription for finding OPE coefficients is to simply take the residue \cite{c17}.
\begin{eqnarray}
-\mathrm{Res} \left ( c(\Delta, \ell), n \right ) &=& \frac{(-1)^n}{n!} \frac{8 \sin^2 \left ( \frac{\pi \Delta_\sigma}{3} \right ) \kappa_{\frac{4}{3}\Delta_\sigma + 2\ell + 2n} \Gamma(1 - \frac{1}{3}\Delta_\sigma)^2 \Gamma(\frac{4}{3}\Delta_\sigma + n + \ell - 1)}{\Gamma(\Delta_\sigma + \ell + n)\Gamma(1 - \frac{1}{3}\Delta_\sigma - n)} \nonumber \\
&& {}_3F_2 \left ( \begin{tabular}{c} $\frac{4}{3}\Delta_\sigma + \ell + n - 1, \frac{2}{3}\Delta_\sigma + \ell + n, \frac{2}{3}\Delta_\sigma + \ell + n$ \\ $\Delta_\sigma + \ell + n, \frac{4}{3}\Delta_\sigma + 2\ell + 2n$ \end{tabular} ; 1 \right ) \nonumber \\
&& {}_3F_2 \left ( \begin{tabular}{c} $-n, 1 - n - \frac{2}{3}\Delta_\sigma, 1 - n - \frac{2}{3}\Delta_\sigma$ \\ $1 - n -\frac{1}{3}\Delta_\sigma, 2 - 2n - \frac{4}{3}\Delta_\sigma$ \end{tabular} ; 1 \right ) \label{sch-residue}
\end{eqnarray}
Although it is not obvious, we have checked that (\ref{sch-residue}) is equal to $c_n c_{n + \ell}$ by using Watson's theorem twice.

\section{The upper line: One and three correlators}
Each state $\left | h_{r, s} \right >$ dual to a degenerate operator in a generalized minimal model has a null descendant $\left | \chi_{r, s} \right >$ at level $rs$ called a singular vector. Four-point functions may be calculated once the necessary singular vectors are known. We simply convert $\chi_{r, s}$ to a differential operator with the Ward identities
\begin{eqnarray}
L_{-n} \mapsto \mathcal{L}_{-n} &=& \sum_{i \neq 1} \frac{h_i (n - 1)}{(z_i - z_1)^n} - \frac{\partial_i}{(z_i - z_1)^{n - 1}} \nonumber \\
L_{-1} \mapsto \mathcal{L}_{-1} &=& \partial_1 \; , \label{differential}
\end{eqnarray}
and use the fact that this operator must annihilate any correlation function involving $\phi_{r, s}$ in the first position. The resulting equation, known as a BPZ differential equation \cite{bpz84}, has $rs$ linearly independent solutions representing the exchanged multiplets. One may read off their dimensions by looking at the $O(z^h)$ behaviour as $z \rightarrow 0$. While expressions for singular vectors are generally non-trivial, with some appearing only recently \cite{dlm12, bbt13, adm13}, the ones we need are relatively simple:
\begin{eqnarray}
\left | \chi_{1, s} \right > &=& \sum_{p_1 + \dots + p_k = s} \frac{(-t)^{s - k} [(s - 1)!]^2}{\prod_{i = 1}^{k - 1} (p_1 + \dots + p_i)(s - p_1 - \dots - p_i)} L_{-p1} \dots L_{-p_k} \left | h_{1, s} \right > \nonumber \\
t &\equiv& \frac{m}{m + 1} \; . \label{singular}
\end{eqnarray}
\begin{table}[h]
\centering
\begin{tabular}{l|l}
Fusion rules & Weights \\
\hline
$\phi_{1,2} \times \phi_{1,2} = \phi_{1,1} + \phi_{1,3}$ & $h_{1,1} = 0$ \\
$\phi_{1,2} \times \phi_{1,3} = \phi_{1,2} + \phi_{1,4}$ & $h_{1,2} = \frac{\Delta_\sigma}{2}$ \\
$\phi_{1,3} \times \phi_{1,3} = \phi_{1,1} + \phi_{1,3} + \phi_{1,5}$ & $h_{1,3} = \frac{4\Delta_\sigma + 1}{3} \equiv \frac{\Delta_\epsilon}{2}$ \\
& $h_{1,4} = \frac{5\Delta_\sigma + 2}{2}$ \\
& $h_{1,5} = 4\Delta_\sigma + 2$
\end{tabular}
\caption{Operators that can appear in $\left < \sigma\sigma\sigma\sigma \right >$, $\left < \sigma\sigma\epsilon\epsilon \right >$, $\left < \epsilon\epsilon\epsilon\epsilon \right >$ and their holomorphic weights. The fusion rules would shorten \textit{e.g.} in the Ising model $m = 3$ and tricritical Ising model $m = 4$, but we are interested in $\mathcal{M}(m + 1, m)$ for real $m$.}
\label{fusion-ops}
\end{table}
It is clear that we may confine ourselves to the border of the Kac table when studying correlators of $\sigma \equiv \phi_{1, 2}$ and $\epsilon \equiv \phi_{1, 3}$. The set of operators $\phi_{1, s}$ which closes under fusion is called the Verlinde subalgebra. We will continue to parametrize the generalized minimal model by $\Delta_\sigma$ --- the horizontal axis of Figure \ref{single-multi}. For convenience, Table \ref{fusion-ops} summarizes the OPEs that are important for the simplest mixed correlator system.

\subsection{All global block coefficients in the simplest correlator}
We will now derive new expressions for the squared OPE coefficients in $\left < \sigma\sigma\sigma\sigma \right >$ along (\ref{upper-bound}). Positivity, as predicted by \cite{lrv13}, will then follow from methods analogous to those in the last section. Since this correlator solves a second-order BPZ equation, we may write it as
\begin{eqnarray}
&& \left < \sigma(z_1, \bar{z}_1) \sigma(z_2, \bar{z}_2) \sigma(z_3, \bar{z}_3) \sigma(z_4, \bar{z}_4) \right > = \frac{G(z, \bar{z})}{|z_{12} z_{34}|^{2\Delta_\sigma}} \nonumber \\
&& G(z, \bar{z}) = G^{\sigma\sigma\sigma\sigma}_{(1, 1)}(z) G^{\sigma\sigma\sigma\sigma}_{(1, 1)}(\bar{z}) + C^{(1, 3)}_{(1, 2)(1, 2)} G^{\sigma\sigma\sigma\sigma}_{(1, 3)}(z) G^{\sigma\sigma\sigma\sigma}_{(1, 3)}(\bar{z}) \; , \label{upper-4sigma}
\end{eqnarray}
where the functions of $z$ are Virasoro blocks.\footnote{For the moment, we will be concerned with the global OPE coefficients contained within each one. The overall coefficients $C_{(r_1,s_1)(r_2,s_2)}^{(r_3,s_3)}$ are the generalized minimal model structure constants that were obtained with the Coulomb gas formalism in \cite{df84, df85a, df85b}.} The specific operator, read off from (\ref{singular}), is
\begin{equation}
\frac{3}{2(\Delta_\sigma + 1)} \mathcal{L}_{-1}^2 - \mathcal{L}_{-2} \; . \label{level2}
\end{equation}
Acting on (\ref{upper-4sigma}) with (\ref{level2}), we arrive at a PDE in terms of $(z_1, z_2, z_3, z_4)$. To reduce it to an ODE, we map these points to $(0, z, 1, \infty)$ by a global conformal transformation:
\begin{equation}
\frac{3}{2} z(z - 1)^2 \frac{\partial^2 G}{\partial z^2} + (z - 1) [(2 - \Delta_\sigma)z + 2\Delta_\sigma - 1] \frac{\partial G}{\partial z} - \frac{1}{2} \Delta_\sigma (\Delta_\sigma + 1) zG = 0 \; . \label{4sigma-ode}
\end{equation}
Well known solutions, which have the expected asymptotic behaviour, are
\begin{eqnarray}
G^{\sigma\sigma\sigma\sigma}_{(1, 1)}(z) &=& (1 - z)^{-\Delta_\sigma} {}_2F_1 \left ( -2\Delta_\sigma, \frac{1 - 2\Delta_\sigma}{3} ; \frac{2 - 4\Delta_\sigma}{3} ; z \right ) \nonumber \\
G^{\sigma\sigma\sigma\sigma}_{(1, 3)}(z) &=& z^{\frac{1 + 4\Delta_\sigma}{3}} (1 - z)^{-\Delta_\sigma} {}_2F_1 \left ( \frac{1 - 2\Delta_\sigma}{3}, \frac{2 + 2\Delta_\sigma}{3} ; \frac{4 + 4\Delta_\sigma}{3} ; z \right ) \; . \label{4sigma-solutions}
\end{eqnarray}

First, let us look at the identity block. After a quadratic transformation, the hypergeometric function becomes a series in $\frac{z^2}{4z - 4}$. This makes the formula (\ref{power-law}) applicable if we set $p = 2n$ and $q = n$.
\begin{eqnarray}
&& G^{\sigma\sigma\sigma\sigma}_{(1, 1)}(z) = {}_2F_1 \left ( -\Delta_\sigma, \frac{1 + \Delta_\sigma}{3} ; \frac{5 - 4\Delta_\sigma}{6} ; \frac{1}{4} \frac{z^2}{z - 1} \right ) \label{quadratic-id} \\
&& = \sum_{n = 0}^\infty \left ( -\frac{1}{4} \right )^n \frac{(-\Delta_\sigma)_n \left ( \frac{1 + \Delta_\sigma}{3} \right )_n}{\left ( \frac{5 - 4\Delta_\sigma}{6} \right )_n n!} \sum_{m = 0}^\infty \frac{(2n)_m^2}{(4n + m - 1)_m m!} {}_3F_2 \left ( \begin{tabular}{c} $-m, 4n + m - 1, n$ \\ $2n, 2n$ \end{tabular} ; 1 \right ) K_{2n + m}(z) \nonumber
\end{eqnarray}
Exchanging the two sums, we find
\begin{equation}
c_{2k} = \sum_{n = 0}^k \left ( -\frac{1}{4} \right )^n \frac{(-\Delta_\sigma)_n \left ( \frac{1 + \Delta_\sigma}{3} \right )_n}{\left ( \frac{5 - 4\Delta_\sigma}{6} \right )_n n!} \frac{(2n)^2_{2k - 2n}}{(2n + 2k - 1)_{2k - 2n} (2k - 2n)!} {}_3F_2 \left ( \begin{tabular}{c} $2n - 2k, 2n + 2k - 1, n$ \\ $2n, 2n$ \end{tabular} ; 1 \right ) \label{4sigma-coeff-id}
\end{equation}
for the non-vanishing global block coefficients.\footnote{We derived this by applying a quadratic transformation to (\ref{4sigma-solutions}) which makes the Bose symmetry manifest. By leaving the function in its original form, or by applying Euler / Pfaff transformations, we can derive other sums that are non-trivially equivalent to (\ref{4sigma-coeff-id}).} Since $p = 2q$, this hypergeometric is in a form that can be treated with Watson's theorem. When we apply this to the $c_{2k}$, one pole in the numerator cancels another in the denominator.\footnote{Because we have not allowed for other poles, our expression for $c_{2k}$ will not be correct for $k = 0$. There is no need to treat this case separately as it is already clear that $c_0 = 1$.}
\begin{eqnarray}
{}_3F_2 \left ( \begin{tabular}{c} $2n - 2k, 2n + 2k - 1, n$ \\ $2n, 2n$ \end{tabular} ; 1 \right ) &=& \frac{(-1)^{n + k} \Gamma(\frac{1}{2})\Gamma(n+\frac{1}{2})\Gamma(2n)}{\Gamma(n)\Gamma(n-k+\frac{1}{2})\Gamma(2n + 2k - 1)} \frac{(k - 1)!}{(n - 1)!} \label{watson-result} \\
&=& \frac{(-1)^{n + k} \Gamma(2n)^2}{\Gamma(2n + 2k - 1)} \frac{\Gamma(n + k - \frac{1}{2})\Gamma(\frac{1}{2})}{\Gamma(n - k + \frac{1}{2})\Gamma(k + \frac{1}{2})} \lim_{\delta \rightarrow 0} \frac{2^{2k - 1} (k - 1)!}{[\Gamma(\delta)(\delta)_n]^2} \nonumber
\end{eqnarray}
Above, we can easily see three gamma functions that will cancel when we multiply by $\frac{(2n)^2_{2k - 2n}}{(2n + 2k - 1)_{2k - 2n} (2k - 2n)!}$. We have also written $\frac{1}{\Gamma(n)^2}$ in a limiting form for later convenience. After substituting (\ref{watson-result}), one must use the identities
\begin{equation}
\Gamma(x + n) = \Gamma(x)(x)_n \;\;\; , \;\;\; \Gamma(x - n) = (-1)^n \frac{\Gamma(x)}{(1 - x)_n} \label{gamma-ids}
\end{equation}
until each term of (\ref{4sigma-coeff-id}) only depends on $n$ through the Pochhammer symbol. This leads to
\begin{eqnarray}
c_{2k} &=& \binom{4k - 2}{2k - 1}^{-1} \lim_{\delta \rightarrow 0} \frac{1}{k(2k - 1)\Gamma(\delta)^2} {}_4F_3 \left ( \begin{tabular}{c} $-k, k - \frac{1}{2}, -\Delta_\sigma, \frac{1 + \Delta_\sigma}{3}$ \\ $\delta, \delta, \frac{5 - 4\Delta_\sigma}{6}$ \end{tabular} ; 1 \right ) \nonumber \\
&=& \binom{4k - 2}{2k - 1}^{-1} \frac{\Delta_\sigma (1 + \Delta_\sigma)}{5 - 4\Delta_\sigma} {}_4F_3 \left ( \begin{tabular}{c} $1 - k, k + \frac{1}{2}, 1 - \Delta_\sigma, \frac{4 + \Delta_\sigma}{3}$ \\ $1, 2, \frac{11 - 4\Delta_\sigma}{6}$ \end{tabular} ; 1 \right ) \; . \label{4sigma-result-id}
\end{eqnarray}

The $\epsilon$ block can be analyzed in the same way. Doing so will in fact be easier since we will not have to pass to the $\delta \rightarrow 0$ limit. Starting from
\begin{equation}
G^{\sigma\sigma\sigma\sigma}_{(1, 3)}(z) = z^{\frac{1 + 4\Delta_\sigma}{3}} (1 - z)^{-\frac{1 + 4\Delta_\sigma}{6}} {}_2F_1 \left ( \frac{1 + 2\Delta_\sigma}{2}, \frac{1 - 2\Delta_\sigma}{6} ; \frac{7 + 4\Delta_\sigma}{6} ; \frac{1}{4} \frac{z^2}{z - 1} \right ) \; , \label{quadratic-eps}
\end{equation}
we may use (\ref{power-law}) with $p = \frac{1 + 4\Delta_\sigma}{3} + 2n$ and $q = \frac{1 + 4\Delta_\sigma}{6} + n$. This leads to
\begin{eqnarray}
c_{2k} &=& \sum_{n = 0}^k \left ( -\frac{1}{4} \right )^n \frac{\left ( \frac{1 + 2\Delta_\sigma}{6} \right )_n \left ( \frac{1 + 2\Delta_\sigma}{2} \right )_n}{\left ( \frac{7 + 4\Delta_\sigma}{6} \right )_n n!} \frac{\left ( \frac{1 + 4\Delta_\sigma}{3} + 2n \right )^2_{2k - 2n}}{\left ( \frac{2 + 8\Delta_\sigma}{3} + 2n + 2k - 1 \right)_{2k - 2n} (2k - 2n)!} \nonumber \\
&& {}_3F_2 \left ( \begin{tabular}{c}  $2n - 2k, \frac{2 + 8\Delta_\sigma}{3} + 2n + 2k - 1, \frac{1 + 4\Delta_\sigma}{6} + n$ \\ $\frac{1 + 4\Delta_\sigma}{3} + 2n, \frac{1 + 4\Delta_\sigma}{3} + 2n$ \end{tabular} ; 1 \right ) \label{4sigma-coeff-eps}
\end{eqnarray}
which again allows us to use Watson's theorem. Employing (\ref{gamma-ids}) to perform the sum, we arrive at
\begin{equation}
c_{2k} = \frac{1}{4^k k!} \frac{\left ( \frac{1 + 4\Delta_\sigma}{6} \right )^2_k}{\left ( \frac{1 + 4\Delta_\sigma}{3} + k - \frac{1}{2} \right )_k} {}_4F_3 \left ( \begin{tabular}{c} $-k, \frac{1 + 4\Delta_\sigma}{3} + k - \frac{1}{2}, \Delta_\sigma + \frac{1}{2}, \frac{1 - 2\Delta_\sigma}{6}$ \\ $\frac{1 + 4\Delta_\sigma}{6}, \frac{1 + 4\Delta_\sigma}{6}, \frac{7 + 4\Delta_\sigma}{6}$ \end{tabular} ; 1 \right ) \; . \label{4sigma-result-eps}
\end{equation}

A useful observation about the hypergeometric functions in (\ref{4sigma-result-id}) and (\ref{4sigma-result-eps}) is that they both have a parameter excess of 1. This means that they are Wilson polynomials \cite{ks96}.\footnote{Note that we are using the normalization in \cite{w77}. There is another common normalization that makes the Wilson polynomial symmetric in $(a, b, c, d)$.}
\begin{equation}
P_n(a, b, c, d; x) = {}_4F_3 \left ( \begin{tabular}{c} $-n, n + a + b + c + d - 1, a + x, a - x$ \\ $a + b, a + c, a + d$ \end{tabular} ; 1 \right ) \label{wilson}
\end{equation}
Invoking this notation, our results are
\begin{eqnarray}
c^{\sigma\sigma(1, 1)\sigma\sigma}_{2n} &= \binom{4n - 2}{2n - 1}^{-1} \frac{\Delta_\sigma (1 + \Delta_\sigma)}{5 - 4\Delta_\sigma} P_{n - 1} \left ( \frac{7 - 2\Delta_\sigma}{6}, \frac{4 - 2\Delta_\sigma}{6}, -\frac{1 - 2\Delta_\sigma}{6}, \frac{5 + 2\Delta_\sigma}{6} ; \frac{1 + 4\Delta_\sigma}{6} \right ) \nonumber \\
c^{\sigma\sigma(1, 3)\sigma\sigma}_{2n} &= \frac{1}{4^n n!} \frac{\left ( \frac{1 + 4\Delta_\sigma}{6} \right )^2_n}{\left ( \frac{1 + 4\Delta_\sigma}{3} + n - \frac{1}{2} \right )_n} P_n \left ( \frac{1 + \Delta_\sigma}{3}, \frac{5 + 2\Delta_\sigma}{6}, -\frac{1 - 2\Delta_\sigma}{6}, -\frac{1 - 2\Delta_\sigma}{6} ; \frac{1 + 4\Delta_\sigma}{6} \right ) \; . \label{wilson-poly}
\end{eqnarray}
As with continuous Hahn polynomials, there is a recurrence relation that the Wilson polynomials satisfy.
\begin{eqnarray}
(x^2 - a^2) P_n(x) &=& A_n P_{n + 1}(x) - (A_n + B_n) P_n(x) + B_n P_{n - 1}(x) \nonumber \\
A_n &\equiv& \frac{(n + a + b + c + d - 1)(n + a + b)(n + a + c)(n + a + d)}{(2n + a + b + c + d - 1)(2n + a + b + c + d)} \nonumber \\
B_n &\equiv& \frac{n(n + b + c - 1)(n + b + d - 1)(n + c + d - 1)}{(2n + a + b + c + d - 2)(2n + a + b + c + d - 1)} \label{wilson-recursion}
\end{eqnarray}
Appendix A uses this recursion to solve for the asymptotic behaviour of Wilson polynomials and prove that the ones in (\ref{wilson-poly}) are positive.

\subsection{Some global block coefficients in the other correlators}
Given our success at explaining the one-correlator results, the next logical step is to find the global block decompositions applicable to three correlators. We will start with $\left < \epsilon\epsilon\epsilon\epsilon \right >$. Although this is another four-point function of identical scalars, the main difference compared to $\left < \sigma\sigma\sigma\sigma \right >$ is the lack of a closed-form Virasoro block. Because a generic $\epsilon$ only has a null descendant at level 3, we will have to work with a third-order BPZ equation which does not have simple solutions analogous to (\ref{4sigma-solutions}). This makes the formula (\ref{sieve2}) important for finding low-lying OPE coefficients. The singular vector that must annihilate
\begin{eqnarray}
&& \left < \epsilon(z_1, \bar{z}_1) \epsilon(z_2, \bar{z}_2) \epsilon(z_3, \bar{z}_3) \epsilon(z_4, \bar{z}_4) \right > = \frac{G(z, \bar{z})}{|z_{12} z_{34}|^{2\Delta_\epsilon}} \label{upper-4epsilon} \\
&& G(z, \bar{z}) = G^{\epsilon\epsilon\epsilon\epsilon}_{(1, 1)}(z) G^{\epsilon\epsilon\epsilon\epsilon}_{(1, 1)}(\bar{z}) + C^{(1, 3)}_{(1, 3)(1, 3)} G^{\epsilon\epsilon\epsilon\epsilon}_{(1, 3)}(z) G^{\epsilon\epsilon\epsilon\epsilon}_{(1, 3)}(\bar{z}) + C^{(1, 5)}_{(1, 3)(1, 3)} G^{\epsilon\epsilon\epsilon\epsilon}_{(1, 5)}(z) G^{\epsilon\epsilon\epsilon\epsilon}_{(1, 5)}(\bar{z}) \nonumber
\end{eqnarray}
has four terms that can be read off from (\ref{singular}). The Virasoro commutation relations reduce it to the three term expression
\begin{equation}
\frac{4}{\Delta_\epsilon(\Delta_\epsilon + 2)} \mathcal{L}_{-1}^3 - \frac{4}{\Delta_\epsilon} \mathcal{L}_{-2} \mathcal{L}_{-1} + \mathcal{L}_{-3} \; . \label{level3}
\end{equation}

\begin{figure}[t!]
\centering
\subfloat[][$c^{\epsilon\epsilon(1, 1)\epsilon\epsilon}_{2n}$]{\includegraphics[scale=0.45]{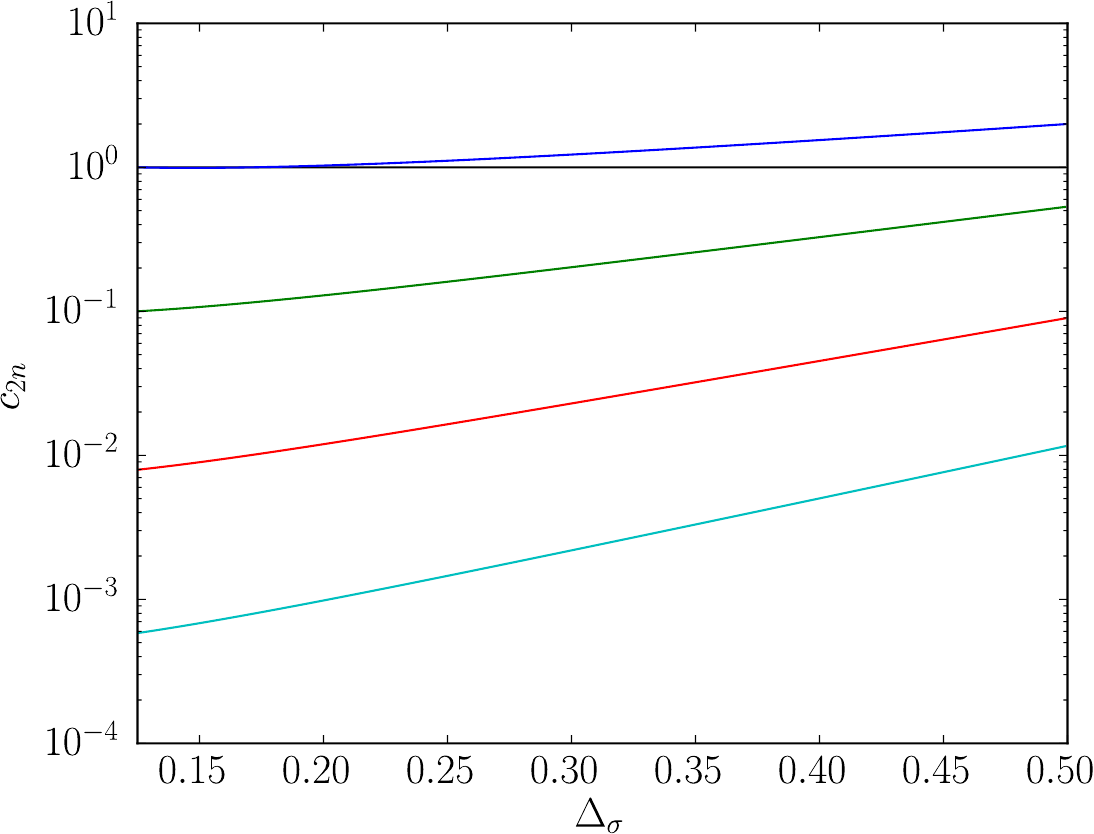}}
\subfloat[][$c^{\epsilon\epsilon(1, 3)\epsilon\epsilon}_{2n}$]{\includegraphics[scale=0.45]{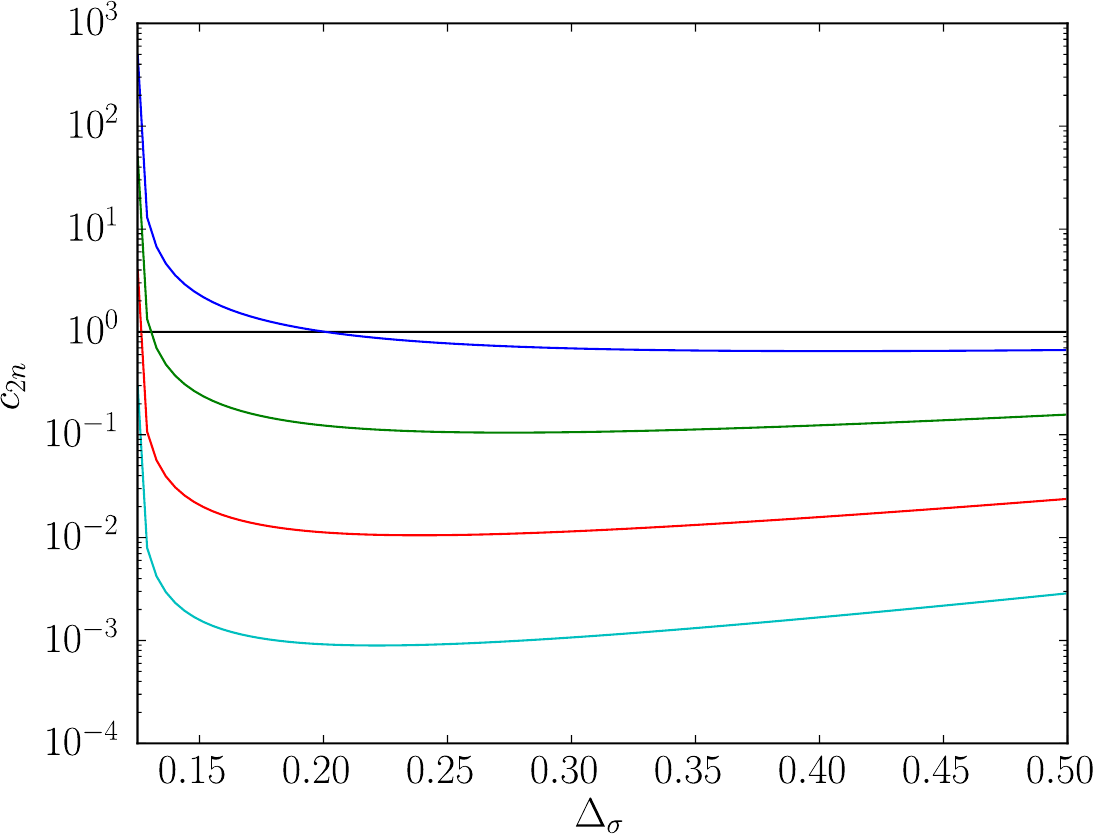}}\\
\subfloat[][$c^{\epsilon\epsilon(1, 5)\epsilon\epsilon}_{2n}$]{\includegraphics[scale=0.45]{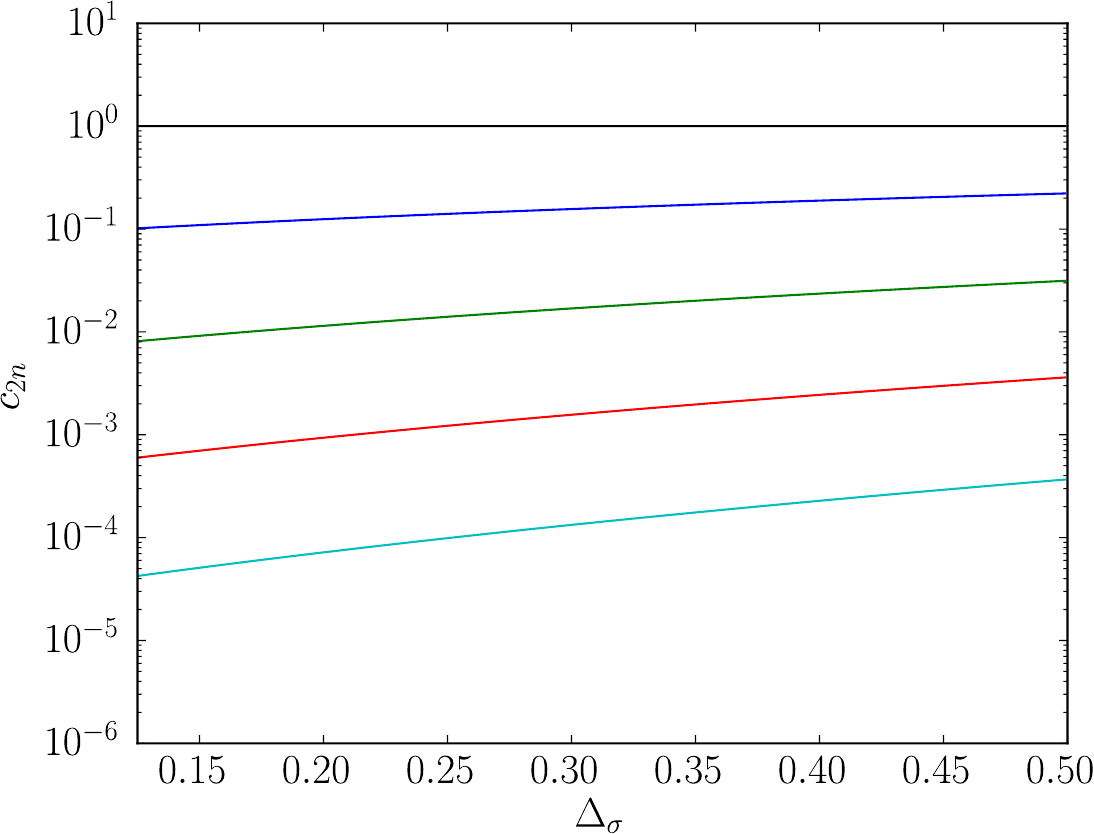}}
\;\;\; \subfloat[][Legend]{\includegraphics[scale=0.95]{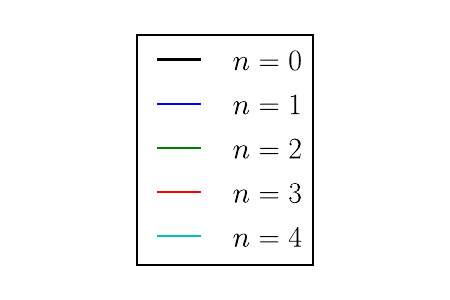}\vspace{0.5cm}} \;\;\;
\caption{Log-scale plots of $c^{\epsilon\epsilon(1, s)\epsilon\epsilon}_{2n}$ showing that the first five are all positive on the interval $\frac{1}{8} \leq \Delta_\sigma \leq \frac{1}{2}$.}
\label{4epsilon-plot}
\end{figure}

The null state condition that follows from (\ref{upper-4epsilon}) and (\ref{level3}) is
\begin{eqnarray}
&& 4z^2(z - 1)^3 \frac{\partial^3 G}{\partial z^3} + 4z(z - 1)^2[(4 - \Delta_\epsilon)z + 2\Delta_\epsilon - 2] \frac{\partial^2 G}{\partial z^2} \nonumber \\
&& - (z - 1)[(\Delta_\epsilon^2 + 10\Delta_\epsilon - 8)z^2 + (3\Delta_\epsilon^2 - 14\Delta_\epsilon + 8)z - 3\Delta_\epsilon(\Delta_\epsilon - 2)] \frac{\partial G}{\partial z} \nonumber \\
&& + \Delta_\epsilon^2(\Delta_\epsilon + 2) z(z - 2) G = 0 \; . \label{4epsilon-ode}
\end{eqnarray}
To approximate the Virasoro blocks that solve this, it will be helpful to use the Frobenius method.\footnote{Even though they describe exchanged weights of $h_{1, 1}$ and $h_{1, 3}$, we cannot reuse either of the expressions in (\ref{4sigma-solutions}). Unlike global blocks which only see dimension differences, Virasoro blocks depend on the external weights individually \cite{s09}.} Inserting the ansatz $G(z) = \sum_{k = -\infty}^\infty b_k z^{r + k}$, we may reindex the sum so that all terms carry the same power of $z$. This gives a recurrence relation for the coefficients.
\begin{eqnarray}
&& [-4(k + r)(k + r - 1) + 8(\Delta_\epsilon - 1)(k + r) + 3\Delta_\epsilon(2 - \Delta_\epsilon)](k + r + 1)b_{k + 1} \nonumber \\
&& + 2[6(k + r - 1)(k + r - 2) + 2(8 - 5\Delta_\epsilon)(k + r - 1) + 3\Delta_\epsilon^2 - 10\Delta_\epsilon + 4](k + r)b_k \nonumber \\
&& - 2 \left [ 6(k + r - 1)(k + r - 2)(k + r - 3) - 4(2\Delta_\epsilon - 5)(k + r - 1)(k + r - 2) \right. \nonumber \\
&& \left. + (\Delta_\epsilon^2 - 12\Delta_\epsilon + 8)(k + r - 1) + \Delta_\epsilon^2(\Delta_\epsilon + 2) \right ] b_{k - 1} \nonumber \\
&& + \left [ 4(k + r - 2)(k + r - 3)(k + r - 4) + 4(4 - \Delta_\epsilon)(k + r - 2)(k + r - 3) \right. \nonumber \\
&& \left. - (\Delta_\epsilon^2 + 10\Delta_\epsilon - 8)(k + r - 2) + \Delta_\epsilon^2(\Delta_\epsilon + 2) \right ] b_{k - 2} = 0 \label{4epsilon-recursion}
\end{eqnarray}
We will set $b_0 = 1$ and $b_k = 0$ for all $k < 0$. The values of $r$ that make this consistent (called roots of the indicial equation) are $h_{1, 1}$, $h_{1, 3}$ and $h_{1, 5}$ as expected. We find them by demanding that $b_0$ drop out of (\ref{4epsilon-recursion}) when $k = -1$. For each value of $r$, it is straightforward to iterate (\ref{4epsilon-recursion}) and then feed the results into (\ref{sieve2}). Some of the OPE coefficients that follow from this are written in Table \ref{4epsilon-table}. Due to the appearance of the upper bound (\ref{upper-bound}) in the three-correlator bootstrap, we expect the coefficients to be positive when $1 \leq \Delta_\epsilon \leq 2$, at least up to some high order. Figure \ref{4epsilon-plot} shows that this is indeed the case.
\begin{table}[h]
\centering
\begin{tabular}{|l|l|}
\hline
$n$ & $c_{2n}^{\epsilon\epsilon(1, 1)\epsilon\epsilon}$ \\
\hline
$0$ & $1$ \\
$1$ & $-\frac{\Delta_\epsilon^2 (\Delta_\epsilon + 2)}{(\Delta_\epsilon - 4)(3\Delta_\epsilon - 2)}$ \\
$2$ & $\frac{\Delta_\epsilon^2 (\Delta_\epsilon + 2)^2 [5\Delta_\epsilon + 2]}{30(\Delta_\epsilon - 8)(\Delta_\epsilon - 4)(3\Delta_\epsilon - 2)}$ \\
\hline
$n$ & $c_{2n}^{\epsilon\epsilon(3, 1)\epsilon\epsilon}$ \\
\hline
$0$ & $1$ \\
$1$ & $\frac{\Delta_\epsilon^2 (\Delta_\epsilon + 2) [5\Delta_\epsilon + 2]}{16(\Delta_\epsilon - 1)(\Delta_\epsilon + 1)(\Delta_\epsilon + 4)}$ \\
$2$ & $\frac{\Delta_\epsilon^2 (\Delta_\epsilon + 2) [25\Delta_\epsilon^5 + 167\Delta_\epsilon^4 - 66\Delta_\epsilon^3 - 1904\Delta_\epsilon^2 - 2752\Delta_\epsilon - 384]}{512(\Delta_\epsilon - 3)(\Delta_\epsilon - 1)(\Delta_\epsilon + 3)(\Delta_\epsilon + 4)(\Delta_\epsilon + 5)(\Delta_\epsilon + 8)}$ \\
\hline
$n$ & $c_{2n}^{\epsilon\epsilon(5, 1)\epsilon\epsilon}$ \\
\hline
$0$ & $1$ \\
$1$ & $\frac{\Delta_\epsilon (\Delta_\epsilon + 2) [7\Delta_\epsilon + 6]}{48(\Delta_\epsilon + 1)(\Delta_\epsilon + 3)}$ \\
$2$ & $\frac{\Delta_\epsilon (\Delta_\epsilon + 2) [441\Delta_\epsilon^5 + 5121\Delta_\epsilon^4 + 20732\Delta_\epsilon^3 + 37796\Delta_\epsilon^2 + 31056\Delta_\epsilon + 8640]}{1536(\Delta_\epsilon + 3)(\Delta_\epsilon + 5)(3\Delta_\epsilon + 5)(3\Delta_\epsilon + 7)(3\Delta_\epsilon + 10)}$ \\
\hline
\end{tabular}
\caption{The first three global block coefficients in the $(1, 1)$, $(1, 3)$ and $(1, 5)$ contributions found by taking the $\epsilon \times \epsilon$ OPE twice.}
\label{4epsilon-table}
\end{table}

We should also be able to find positive squared OPE coefficients in the mixed correlator. The four-point function
\begin{eqnarray}
&& \left < \sigma(z_1, \bar{z}_1) \epsilon(z_2, \bar{z}_2) \sigma(z_3, \bar{z}_3) \epsilon(z_4, \bar{z}_4) \right > = \left ( \frac{|z_{24}|}{|z_{13}|} \right )^{\Delta_{\sigma\epsilon}} \frac{G(z, \bar{z})}{|z_{12}|^{\Delta_\sigma + \Delta_\epsilon} |z_{34}|^{\Delta_\sigma + \Delta_\epsilon}} \nonumber \\
&& G(z, \bar{z}) = C^{(1, 2)}_{(1, 2)(1, 3)} G^{\sigma\epsilon\sigma\epsilon}_{(1, 2)}(z) G^{\sigma\epsilon\sigma\epsilon}_{(1, 2)}(\bar{z}) + C^{(1, 4)}_{(1, 2)(1, 3)} G^{\sigma\epsilon\sigma\epsilon}_{(1, 4)}(z) G^{\sigma\epsilon\sigma\epsilon}_{(1, 4)}(\bar{z}) \label{upper-2epsilon2sigma}
\end{eqnarray}
satisfies second-order and third-order BPZ equations.

\begin{figure}[b!]
\centering
\subfloat[][$c^{\sigma\epsilon(1, 2)\sigma\epsilon}_{n}$]{\includegraphics[scale=0.45]{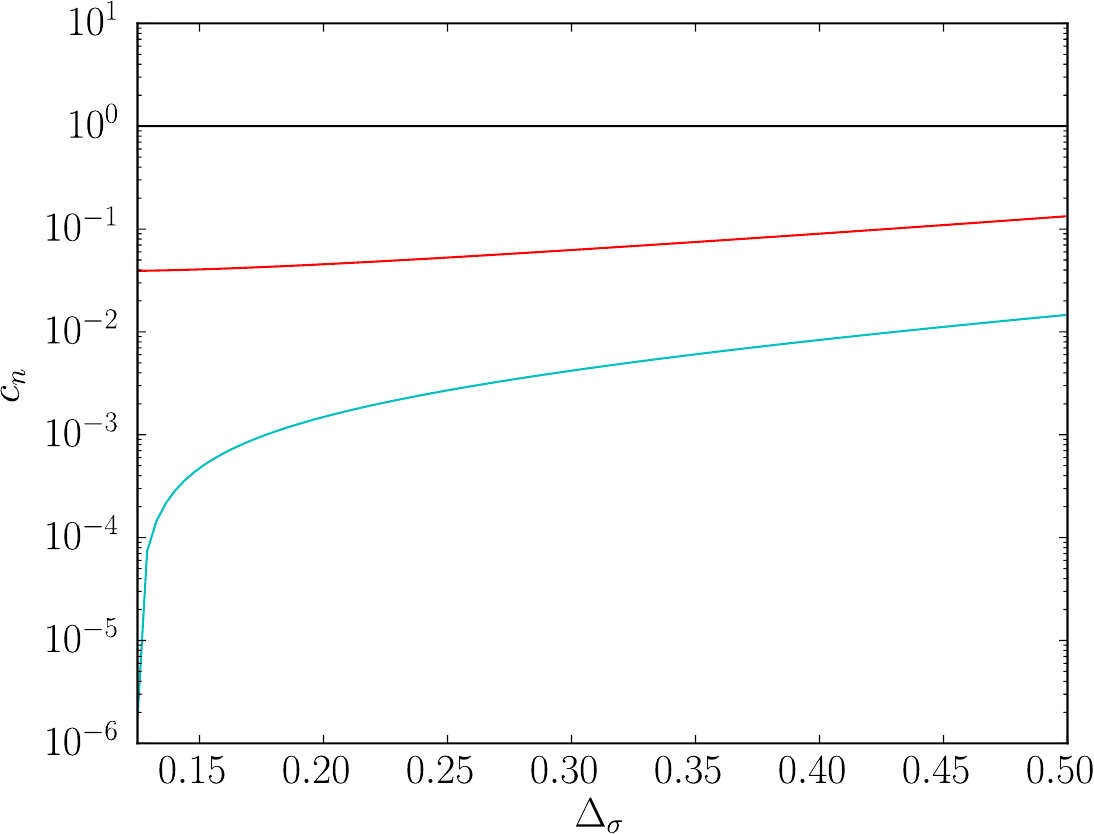}}
\subfloat[][$c^{\sigma\epsilon(1, 4)\sigma\epsilon}_{n}$]{\includegraphics[scale=0.45]{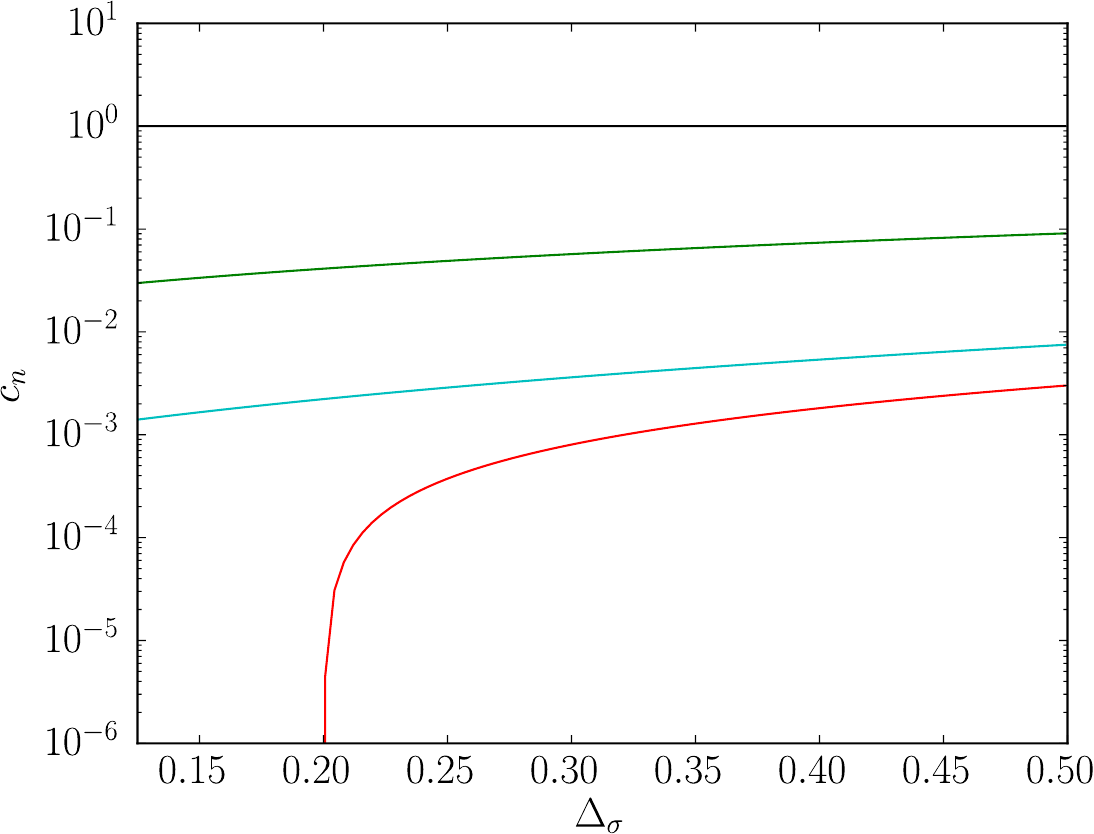}}
\caption{Log-scale plots of the $c^{\sigma\epsilon(1, s)\sigma\epsilon}_{n}$ that are non-zero. The legend is the same as that of Figure \ref{4epsilon-plot}.}
\label{2epsilon2sigma-plot}
\end{figure}

For simplicity, we will consider the second-order equation
\begin{eqnarray}
&& \frac{3}{2} z^2(z - 1)^2 \frac{\partial^2 G}{\partial z^2} + \frac{1}{2} z(z - 1) [(2 - 7\Delta_\sigma)z + 9\Delta_\sigma] \frac{\partial G}{\partial z} \nonumber \\
&& + \frac{1}{24} [3\Delta_\sigma(11\Delta_\sigma + 2)z^2 - 2(5\Delta_\sigma + 2)(11\Delta_\sigma + 2)z + 9\Delta_\sigma(5\Delta_\sigma + 2)]G = 0 \label{2epsilon2sigma-ode}
\end{eqnarray}
which has the recurrence relation
\begin{eqnarray}
&& 9[4(k + r + 1)(k + r) - 12\Delta_\sigma(k + r + 1) + \Delta_\sigma(5\Delta_\sigma + 2)] b_{k + 1} \label{2epsilon2sigma-recursion} \\
&& - 2[36(k + r)(k + r - 1) - 12(8\Delta_\sigma - 1)(k + r) + (5\Delta_\sigma + 2)(11\Delta_\sigma + 2)] b_k \nonumber \\
&& + 3[12(k + r - 1)(k + r - 2) + 4(2 - 7\Delta_\sigma)(k + r - 1) + \Delta_\sigma(11\Delta_\sigma + 2)] b_{k - 1} = 0 \; . \nonumber
\end{eqnarray}
It is easily seen that $r \in \{ h_{1, 2}, h_{1, 4} \}$ is the solution of the indicial equation for (\ref{2epsilon2sigma-recursion}). Because the product $\sigma \times \epsilon$ no longer has Bose symmetry, the procedure by which we extract the conformal block expansion this time is somewhat different. We must include dimension differences in the hypergeometric function and sum over all integers whether even or odd.
\begin{eqnarray}
G(z) &=& \sum_{n = 0}^\infty (-1)^n c_n z^{r + n} {}_2F_1 \left ( r + n - \frac{1}{2} \Delta_{\sigma\epsilon}, r + n + \frac{1}{2} \Delta_{\sigma\epsilon} ; 2(r + n); z \right ) \nonumber \\
&=& \sum_{n = 0}^{\infty} \sum_{m = 0}^{\infty} (-1)^n c_n \frac{\left ( r + n - \frac{1}{2} \Delta_{\sigma\epsilon} \right )_m \left ( r + n + \frac{1}{2} \Delta_{\sigma\epsilon} \right )_m}{(2r + 2n)_m} \frac{z^{r + n + m}}{m!} \nonumber \\
&=& \sum_{k = 0}^\infty \sum_{n = 0}^k (-1)^n c_n \frac{\left ( r + n - \frac{1}{2} \Delta_{\sigma\epsilon} \right )_{k - n} \left ( r + n + \frac{1}{2} \Delta_{\sigma\epsilon} \right )_{k - n}}{(2r + 2n)_{k - n}} \frac{z^{r + k}}{(k - n)!} \label{sieve3}
\end{eqnarray}
The lower triangular system from this leads to the recursion
\begin{equation}
(-1)^k c_k = b_k - \sum_{n = 0}^{k - 1} (-1)^n c_n \frac{\left ( r + n - \frac{1}{2} \Delta_{\sigma\epsilon} \right )_{k - n} \left ( r + n + \frac{1}{2} \Delta_{\sigma\epsilon} \right )_{k - n}}{(2r + 2n)_{k - n} (k - n)!} \; . \label{sieve3}
\end{equation}
\begin{table}[t]
\centering
\begin{tabular}{|l|l|}
\hline
$n$ & $c_n^{\sigma\epsilon(1, 2)\sigma\epsilon}$ \\
\hline
$0$ & $1$ \\
$1$ & $0$ \\
$2$ & $0$ \\
$3$ & $-\frac{(\Delta_\sigma + 1)(4\Delta_\sigma + 1)^2}{729\Delta_\sigma(\Delta_\sigma - 1)(\Delta_\sigma + 2)} (5\Delta_\sigma + 2)$ \\
$4$ & $-\frac{4(\Delta_\sigma + 1)^2(4\Delta_\sigma + 1)^2}{729\Delta_\sigma(\Delta_\sigma + 3)(\Delta_\sigma + 6)(2\Delta_\sigma - 3)} (8\Delta_\sigma - 1)$ \\
\hline
$n$ & $c_n^{\sigma\epsilon(1, 4)\sigma\epsilon}$ \\
\hline
$0$ & $1$ \\
$1$ & $0$ \\
$2$ & $\frac{\Delta_\sigma + 1}{9(2\Delta_\sigma + 3)(5\Delta_\sigma + 3)} (2\Delta_\sigma + 1)(10\Delta_\sigma + 1)$ \\
$3$ & $\frac{4(\Delta_\sigma + 1)(5\Delta_\sigma + 2)}{729(\Delta_\sigma + 2)(5\Delta_\sigma + 4)(5\Delta_\sigma + 6)} (5\Delta_\sigma - 1)(7\Delta_\sigma + 4)$ \\
$4$ & $\frac{(\Delta_\sigma + 1)^2(5\Delta_\sigma + 2)}{81(5\Delta_\sigma + 7)(5\Delta_\sigma + 8)} (10\Delta_\sigma + 1)$ \\
\hline
\end{tabular}
\caption{The first five global block coefficients in the $(1, 2)$ and $(1, 4)$ contributions found by taking the $\sigma \times \epsilon$ OPE twice.}
\label{2epsilon2sigma-table}
\end{table}
Some low-lying global block coefficients found with (\ref{sieve3}) are listed in Table \ref{2epsilon2sigma-table}. While all of them are non-negative above the tricritical Ising value $\Delta_\sigma = \frac{1}{5}$, there is actually one that takes on negative values for $\frac{1}{8} < \Delta_\sigma < \frac{1}{5}$ as shown in Figure \ref{2epsilon2sigma-plot}. We may explain this by noticing that $\phi_{1, 4}$ is also $\phi_{3, 1}$ in the minimal model $\mathcal{M}(5, 4)$. This field has exactly one quasiprimary descendant at level 3. The presence of a null state is therefore enough to conclude that $c^{\sigma\epsilon(1, 4)\sigma\epsilon}_3 = 0$.

\begin{table}[h]
\centering
\begin{tabular}{|l|l|}
\hline
$n$ & $c_{2n}^{\sigma\sigma(1, 1)\epsilon\epsilon}$ \\
\hline
$0$ & $1$ \\
$1$ & $\frac{(\Delta_\sigma + 1) (4\Delta_\sigma + 1)}{3(5 - 4\Delta_\sigma)}$ \\
$2$ & $\frac{2(\Delta_\sigma + 1)^2 (4\Delta_\sigma + 1) [5\Delta_\sigma + 2]}{45(5 - 4\Delta_\sigma)(11 - 4\Delta_\sigma)}$ \\
\hline
$n$ & $c_{2n}^{\sigma\sigma(1, 3)\epsilon\epsilon}$ \\
\hline
$0$ & $1$ \\
$1$ & $\frac{2(\Delta_\sigma + 1) (4\Delta_\sigma + 1) [5\Delta_\sigma + 2]}{3(7 + 4\Delta_\sigma)(5 + 8\Delta_\sigma)}$ \\
$2$ & $\frac{(\Delta_\sigma + 1)^2 (4\Delta_\sigma + 1) [400\Delta_\sigma^3 + 1548\Delta_\sigma^2 + 1644\Delta_\sigma + 361]}{18(7 + 4\Delta_\sigma)(13 + 4\Delta_\sigma)(11 + 8\Delta_\sigma)(17 + 8\Delta_\sigma)}$ \\
\hline
\end{tabular}
\caption{The first three global block coefficients in the $(1, 1)$ and $(1, 3)$ contributions found by taking the $\sigma \times \sigma$ and $\epsilon \times \epsilon$ OPEs.}
\label{2sigma2epsilon-table}
\end{table}
Alternatively, we can take the OPE in the other channel by permuting operator positions in (\ref{upper-2epsilon2sigma}). This yields the global block coefficients in Table \ref{2sigma2epsilon-table} which are related to the $\lambda_{\sigma\sigma\mathcal{O}} \lambda_{\epsilon\epsilon\mathcal{O}}$ CFT data. There is no reason for these numbers to be positive, but we expect
\begin{equation}
c^{\sigma\sigma(1, s)\sigma\sigma}_{2n} c^{\epsilon\epsilon(1, s)\epsilon\epsilon}_{2n} \geq \left ( c^{\sigma\sigma(1, s)\epsilon\epsilon}_{2n} \right )^2 \label{agm}
\end{equation}
to be obeyed.\footnote{For another quick check of our results, $\Delta_\epsilon^2 c^{\sigma\sigma(1, 1)\sigma\sigma}_0 c^{\sigma\sigma(1, 1)\sigma\sigma}_2 = \Delta_\sigma^2 c^{\epsilon\epsilon(1, 1)\epsilon\epsilon}_0 c^{\epsilon\epsilon(1, 1)\epsilon\epsilon}_2$ holds as it must by the Ward identity.} When $\sigma \times \sigma$ and $\epsilon \times \epsilon$ share a set of operators $\mathcal{S}$ with the same quantum numbers, the left-hand side is a product of sums. The right-hand side is a sum of products and therefore smaller by the arithmetic-geometric mean inequality. A departure from (\ref{agm}) would be a violation of unitarity since the matrix
\begin{equation}
\sum_{\mathcal{O} \in \mathcal{S}} \left [ \lambda_{\sigma\sigma\mathcal{O}} \; \lambda_{\epsilon\epsilon\mathcal{O}} \right ] \left [ \begin{tabular}{c} $\lambda_{\sigma\sigma\mathcal{O}}$ \\ $\lambda_{\epsilon\epsilon\mathcal{O}}$ \end{tabular} \right ] = \left [ \begin{tabular}{cc} $c^{\sigma\sigma(1, s)\sigma\sigma}_{2n}$ & $c^{\sigma\sigma(1, s)\epsilon\epsilon}_{2n}$ \\ $c^{\sigma\sigma(1, s)\epsilon\epsilon}_{2n}$ & $c^{\epsilon\epsilon(1, s)\epsilon\epsilon}_{2n}$ \end{tabular} \right ] \label{agm-matrix}
\end{equation}
would not be positive-definite.

\subsection{Virasoro block coefficients}
Our analysis so far has been focused on $c^{\sigma\sigma(1, s)\sigma\sigma}_{2n}$, $c^{\epsilon\epsilon(1, s)\epsilon\epsilon}_{2n}$ and $c^{\sigma\epsilon(1, s)\sigma\epsilon}_{n}$, which encode the decomposition of a Virasoro block into $\mathfrak{sl}(2)$ blocks. With the sole exception of $c^{\sigma\epsilon(1, 4)\sigma\epsilon}_3$, which we could imagine to have a small effect, we have found that these coefficients are non-negative when $\frac{1}{8} \leq \Delta_\sigma \leq \frac{1}{2}$. However, this is only meaningful if the same property holds for the structure constants that unite holomorphic and anti-holomorphic halves of a four-point function. A single $C_{(r_1,s_1)(r_2,s_2)}^{(r_3,s_3)} < 0$ for instance would give rise to an infintie number of negative contributions in $\left < \phi_{r_1, s_1} \phi_{r_2, s_2} \phi_{r_1, s_1} \phi_{r_2, s_2} \right >$, severely complicating the interpretation of Figure \ref{single-multi}.

In a given correlation function, the Virasoro block coefficients that appear must be compatible with crossing symmetry and single-valuedness. By briefly reviewing the method of \cite{df84, df85a, df85b}, we will show that this condition is enough to fix them uniquely. For definiteness, consider the $\left < \epsilon\epsilon\epsilon\epsilon \right >$ correlator in the generalized minimal model with central charge $\frac{13}{21}$. The holomorphic part of this function comes from the kernel of the operator (\ref{level3}), which is three-dimensional. Solving (\ref{4epsilon-recursion}) has given us a basis for this kernel in which each function vanishes at $z = 0$. We could equally well have chosen any of the regular singular points $0$, $1$ and $\infty$, corresponding to the $s$, $t$ and $u$ channels for Virasoro blocks. When going from $z = 0$ to $z = 1$, there is a special matrix $F$ called the \textit{crossing matrix} (or \textit{fusion matrix}) that accomplishes $G^a(z) = F^a_{\;\;\; b} G^b(1 - z)$. It is a special case of the \textit{crossing kernel} which applies to theories with a continuous spectrum.\footnote{Several interchangeable terms have proliferated over the years. When replacing the $z \mapsto 1 - z$ map with $z \mapsto \frac{1}{z}$, the words \textit{crossing} and \textit{fusion} become \textit{exchange} and \textit{braiding}. Outside the CFT context, one says that a linear ODE has a \textit{monodromy matrix} or \textit{connection matrix}. For ODEs that have less structure than a BPZ equation, finding this matrix is often a difficult problem.} Since it represents a change of basis, the expression for $F$ is unique. In this case, it is given by
\begin{equation}
F = \left [
\begin{tabular}{ccc}
$0.7422$ & $0.3124$ & $-0.1563$ \\
$2.3762$ & $-1.8795$ & $1.4405$ \\
$1.8760$ & $-2.2733$ & $2.1372$
\end{tabular}
\right ] \; , \label{crossing-matrix}
\end{equation}
which was obtained in \cite{fgp90} via the Coulomb gas formalism. Later reviews are \cite{pss14, efr16}. In the following, we will use $G^1$, $G^2$ and $G^3$ to denote $G^{\epsilon\epsilon\epsilon\epsilon}_{(1, 1)}$, $G^{\epsilon\epsilon\epsilon\epsilon}_{(1, 3)}$ and $G^{\epsilon\epsilon\epsilon\epsilon}_{(1, 5)}$ respectively.

The constraints from (\ref{crossing-matrix}) are best phrased in terms of a metric on the space of conformal blocks; $G(z, \bar{z}) = W_{ab} G^a(z) G^b(\bar{z})$. We have three conditions that $W_{ab}$ must satisfy:
\begin{enumerate}
\item $W_{11} = 1$
\item $W_{cd} = W_{ab} F^a_{\;\;\; c} F^b_{\;\;\; d}$
\item $W_{ab} = 0$ for $a \neq b$
\end{enumerate}
The first of these is an obvious consequence of having unit-normalized operators. The second comes from writing $G(1 - z, 1 - \bar{z}) = W_{ab} G^a(1 - z) G^b(1 - \bar{z})$ in terms of $s$-channel blocks and setting it equal to $G(z, \bar{z})$. The third ensures that the four-point function has trivial monodromy under $z \mapsto e^{2\pi i}z$, $\bar{z} \mapsto e^{-2\pi i}\bar{z}$. This rules out a non-diagonal metric since we only find $h_{r,s}$ weights that differ by integers at special values of $m$.\footnote{It is well known that this happens in discrete minimal models. Some of the non-diagonal theories so constructed also satisfy the stronger requirement of modular invariance.} We build solutions out of the left eigenvectors of $F$,
\begin{equation}
v^1 = \left [
\begin{tabular}{c} $0.9537$ \\ $-0.1157$ \\ $0.2776$ \end{tabular}
\right ] \; , \;
v^2 = \left [
\begin{tabular}{c} $0.9805$ \\ $0.1758$ \\ $-0.0879$ \end{tabular}
\right ] \; , \;
v^3 = \left [
\begin{tabular}{c} $0.5938$ \\ $-0.7196$ \\ $0.3600$ \end{tabular}
\right ] \; , \label{crossing-vectors}
\end{equation}
which have eigenvalues of $1$, $1$ and $-1$ respectively. Clearly $F^2 = 1$, which follows from $z \leftrightarrow 1 - z$ being an involution, requires all eigenvalues to be $\pm 1$. The following form is invariant under two multiplications by $F$:
\begin{equation}
W_{ab} = c_{11} v^1_a v^1_b + c_{12} v^1_a v^2_b + c_{21} v^2_a v^1_b + c_{22} v^2_a v^2_b + c_{33} v^3_a v^3_b \; . \label{w-form}
\end{equation}
There are six off-diagonal components of $W$ that need to vanish. If we set $c_{21} = c_{12}$, (\ref{w-form}) becomes manifestly symmetric and we can use three more parameters, $c_{11}$, $c_{12}$ and $c_{22}$, to make $W$ diagonal. As the single remaining parameter, $c_{33}$ is used to rescale $W$ so that its leading component is $1$.

\begin{figure}[h]
\centering
\includegraphics[scale=0.6]{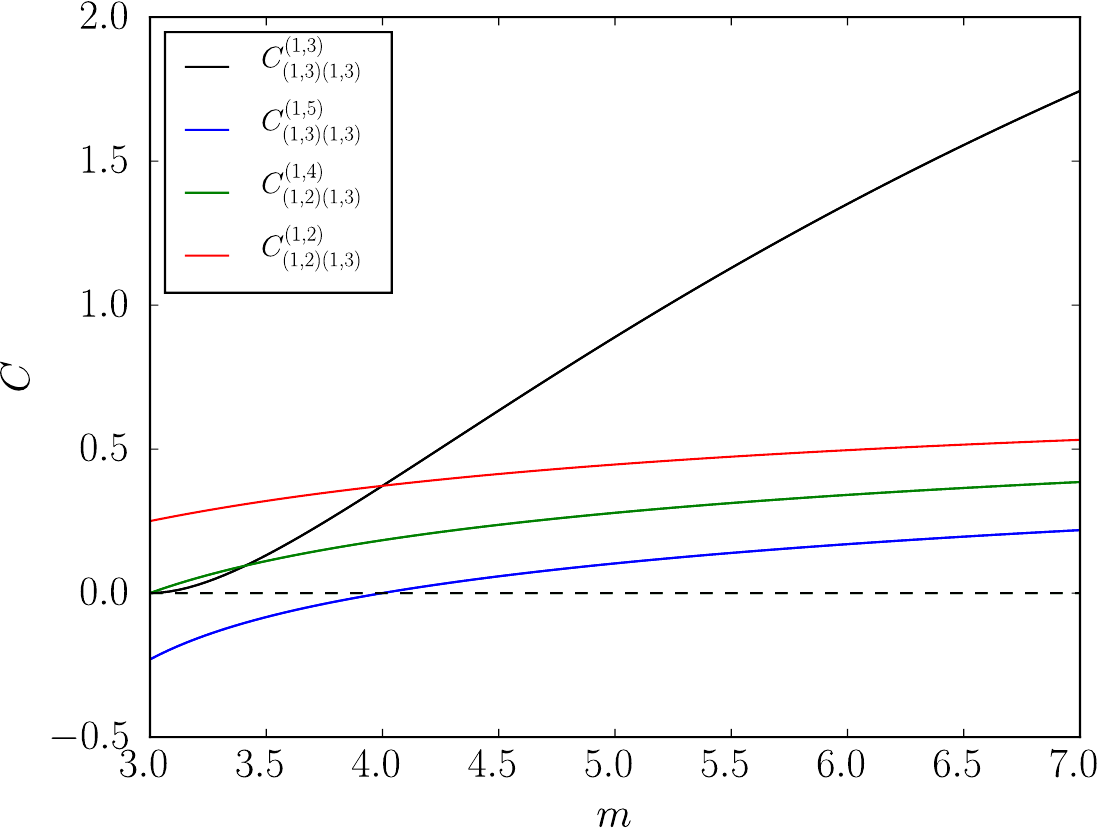}
\caption{Low-lying squared OPE coefficients for Virasoro primaries in the generalized minimal models. Between the Ising model at $m = 3$ ($\Delta_\sigma = \frac{1}{8}$) and the tricritical Ising model at $m = 4$ ($\Delta_\sigma = \frac{1}{5}$), there is one that becomes negative.}
\label{ope-plot}
\end{figure}

We may summarize this discussion by stating that there is no freedom in the three-correlator bootstrap equations once the Virasoro blocks involving $\sigma$ and $\epsilon$ are specified. Knowledge of these blocks fully determines $\left < \sigma\sigma\sigma\sigma \right >$, $\left < \sigma\sigma\epsilon\epsilon \right >$ and $\left < \epsilon\epsilon\epsilon\epsilon \right >$, whether or not we demand consistency conditions for other correlators. This means that the generalized minimal model structure constants are the only valid choices for the coefficients in (\ref{upper-4sigma}), (\ref{upper-4epsilon}) and (\ref{upper-2epsilon2sigma}). Defining $\gamma(x) = \Gamma(x) / \Gamma(1 - x)$ and $t = \frac{m}{m + 1}$, the expressions we need are
\begin{eqnarray}
C^{(1, 3)}_{(1, 3)(1, 3)} &=& \gamma(t)^3\gamma(4t - 1)^2\gamma(1 - 2t)^3\gamma(2 - 2t)\gamma(2 - 3t) \nonumber \\
C^{(1, 5)}_{(1, 3)(1, 3)} &=& \gamma(t)\gamma(2t)\gamma(4t - 1)\gamma(5t - 1)\gamma(1 - 3t)\gamma(1 - 4t)\gamma(2 - 2t)\gamma(2 - 3t) \nonumber \\
C^{(1, 4)}_{(1, 2)(1, 3)} &=& \gamma(t)\gamma(4t - 1)\gamma(1 - 3t)\gamma(2 - 2t) \nonumber \\
C^{(1, 2)}_{(1, 2)(1, 3)} &=& C^{(1, 3)}_{(1, 2)(1, 2)} = \gamma(t)\gamma(3t - 1)\gamma(1 - 2t)\gamma(2 - 2t) \; . \label{known-ope-coeffs}
\end{eqnarray}
The last coefficient in (\ref{known-ope-coeffs}) is clearly the one that was rederived in \cite{lrv13}. Plotting these in Figure \ref{ope-plot}, we see that $C^{(1, 5)}_{(1, 3)(1, 3)} < 0$ for the generalized minimal models between $\mathcal{M}(4, 3)$ and $\mathcal{M}(5, 4)$. This reveals a problem with our strategy for proving that the allowed region in Figure \ref{single-multi} must be large enough to include (\ref{upper-bound}). Constructing the generalized minimal model solution to crossing symmetry only accomplishes this in the one-correlator case. We must therefore conclude that there is at least one other way to extend the unitary subsector $\left < \sigma\sigma\sigma\sigma \right >$ into a consistent three-correlator system. This solution to crossing should have positive squared OPE coefficients wherever it exists, not just in $\frac{1}{5} \leq \Delta_\sigma \leq \frac{1}{2}$. As the solution might be very different from the theories discussed above, it is worth using the numerical bootstrap to see what else can be learned about it.

\section{Lessons for the bootstrap}
We saw in the last section that above central charge $\frac{7}{10}$, the generalized minimal models exhibit the restricted notion of unitarity that allows them to appear in Figure \ref{single-multi}. On the other hand, for $\frac{1}{2} < c < \frac{7}{10}$, they are highly non-unitary at the level of three correlators; the global coefficient $c^{\sigma\epsilon(1, 4)\sigma\epsilon}_3$ and the Virasoro coefficient $C^{(1, 5)}_{(1, 3)(1, 3)}$ both become negative in this region. Working around this problem, the machinery of the bootstrap has filled in this region with another solution whose $\sigma \times \sigma$ OPE agrees with that of a generalized minimal model. In this section, we will give an intuitive argument for why this should be possible. Beyond this, we will discuss two issues related to the replacement solution.

This first is whether it can be found uniquely. A technique called the extremal functional method is often used to extract a unique solution to crossing symmetry and unitarity whenever a dimension gap or OPE coefficient is extremized \cite{ep12, epprsv14, ep16, s17a, cl17, cflw17, cll17}. Based on this, we might expect to find a single line of exotic solutions that smoothly joins the $\mathcal{M}(m + 1, m)$ line at $\Delta_\sigma = \frac{1}{5}$. We will actually find the opposite --- a boundary of Figure \ref{single-multi} that has many possible choices for the local CFT data outside $\sigma \times \sigma$. To reconcile this with the standard lore about extremality, one has to remember that the bootstrap equations take on a more intricate form when there are multiple correlators.

The second is the prospect of excluding the above solution with further numerics. One reason for doing this with global conformal blocks is simply the technical challenge posed by Virasoro conformal blocks. There has indeed been recent progress in using the full Virasoro symmetry to carve out $c > 1$ CFTs \cite{ckly17}. However, tractable four-point functions with extended supersymmetry appear to be limited to those of BPS operators \cite{lsswy15, lswy16}. Global blocks were therefore a necessary ingredient of \cite{cls17}, a program which aims to constrain the space of superconformal theories using external operators in long multiplets. There has also been recent interest in conformal theories that have no locality and therefore no Virasoro algebra \cite{prvz15, brrz17a, brrz17b}. These provide a different motivation for shrinking the regions in Figure \ref{single-multi}.

\subsection{Reduction to one correlator}
The well known bootstrap constraints for three correlators with $\mathbb{Z}_2$ symmetry take the form of five crossing equations. As reviewed in Appendix B, the vector of equations has one component for $\left < \sigma\sigma\sigma\sigma \right >$, one component for $\left < \epsilon\epsilon\epsilon\epsilon \right >$ and three components for $\left < \sigma\sigma\epsilon\epsilon \right >$. Given a generic solution to crossing, it is easy to see that four (three) sum rules will break when an even (odd) operator is removed from the theory. However, there is a pleasing non-generic property that holds for generalized minimal models; $\epsilon \times \epsilon$ contains more operators than $\sigma \times \sigma$. Because of this, only one crossing equation is disturbed when we remove the $\phi_{1, 5}$ conformal family. This is the source of almost all unitarity violation in the system built from $\phi_{1, 2}$ and $\phi_{1, 3}$. Since negativity of $c^{\sigma\epsilon(1, 4)\sigma\epsilon}_3$ only affects spinning operators with $\Delta > \frac{45}{8}$, it is possible that the numerics are largely insensitive to it \cite{hrv16}. Assuming that problems with the mixed correlator are negligible, we will focus on
\begin{eqnarray}
&& \sum_{\mathcal{O}} \lambda^2_{\epsilon\epsilon\mathcal{O}} F^{\epsilon\epsilon ; \epsilon\epsilon}_{-, \mathcal{O}}(u, v) = 0 \nonumber \\
&& F^{\epsilon\epsilon ; \epsilon\epsilon}_{-, \Delta, \ell}(u, v) \equiv v^{\Delta_\epsilon} g^{0, 0}_{\Delta, \ell}(u, v) - u^{\Delta_\epsilon} g^{0, 0}_{\Delta, \ell}(v, u) \label{epsilon-focus}
\end{eqnarray}
as the single condition that needs to be restored. Once a solution to (\ref{epsilon-focus}) is found, one can incorporate it into the three-correlator problem by choosing $\lambda_{\sigma\sigma\mathcal{O}} = 0$ for new operators.\footnote{We are using four crossing equations to derive (\ref{upper-bound}) analytically and then claiming that the fifth crossing equation can be satisfied for free. This is different from what happens in three dimensions. We have checked that the island in \cite{kps14} merges with the rest of the allowed region once the fifth crossing equation is dropped.}

\begin{figure}[h]
\centering
\includegraphics[scale=0.6]{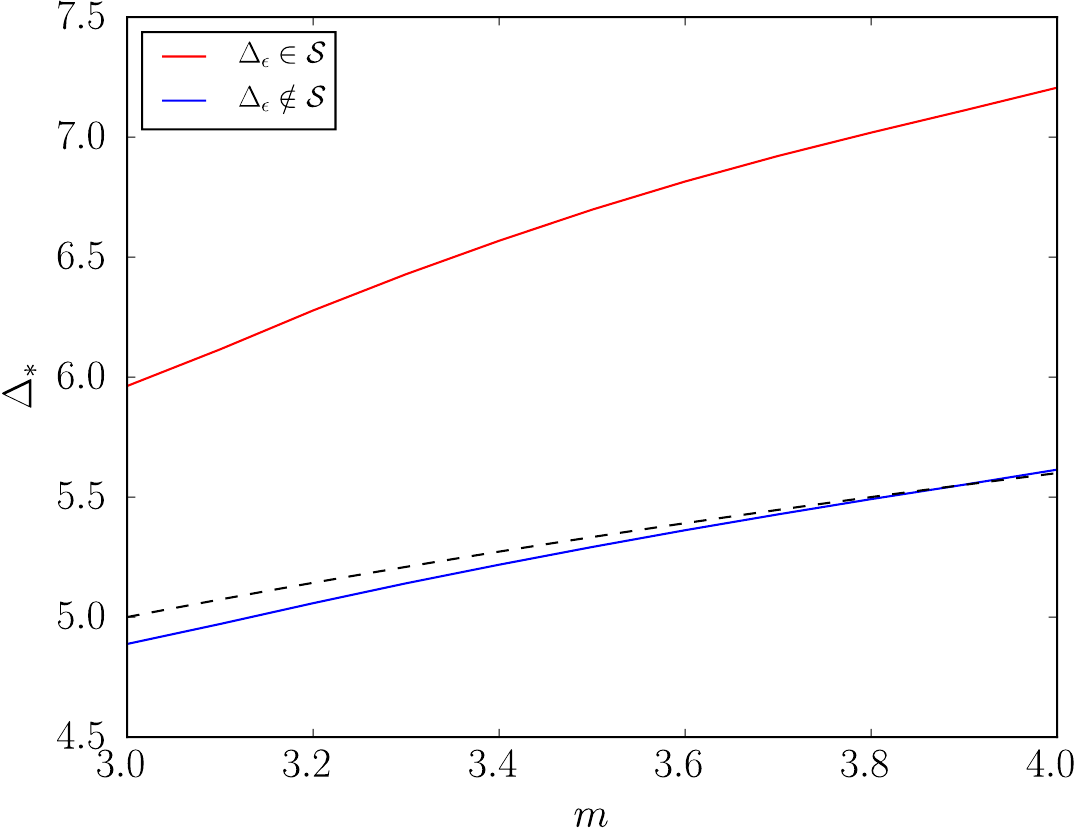}
\caption{Dimension bounds for irrelevant operators in (\ref{replace-15}). The dotted line shows the dimension of the primary scalar $\phi_{1, 5}$ whose multiplet needs to be replaced for $3 < m < 4$.}
\label{reduction}
\end{figure}

Checking the solvability of (\ref{epsilon-focus}) for real $\lambda_{\epsilon\epsilon\mathcal{O}}$ is the simplest numerical bootstrap problem. Emphasizing the contributions of operators that are already present, we may write
\begin{eqnarray}
&& \sum_{\mathcal{O}} \lambda^2_{\epsilon\epsilon\mathcal{O}} F^{\epsilon\epsilon ; \epsilon\epsilon}_{-, \mathcal{O}}(u, v) = - F_{1, 1}(u, v) - C^{(1, 3)}_{(1, 3)(1, 3)} F_{1, 3}(u, v) \nonumber \\
&& F_{1, 1}(u, v) \equiv \sum_{n, \bar{n}} c^{\epsilon\epsilon(1, 1)\epsilon\epsilon}_n c^{\epsilon\epsilon(1, 1)\epsilon\epsilon}_{\bar{n}} F^{\epsilon\epsilon ; \epsilon\epsilon}_{-, n + \bar{n}, |n - \bar{n}|}(u, v) \nonumber \\
&& F_{1, 3}(u, v) \equiv \sum_{n, \bar{n}} c^{\epsilon\epsilon(1, 3)\epsilon\epsilon}_n c^{\epsilon\epsilon(1, 3)\epsilon\epsilon}_{\bar{n}} F^{\epsilon\epsilon ; \epsilon\epsilon}_{-, \Delta_\epsilon + n + \bar{n}, |n - \bar{n}|}(u, v) \; . \label{replace-15}
\end{eqnarray}
Keeping operators with $\Delta \leq 30$, we have used the results of the last section to approximate the right-hand side of (\ref{replace-15}). In the sum over operators, there will be a continuum of irrelevant scalars not in $\phi_{1, 1} \cup \phi_{1, 3}$ which begins at some gap $\Delta_* > 2$. If this is the only set of scalars on the left-hand side of (\ref{replace-15}), we are dealing with the set $\mathcal{S} = \{ \Delta > \Delta_* \}$. Alternatively, we could allow the dimension $\Delta_\epsilon$ to appear again and enlarge it to $\mathcal{S} = \{ \Delta > \Delta_* \} \cup \{ \Delta = \Delta_\epsilon \}$. The second choice is the one applicable to Figure \ref{single-multi}. However, in Figure \ref{reduction}, we consider the first choice as well. This is because it is possible to rederive Figure \ref{single-multi} under the requirement that $\epsilon$ is non-degenerate \cite{kpsv16}. As numerical accuracy is improved, we expect the blue line to precisely meet the dotted line at $m = 4$.\footnote{Varying the spatial dimension provides one indication that numerical errors have a large effect. Evaluating conformal blocks at $d = 2.01$ instead of $d = 2$ results in a much smaller bound on $\Delta_*$.}

In Figure \ref{reduction}, the red curve tells us that (\ref{upper-bound}) is admissible whenever we treat $(\Delta_\sigma, \Delta_\epsilon)$ as allowed dimensions and perform a two-parameter scan. The blue curve tells us that (\ref{upper-bound}) will still be admissible when we fix $\epsilon$ as a single operator at angle $\theta = \mathrm{arctan} \left ( \frac{\lambda_{\sigma\sigma\epsilon}}{\lambda_{\epsilon\epsilon\epsilon}} \right )$ and scan over $(\Delta_\sigma, \Delta_\epsilon, \theta)$. Finally, one may contemplate the effect of imposing $\theta = \frac{\pi}{2}$ which is one consequence of Kramers-Wannier duality in the Ising model. Although this question cannot be answered with Figure \ref{reduction}, we have found that (\ref{upper-bound}) persists yet again. In all three cases, (\ref{upper-bound}) is not just an allowed line --- it is the \textit{maximal} allowed line. From this, we must conclude that many solutions to crossing, labelled by values of $\Delta_*$ between the red and blue lines, lie along the bound on the left side of Figure \ref{single-multi}. This signals the presence of a flat direction, \textit{e.g.} a bound in $(\Delta_\sigma, \Delta_\epsilon, \theta)$ space which is independent of $\theta$. A flat direction in the modular bootstrap was previously seen in \cite{dfx17}. The main argument for unique extremal functionals comes from \cite{ep16}, in which the multi-correlator bootstrap equations were augmented with angles for each operator in OPE space. To extract a spectrum in this formulation, one would have to look for zeros of these functionals on the entire $(\Delta, \theta)$ plane. We suspect that the flat direction here corresponds to a zero being achieved on a codimension-one locus.

\begin{figure}[h]
\centering
\includegraphics[scale=0.6]{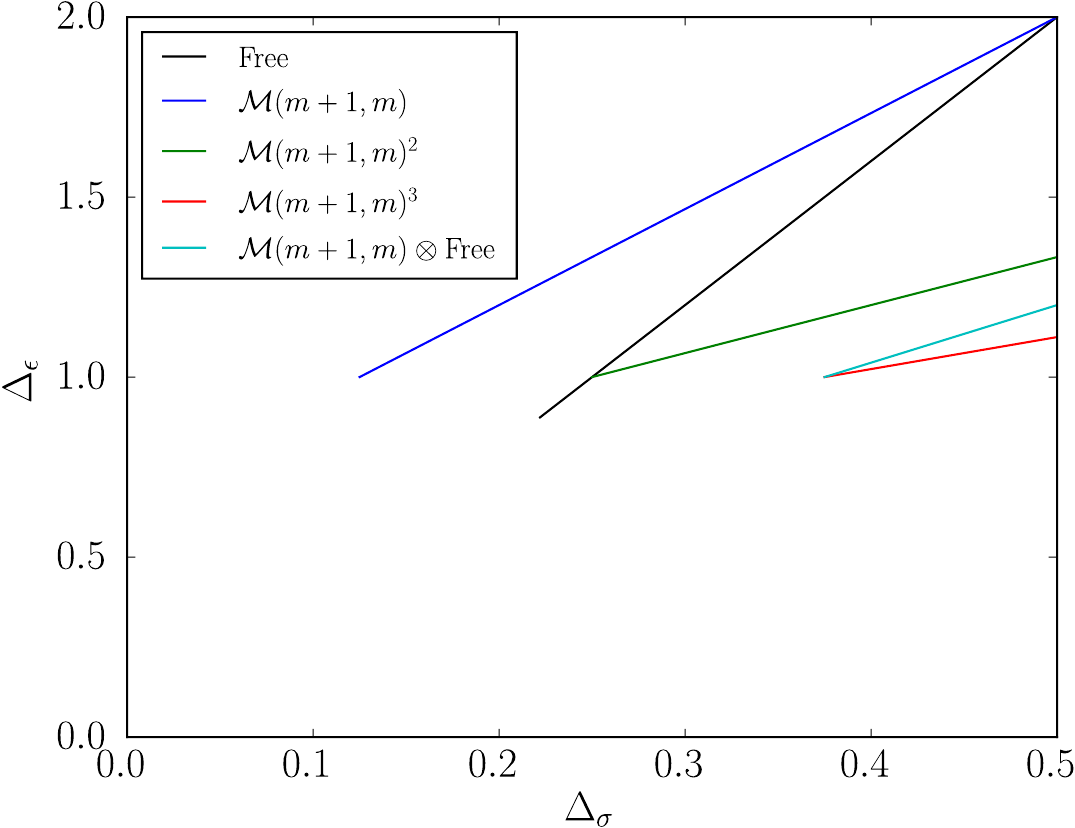}
\caption{Tensor product theories that are allowed in both sides of Figure \ref{single-multi}. Since we must have $\Delta_\epsilon \geq 1$, all other tensor products involving the free field vertex operators necessarily lie to the right.}
\label{tensor}
\end{figure}

Because the generalized minimal model line (\ref{upper-bound}) is allowed by the bootstrap, there are several lines in the interior of Figure \ref{single-multi} that must be allowed as well. These can be constructed through one or more tensor products. If we multiply two generalized minimal models for instance, the only non-trivial operator whose dimension lies to the left of $\Delta_\sigma = \frac{1}{2}$ is $\sigma \otimes \sigma$. Writing its OPE schematically,
\begin{equation}
(\sigma \otimes \sigma) \times (\sigma \otimes \sigma) = (I \otimes I) + (I \otimes \epsilon) + (\epsilon \otimes I) + (\epsilon \otimes \epsilon) + \dots \label{tensor-ope}
\end{equation}
includes two relevant operators. In order for these to have the same scaling dimension, the $\mathcal{M}(m + 1, m) \otimes \mathcal{M}(m^\prime + 1, m^\prime)$ product must have $m = m^{\prime}$. Expanding the search to include free theories and generalized free theories, it is a simple exercise to check that the lines in Figure \ref{tensor} all have one relevant $\mathbb{Z}_2$-odd scaling dimension and one relevant $\mathbb{Z}_2$-even scaling dimension.

Evidently, it is not possible to isolate particular minimal models in a three-correlator bootstrap by specifying the number of relevant operators. One has to consider more stringent assumptions or add more correlators. The former approach was discussed already in \cite{r11}, where it was found that the one-correlator Ising kink sharpens considerably when scalars are restricted to lie in $\mathcal{S} = \{ \Delta > 3 \} \cup \{ \Delta = \Delta_\epsilon \}$. This kink becomes an island when a similar restriction is made for a three-correlator system. Following \cite{ls17}, it is likely that one can obtain this island from a single correlator by imposing large gaps in the spin-0 and spin-2 sectors.

\begin{table}[h]
\centering
\begin{tabular}{l|l}
Coefficients & Signs \\
\hline
$C^{(1, 3)}_{(1, 4)(1, 4)}$ & Positive \\
$C^{(1, 5)}_{(1, 4)(1, 4)}$ & Negative for $m < 4$ \\
$C^{(1, 7)}_{(1, 4)(1, 4)}$ & Negative for $m < 6$ \\
$C^{(1, 6)}_{(1, 3)(1, 4)}$ & Negative for $m < 5$ \\
$C^{(1, 5)}_{(1, 2)(1, 4)}$ & Negative for $m < 4$
\end{tabular}
\caption{Virasoro block coefficients (other than the ones in Figure \ref{ope-plot}) appearing in four-point functions made from $\sigma$, $\epsilon$ and $\sigma^\prime$. Only one is non-negative for all $m \geq 3$.}
\label{new-coeffs}
\end{table}

A more ambitious goal is to produce islands under minimal assumptions by introducing a third external scalar. Taking this external scalar to be odd, plots along the lines of Figure \ref{ope-plot} offer some preliminary insight.\footnote{The lightest $\mathbb{Z}_2$-odd scalar after $\sigma$ is $\sigma^\prime \equiv \phi_{1, 4}$. The lightest $\mathbb{Z}_2$-even scalar after $\epsilon$ is $\epsilon^\prime \equiv [L_{-2}\bar{L}_{-2}, \phi_{1, 1}]$. We have made the choice in which all operator fusions are between Virasoro primaries.} Regions where all $C_{(r_1,s_1)(r_2,s_2)}^{(r_3,s_3)} > 0$ are likely to survive, but as Table \ref{new-coeffs} shows, there can be several negative structure constants with more than three correlators. We have seen that the negative constant $C^{(1, 5)}_{(1, 3)(1, 3)}$ in the three-correlator system was innocuous because it did not appear in any mixed correlators. It is therefore encouraging that the coefficients $C^{(1, 5)}_{(1, 2)(1, 4)}$ and $C^{(1, 6)}_{(1, 3)(1, 4)}$, which have first-order zeros, participate in $\left < \sigma\sigma\sigma^\prime\sigma^\prime \right >$ and $\left < \epsilon\epsilon\sigma^\prime\sigma^\prime \right >$ respectively. Lest we become too encouraged, it is important to note that $\Delta_{\sigma^\prime}$ is defined as the starting point for a continuum of irrelevant operators. This represents a fundamental difference as compared to the one-correlator and three-correlator analysis. It remains to be seen whether we can still derive strong bounds from a scan over two isolated scaling dimensions and one non-isolated scaling dimension.

\subsection{Conformal manifolds}
Even though the tensor product lines above all have $\Delta_\sigma > \frac{1}{4}$, there could be other CFTs in $\frac{1}{8} < \Delta_\sigma < \frac{1}{5}$ with sufficiently few relevant operators to survive the constraints of $\sigma$, $\epsilon$ and $\sigma^\prime$. An interesting possibility, that would complicate the search for islands, is a continuous line of theories ending somewhere close to the Ising model. We may search for an example by using the extremal functional method on solutions to crossing that involve a scalar $\Phi$ of dimension 2.\footnote{The idea of building up a previously unknown conformal manifold was discussed in \cite{bbr17}. Their method uses large-$N$ perturbation theory to construct the holographic dual of a bulk action with shift-symmetric couplings. For a recent bootstrap study of a known conformal manifold, see \cite{bbclp17}.} This provides another opportunity to predict a mixed correlator result without actually bootstrapping more than one correlator.

The first step is to maximize $\lambda^2_{\sigma\sigma\Phi}$ in the range $\frac{1}{8} < \Delta_\sigma < \frac{1}{2}$ with the constraint that all scalar dimensions are above $2\Delta_\sigma$. Each point saturating the bound yields a spectrum with the marginal deformation $\Phi$. To see that this set of solutions is not a conformal manifold, we may check that the central charge varies with $\Delta_\sigma$. Specifically, it reaches a minimum value of $c \approx 1.12$. It is then straightforward to force our putative theories to have this central charge (or any larger value) by taking
\begin{equation}
F^{\sigma\sigma ; \sigma\sigma}_{-,0,0}(u, v) \mapsto F^{\sigma\sigma ; \sigma\sigma}_{-,0,0}(u, v) + \frac{\Delta_\sigma^2}{c} F^{\sigma\sigma ; \sigma\sigma}_{-,2,2}(u, v) \label{forcing-c}
\end{equation}
in the usual one-correlator crossing equation. To prevent (\ref{forcing-c}) from being undone by a second copy of the stress tensor, we have imposed the gap $\Delta_T > 2.1$ on spin-2 operators. Performing a second pass in this way, we have found low-lying dimensions and OPE coefficients with examples plotted in Figure \ref{manifold}. With constant $c$ and $\Delta_\Phi$, these CFT data appear to satisfy all constraints of a conformal manifold that are directly accessible to $\left < \sigma\sigma\sigma\sigma \right >$. Despite this, there is a method from \cite{ep12} that can be revived to gain indirect information about the next four-point function, which we will call $\left < \epsilon\epsilon\epsilon\epsilon \right >$.

\begin{figure}[t!]
\centering
\subfloat[][Scaling dimensions]{\includegraphics[scale=0.45]{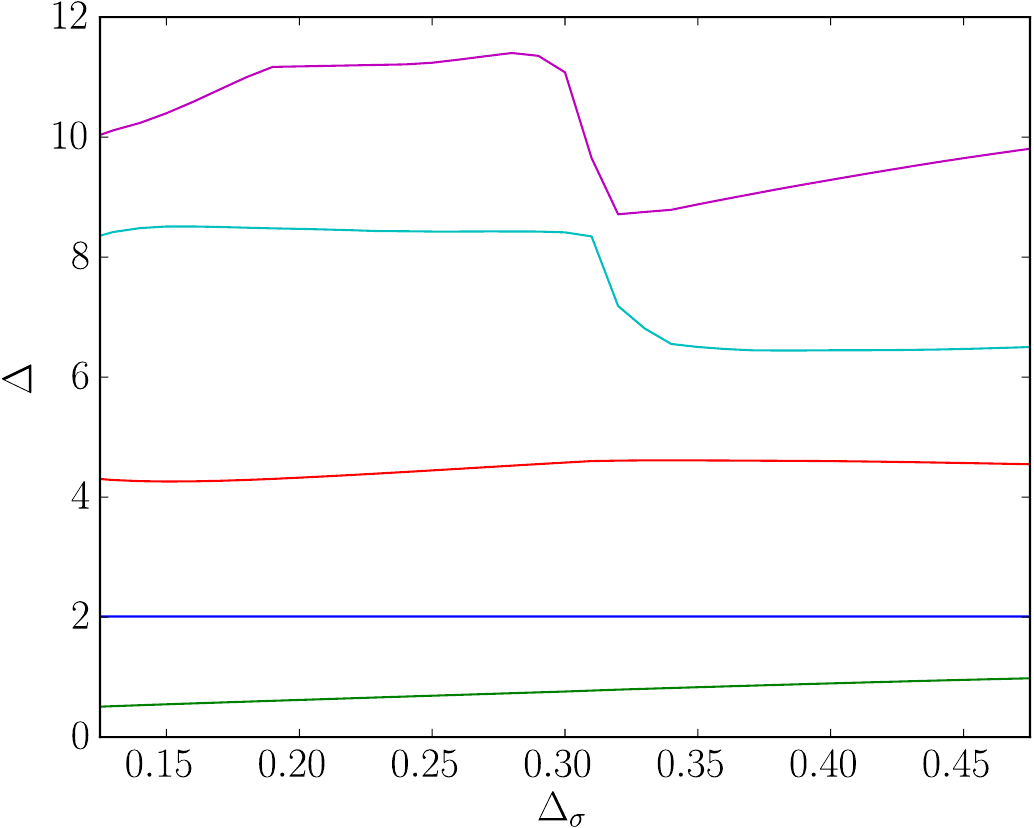}}
\subfloat[][OPE coefficients]{\includegraphics[scale=0.45]{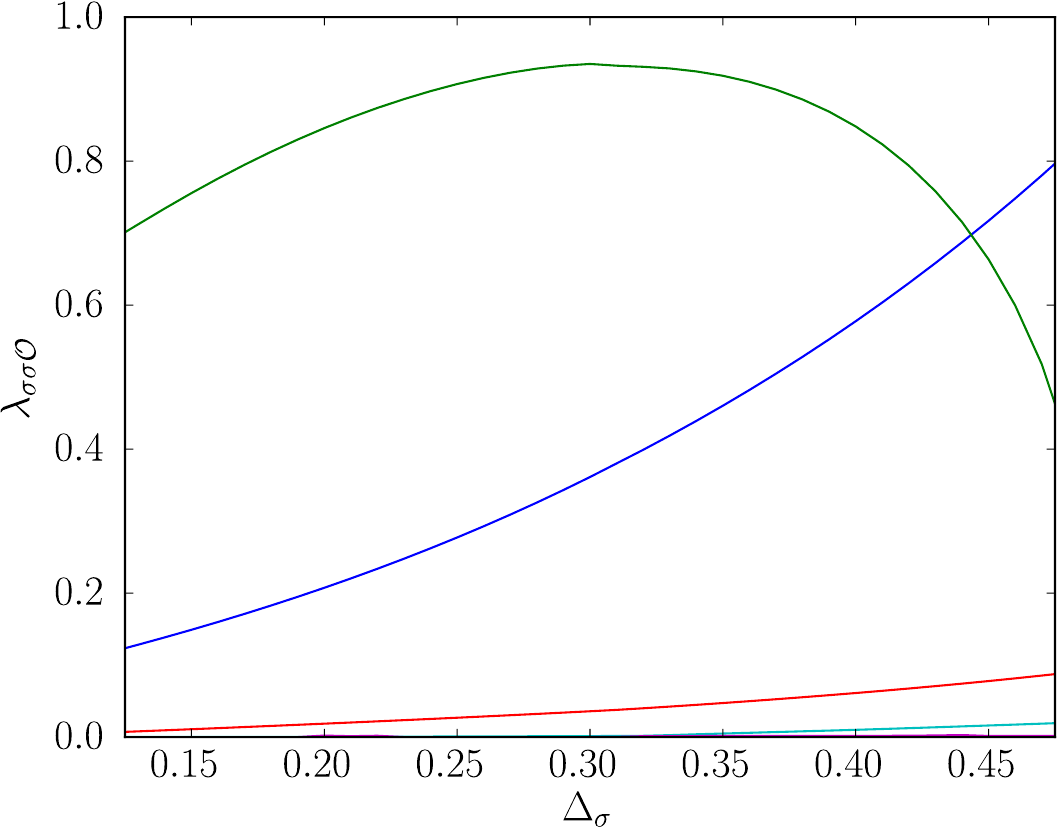}}
\caption{Dimensions and OPE coefficients for scalars in the spectrum maximizing $\lambda^2_{\sigma\sigma\Phi}$. The green and blue lines are $\epsilon$ and $\Phi$ respectively.}
\label{manifold}
\end{figure}

The key is that the dimensions in $\sigma \times \sigma$ are also the dimensions in any OPE between identical scalars. Given a sufficiently long list of scaling dimensions, there is no reason why a search for optimal OPE coefficients has to be done for $\sigma \times \sigma$ rather than $\epsilon \times \epsilon$. We confirm this in Appendix B, by taking an approximate Ising model spectrum and performing the same type of fit that was done in \cite{ep12}. Our results show that the estimation of $\lambda^2_{\sigma\sigma\mathcal{O}}$ could have been extended to $\lambda^2_{\epsilon\epsilon\mathcal{O}}$ without the three-correlator crossing equations. The fitting procedure therefore differs from the primal method of \cite{epprsv14}, which is more accurate but limited to the direct correlators under study. Assuming that the method can be trusted, at least for 2D theories, we now have access to $\epsilon \times \epsilon$ coefficients including $\lambda_{\epsilon\epsilon\Phi}$. This allows another property of conformal manifolds to be tested.

\begin{figure}[t!]
\centering
\includegraphics[scale=0.6]{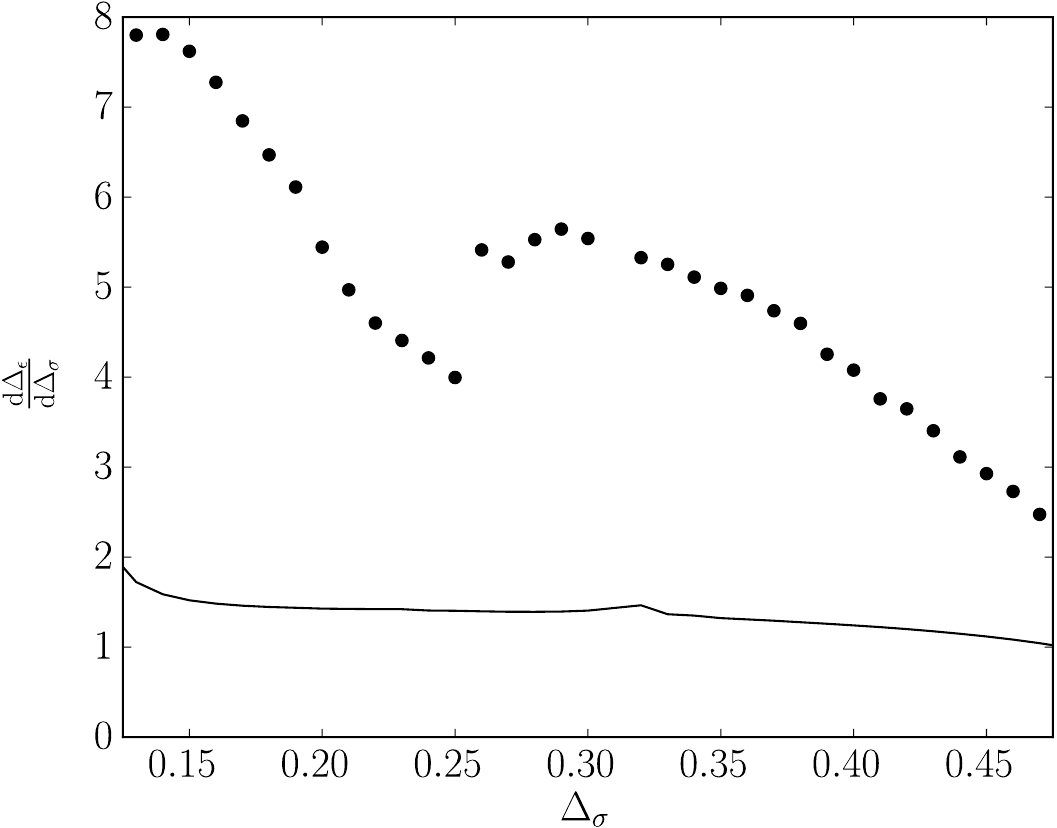}
\caption{The slope of $\Delta_\epsilon$ as one goes through the artificial spectra parameterized by $\Delta_\sigma$. The points, predicted by (\ref{flow-equation}), would have to match if these operators were to come from a genuine theory with a dimensionless coupling.}
\label{non-manifold}
\end{figure}

When CFTs are parameterized by an exactly marginal operator with coupling $g$, there is a complicated set of differential equations that their local data must satisfy \cite{b17,s17b}. The simplest of these is $\frac{\textup{d} \Delta_i}{\textup{d} g} = -S_{d - 1} \lambda_{ii\Phi}$ which reads $\frac{\textup{d} \Delta_i}{\textup{d} g} = -2\pi \lambda_{ii\Phi}$ in two dimensions. The constraint
\begin{equation}
\frac{\textup{d}\Delta_\epsilon}{\textup{d}\Delta_\sigma} = \frac{\lambda_{\epsilon\epsilon\Phi}}{\lambda_{\sigma\sigma\Phi}} \label{flow-equation}
\end{equation}
immediately follows. The two sides of this equation have been plotted in Figure \ref{non-manifold}. Even though $\lambda_{\epsilon\epsilon\Phi}$ is not known with high precision, the noise in the data seems much too small to explain the violation of (\ref{flow-equation}). We must conclude that if a conformal manifold in $\frac{1}{8} < \Delta_\sigma < \frac{1}{2}$ allowed by Figure \ref{single-multi} exists, it is not privileged enough to be found by this one-correlator exercise.

It would be very interesting to find other applications for the dual method in \cite{ep12}. Unfortunately, Appendix B shows that results become much less reliable above $d = 2$. A possible reason for this was given in \cite{s17a} which noticed a surprising preference for double-twist operators in the 3D numerical bootstrap. To review, double-twist families have the following schematic form for $\ell \rightarrow \infty$:
\begin{eqnarray}
[\phi \phi]_n &=& \phi \partial^{\mu_1} \dots \partial^{\mu_\ell} \Box^n \phi \nonumber \\
\tau &=& 2\tau_\phi + 2n + O(\ell^{-1}) \; . \label{double-twist}
\end{eqnarray}
The lightcone bootstrap requires them to appear in any CFT with a twist gap \cite{fkps12, kz12}. When $d > 2$, every operator except the identity has positive twist and the extremal functional method is able to find spectra dominated by (\ref{double-twist}). The absence of other operators interferes with our ability to fit OPE coefficients. Convsersely, 2D theories have no need for double-twist operators as the identity, stress tensor and higher-spin currents all have $\tau = 0$. The quality of our fit suggests that this resolves the main source of bias in the extremal spectrum.

\subsection{Supersymmetric minimal models}
So far, we have been concerned with finding the necessary correlators to single out theories that are minimal with respect to the Virasoro algebra. However, the unitary representations of the $\mathcal{N} = 1$ super-Virasoro algebra also admit a discrete series for central charges below that of the free field. In analogy with $\mathcal{M}(m + 1, m)$, we can continue them to $\mathcal{SM}(m + 2, m)$ where non-integer $m$ breaks unitarity but preserves crossing symmetry. Studying these solved theories offers another route toward understanding the systematics of the bootstrap. It could also be interesting to compare results for the tricritical Ising model since this is the lowest model of $\mathcal{SM}(m + 2, m)$ but the second lowest model of $\mathcal{M}(m + 1, m)$.

The super-Virasoro graded commutation relations
\begin{eqnarray}
\left [ L_m, L_n \right ] &=& (m - n) L_{m + n} + \frac{c}{12} m(m-1)(m+1) \delta_{m + n, 0} \nonumber \\
\left [ L_m, G_r \right ] &=& \left ( \frac{m}{2} - r \right ) G_{m + r} \nonumber \\
\{ G_r, G_s \} &=& 2 L_{r + s} + \frac{c}{3} \left ( r - \frac{1}{2} \right ) \left ( r + \frac{1}{2} \right ) \delta_{r + s, 0} \label{super-virasoro}
\end{eqnarray}
actually describe two algebras since fermions do not have to be periodic in radial quantization. When the indices on $G_r$ are integers, (\ref{super-virasoro}) is the Ramond superalgebra, otherwise it is the Neveu-Schwarz superalgebra. The super-Virasoro minimal models, which contain representations of each, have a Kac formula given by
\begin{eqnarray}
c &=& \frac{3}{2} - \frac{12}{m(m + 2)} \;\;\;\;\;\;\;\;\;\;\;\;\;\;\;\;\;\;\;\;\;\;\;\;\;\;\;\;\;\;\;\;\;\;\;\;\;\;\;\;\;\; m > 2 \nonumber \\
h_{r, s} &=& \frac{[(m + 2)r - ms]^2 - 4}{8m(m + 2)} + \frac{1}{32} [1 - (-1)^{r - s}] \;\;\; r, s \in \mathbb{Z}_{>0} \; . \label{super-kac-table}
\end{eqnarray}
If $r - s$ is even (odd), this is a Neveu-Schwarz (Ramond) degenerate weight \cite{fqs85}. In either case, it is degenerate at level $rs / 2$. It is clear by inspection that $G_{\pm \frac{1}{2}}$ generate a subalgebra of (\ref{super-virasoro}) that is independent of $c$. For integer indices, on the other hand, no global subalgebra exists. A numerical bootstrap approach is therefore most readily accessible for the Neveu-Schwarz sectors of $\mathcal{N} = (1, 1)$ theories.

In the global algebra, which is $\mathfrak{osp}(2 | 1)$, primary operators may be written as superfields; $\Phi(z, \theta) = \phi(z) + \theta \psi(z)$. The superspace distance, which enters in correlation functions, is $Z_{ij} \equiv z_i - z_j - \theta_i \theta_j$. Even though cross-ratios in $\mathbb{R}^d$ all involve at least four points, invariant combinations in superspace may be built using three points as well. The quantity
\begin{equation}
\eta = \frac{\theta_1 Z_{23} + \theta_2 Z_{31} + \theta_3 Z_{12} + \theta_1 \theta_2 \theta_3}{\sqrt{Z_{12}Z_{23}Z_{31}}} \label{super-3pt1}
\end{equation}
is invariant under $\mathfrak{osp}(2 | 1)$ \cite{q86}. As a result, the three-point function depends on more than just an OPE coefficient. The general expression for the chiral half is
\begin{equation}
\left < \Phi_1(z_1, \theta_1) \Phi_2(z_2, \theta_2) \Phi_3(z_3, \theta_3) \right > = \frac{\lambda_{123}(1 + \zeta \eta)}{Z_{12}^{h_1 + h_2 - h_3} Z_{23}^{h_2 + h_3 - h_1} Z_{31}^{h_3 + h_1 - h_2}} \label{super-3pt2}
\end{equation}
where $\zeta$ is an arbitrary Grassman number. Until recently, such extra parameters were eliminated by restricting the superconformal bootstrap to correlators of BPS operators \cite{ps10,psv12}. We may indeed impose shortening conditions on (\ref{super-3pt2}), but due to the small amount of supersymmetry, this would require us to give up a lot. To be annihilated by a supercharge, each external operator would have to be the identity in at least one of the $\mathfrak{osp}(2 | 1)$ factors. It is therefore preferable to leave (\ref{super-3pt2}) in its most general form and use superconformal blocks that include unknown coefficients reflecting the presence of $\zeta$.

The authors of \cite{cls17} computed some of the necessary blocks and introduced a framework that still allows the bootstrap to proceed. Their idea is to consider an entire multiplet at once, with the external correlators involving all combinations of a primary and its super-descendants. When this is carried out for $\mathcal{N} = (1, 1)$, the allowed regions will have to include all points corresponding to the $\mathcal{SM}(m + 2, m)$. The strongest statement we can make from this is that the line
\begin{equation}
\Delta_\epsilon = \frac{8}{3} \Delta_\sigma \label{super-upper-bound}
\end{equation}
must be inside the bound for $\Delta_\sigma < \frac{1}{8}$. This comes from choosing the Neveu-Schwarz fields $\sigma \equiv \phi_{2, 2}$ and $\epsilon \equiv \phi_{3, 3}$. Right at $\Delta_\sigma = \frac{1}{8}$, we find ourselves in the $c = 1$ model where $\phi_{3, 3} = \phi_{1, 3}$ and the entire level-$\frac{3}{2}$ subspace decouples. To see this, we may check that
\begin{equation}
\left | \chi \right > = \left [ G_{-\frac{3}{2}} - \frac{2}{2h + 1} L_{-1} G_{-\frac{1}{2}} \right ] \left | h \right > \label{super-null1}
\end{equation}
is the unique quasiprimary state. Computing the norm and setting $h \mapsto h_{3,3}$, we find
\begin{eqnarray}
\left < \chi | \chi \right > &=& \frac{2(2ch + c + 6h^2 - 9h)}{3(2h + 1)} \nonumber \\
&\mapsto& \frac{(m - 4)(m + 6)}{m^2 + 2m + 8} \label{super-null2}
\end{eqnarray}
with a first-order zero. This is the behaviour that we saw for $m = 3$ in the bosonic case, but now it occurs for $m = 4$. The tricritical Ising model, which has no $\phi_{3, 3}$ operator, lives at the point $(\Delta_\sigma, \Delta_\epsilon) = \left ( \frac{1}{10}, \frac{1}{10} \right )$.

The above calculation shows that a global block coefficient in the supersymmetric generalized minimal model line becomes negative for $m < 4$. If we are to see an associated kink, this line must saturate the bound on operator dimensions from the long multiplet bootstrap. This brings us to a crucial difference between $\mathcal{M}(m + 1, m)$ and the Neveu-Schwarz sector of $\mathcal{SM}(m + 2, m)$. In the former case, we saw the correct saturation with (\ref{upper-bound}). The same cannot hold for (\ref{super-upper-bound}) because it is strictly below the line for vertex operators. Writing a CFT vertex operator as $e^{iq\phi(z)}$ and an SCFT vertex operator as $e^{iq\Phi(z, \theta)}$, the two important properties are $\Delta \propto q^2$ and additivity of $q$. These lead to $\Delta_\epsilon = 4\Delta_\sigma$ which is allowed by the one-correlator region of Figure \ref{single-multi}. Due to the restriction on the number of relevant operators, the three-correlator region omits this line until $\Delta_\sigma = \frac{2}{9}$. It is therefore clear that treating four copies of the same multiplet with the methods of \cite{cls17} is not enough. If our goal is to see a minimal model kink, the $\mathcal{N} = (1, 1)$ bootstrap will require multiple correlators at the superspace level. Since each of these must separately expand to a mixed correlator system, the resulting problem is likely to be numerically intensive.

\section{Conclusion}
Through a combination of analytic techniques and one-correlator numerics, we have explained an important aspect of Figure \ref{single-multi} --- we have shown that the constraints on four-point functions of relevant operators are not strong enough to exclude the generalized minimal models. Constraints that are strong enough may be used in the conformal or superconformal bootstrap, but only when the number of four-point functions exceeds the maximum system size that has been tested to date.

Our analysis proceeded correlator-by-correlator. For $\left < \sigma\sigma\sigma\sigma \right >$, the results were exact. The expressions (\ref{4sigma-result-id}) and (\ref{4sigma-result-eps}) ensured that the global block coefficients involved were all positive. For $\left < \sigma\sigma\epsilon\epsilon \right >$, all but one coefficient appeared to be positive upon using the approach of \cite{lrv13} --- we simply expanded in $\mathfrak{sl}(2)$ blocks to high order and conjectured that the pattern continues to hold. For $\left < \epsilon\epsilon\epsilon\epsilon \right >$, two pieces of the $\mathcal{M}(m + 1, m)$ line had to be treated separately. In $4 < m < \infty$, it was enough to compute coefficients again and see that there were no obvious signs of unitarity violation. However, $3 < m < 4$ required an exotic solution having only partial overlap with the CFT data of a generalized minimal model. It was possible to see evidence of this in the one-correlator bootstrap because of a special property of $\phi_{1, 5}$ operators, namely their absence from the superselection sector $\sigma \times \sigma$. This had interesting implications for the uniqueness results in \cite{ep16}.

A possible future endeavour is to put all correlators on the same footing as $\left < \sigma\sigma\sigma\sigma \right >$. In order to do this, one does not necessarily have to solve for $c^{\epsilon\epsilon(1, s)\epsilon\epsilon}_{2n}$ as a known special function. The $c^{\sigma\sigma(1, s)\sigma\sigma}_{2n}$ were ultimately shown to be positive using only the recursion for Wilson polynomials. Instead, the main challenge in extending the positivity proof is expressing global block coefficients as solutions of one recurrence relation instead of two. Our current approach, based on the BPZ equation, is awkward in this respect. It uses the Frobenius method to evaluate Taylor coefficients and then feeds these into a second recursion to obtain global block coefficients. It is worth checking if hypergeometric identities can be used to derive a recursion that operates on global block coefficients directly. It would also be interesting to take a closer look at super-BPZ equations. Our discussion surrounding the vanishing norm (\ref{super-null2}) can be made more systematic if we also check how other $\mathfrak{osp}(2 | 1)$ blocks appear. BPZ differential equations in superspace have been studied in \cite{fqs85, q86} and some of them are second-order. This is exactly what we need to go beyond recurrence relations and apply the decompositions in \cite{hv17,h17}. Additionally, these methods could be applicable to the KZ equations associated with extended chiral algebras. A realistic hope is using them to explain a numerical bound in \cite{rs17} which interpolates between $\mathcal{W}_3$-minimal models.

The last possibility we have discussed is an extension of the extremal functional method --- reviving the fit in \cite{ep12} to estimate more OPE coefficients than the ones that are known to high precision. While we only used this to demonstrate a null result, it would be interesting to find a further use for it in two dimensions. Complications in higher dimensions arise due to the privileged role of double-twist operators in the numerical bootstrap; a result that is not fully understood \cite{s17a}. One should be able to get a sense of how robust it is by studying alternative bootstrap algorithms such as the one in \cite{ehs16}.

\section*{Acknowledgements}
This work was partially supported by the Natural Sciences and Engineering Research Council of Canada. Numerical results in this paper were obtained using the high-performance computing system at the Institute for Advanced Computational Science at Stony Brook University. I am grateful for discussions with Shai Chester, Liam Fitzpatrick, Anton de la Fuente, Matthijs Hogvervorst, Jaehoon Lee, Madalena Lemos, Dalimil Maz\'{a}\v{c}, Miguel Paulos, Leonardo Rastelli, Slava Rychkov, David Simmons-Duffin, Balt van Rees and Alessandro Vichi. Most of these took place at various meetings of the Simons Collaboration on the Nonperturbative Bootstrap.

\appendix
\section{Linear difference equations}
We have regularly encountered linear recursions with three terms such as (\ref{continuous-hahn-recursion}) and (\ref{wilson-recursion}). Asymptotic analysis of sequences obeying these relations is a well understood subject, going by the name Birkhoff-Trjitzinsky theory. The following theorem summarizes a number of results from it \cite{w04}.
\begin{theorem}
Let $y_1(n)$ and $y_2(n)$ be the two linearly independent solutions of the difference equation
\begin{equation}
y(n + 2) + a(n)y(n + 1) + b(n)y(n) = 0 \label{diffeq1}
\end{equation}
where the coefficients have asymptotic expansions $a(n) \sim \sum_{s = 0}^\infty \frac{a_s}{n^s}$ and $b(n) \sim \sum_{s = 0}^\infty \frac{b_s}{n^s}$.
\begin{enumerate}
\item
If the \textit{characteristic equation} $\rho^2 + a_0 \rho + b_0 = 0$ has two distinct roots $\rho_1$ and $\rho_2$, the solutions satisfy $y_j(n) \sim \rho_j^n n^{\alpha_j} \sum_{s = 0}^\infty \frac{c_{s, j}}{n^s}$ where $\alpha_j = \frac{a_1 \rho_j + b_1}{a_0 \rho_j + 2b_0}$.
\item
Otherwise, consider the double root $\rho$. If the \textit{auxiliary equation} $a_1 \rho + b_1 = 0$ is not satisfied, the solutions satisfy $y_j(n) \sim \rho^n e^{(-1)^j \beta \sqrt{n}} n^\alpha \sum_{s = 0}^\infty (-1)^{js} \frac{c_s}{n^{s / 2}}$ where $\alpha = \frac{1}{4} + \frac{b_1}{2b_0}$ and $\beta = 2\sqrt{\frac{a_0a_1 - 2b_1}{2b_0}}$.
\item
Otherwise, consider the roots $\alpha_1$ and $\alpha_2$ of the \textit{indicial equation} $\alpha(\alpha - 1)\rho^2 + (a_1 \alpha + a_2)\rho + b_2 = 0$, ordered according to $\Re \alpha_2 \geq \Re \alpha_1$. If $\alpha_2 - \alpha_1 \notin \mathbb{Z}_{\geq 0}$, the solutions satisfy $y_j(n) \sim \rho^n n^{\alpha_j} \sum_{s = 0}^\infty \frac{c_{s, j}}{n^s}$.
\item
Otherwise, let $m = \alpha_2 - \alpha_1$. The asymptotic expansion for the first solution is unchanged from the previous case but for the second solution we must use $y_2(n) \sim \rho^n n^{\alpha_2} \left ( \sum_{s = 0}^\infty \frac{d_s}{n^s} - \frac{d_m}{n^m} \right ) + c\log(n)y_1(n)$.
\end{enumerate}
\end{theorem}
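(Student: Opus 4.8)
The plan is to treat (\ref{diffeq1}) as a formal-asymptotics (WKB-type) problem for difference equations: first pin down the allowed exponential rates and sub-exponential corrections by the method of dominant balance, then invoke Birkhoff's existence theorem to promote each formal series to a genuine solution. I would begin by stripping off the exponential scale via $y(n) = \rho^n w(n)$, which turns (\ref{diffeq1}) into $\rho^2 w(n+2) + a(n)\rho w(n+1) + b(n)w(n) = 0$. Since a slowly varying $w$ obeys $w(n+1)/w(n), w(n+2)/w(n) \to 1$, the leading-order solvability as $n \to \infty$ forces $\rho^2 + a_0\rho + b_0 = 0$, so the characteristic equation emerges automatically and fixes the two candidate rates.

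For the generic case of distinct roots, I would posit $w(n) = n^{\alpha}\sum_{s \ge 0} c_s n^{-s}$, substitute, and expand $(n+1)^{\alpha - s}$ and $(n+2)^{\alpha - s}$ binomially together with the series for $a(n)$ and $b(n)$, matching powers of $n$. The $n^\alpha$ order reproduces the characteristic equation; the $n^{\alpha-1}$ order is linear in $\alpha$, and after eliminating $\rho^2$ through the characteristic equation it collapses to exactly $\alpha_j = (a_1\rho_j + b_1)/(a_0\rho_j + 2b_0)$. Each subsequent order yields a triangular recursion for the $c_s$ that is uniquely solvable precisely because the two rates are distinct, so the relevant coefficient never vanishes.

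The three degenerate cases are then diagnosed by watching when this procedure fails. At a double root $\rho = -a_0/2$ one checks that $a_0\rho + 2b_0 = 0$, so the formula for $\alpha$ develops a vanishing denominator. If the numerator $a_1\rho + b_1 \neq 0$ (the auxiliary equation fails) the $n^{\alpha-1}$ matching is inconsistent, which signals that the true scale is $\sqrt{n}$: I would refine the ansatz to $\rho^n e^{\beta\sqrt{n}} n^\alpha \sum_s c_s n^{-s/2}$, expand $e^{\beta\sqrt{n+j}}$ through $\sqrt{1+j/n}$ to generate half-integer powers, and fix $\beta$ from the leading corrected balance, the two signs giving the two solutions. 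If instead $a_1\rho + b_1 = 0$ the obstruction is absent and $\alpha$ is determined one order later by the indicial equation $\alpha(\alpha-1)\rho^2 + (a_1\alpha + a_2)\rho + b_2 = 0$; when its roots differ by $m \in \mathbb{Z}_{\ge 0}$ the recursion for $c_s$ becomes singular at $s = m$, exactly as in the Frobenius method for ODEs, and the resonance is resolved by admitting a $\log n$ term in the second solution with the compensating $-d_m/n^m$ subtraction.

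The combinatorial matching above is routine bookkeeping; the genuinely hard part is analytic, namely proving that each \emph{formal} series actually represents a \emph{true} solution of (\ref{diffeq1}) with the stated asymptotics, and that the two solutions so constructed are linearly independent. This is the content of the Birkhoff-Trjitzinsky existence theorem, which I would import from the cited reference rather than reprove; its argument proceeds by a transfer-matrix or fixed-point estimate (equivalently a Borel-type summation) showing that the remainder after truncating the formal series is controlled by the next term. The subdominant branch of case 4, where ordinary asymptotic dominance does not by itself isolate the solution carrying the logarithm, is where the argument is most delicate.
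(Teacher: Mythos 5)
There is nothing in the paper to compare against here: the theorem is presented explicitly as a summary of known results from Birkhoff--Trjitzinsky theory and is simply cited from \cite{w04}, with no proof given. Your sketch is a correct reconstruction of how the classical argument goes. The formal part checks out --- for instance, in the distinct-root case the $n^{\alpha-1}$ matching gives $\alpha(2\rho^2 + a_0\rho) + a_1\rho + b_1 = 0$, and substituting $\rho^2 = -a_0\rho - b_0$ turns $2\rho^2 + a_0\rho$ into $-(a_0\rho + 2b_0)$, reproducing the stated $\alpha_j$; likewise $a_0\rho + 2b_0 = 0$ at a double root, so your diagnosis of when each degenerate case kicks in (vanishing denominator, then vanishing numerator, then resonance of indicial roots) is the right mechanism. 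You are also right that the substantive analytic content --- that each formal series is the asymptotic expansion of a genuine solution, and that the two solutions are independent --- is the Birkhoff--Trjitzinsky existence theorem itself, which you defer to the literature exactly as the paper does. So your proposal is, if anything, more explicit than the paper's treatment; the only caveat is that it remains a proof \emph{modulo} the existence theorem, which is the genuinely hard part and is not reproved by either you or the author.
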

Taking (\ref{diffeq1}) to be the recurrence relation for Wilson polynomials, we find the roots $\rho = 1$, $\alpha_1 = -2(a + x)$ and $\alpha_2 = -2(a - x)$ in the third case of the theorem. Each coefficient in (\ref{wilson-poly}) can then be written as a linear combination of two functions asymptotic to power laws. As seen in Table \ref{rates}, all of them decay to zero. When comparing to the results of \cite{prer12, ry15}, one must remember that $c^{\sigma\sigma(1, 1)\sigma\sigma}_{2n}$ and $c^{\sigma\sigma(1, 3)\sigma\sigma}_{2n}$ include many squared OPE coefficients due to the increasing amount of degeneracy at each level of a Verma module.
\begin{table}[h]
\centering
\begin{tabular}{l|l}
Coefficient & Leading rates \\
\hline
$c^{\sigma\sigma(1, 1)\sigma\sigma}_{2n}$ & $n^{-\frac{8 + 2\Delta_\sigma}{3}}$ and $n^{2\Delta_\sigma - 2}$ \\
$c^{\sigma\sigma(1, 3)\sigma\sigma}_{2n}$ & $n^{-\frac{1 - 2\Delta_\sigma}{3}}$ and $n^{-2\Delta_\sigma - 1}$
\end{tabular}
\caption{Decay rates of the fundamental solutions that comprise two of our main results. We have not included the prefactors in (\ref{wilson-poly}). These will make the convergence much faster, namely $(1 / 16)^n$, which can be predicted from the growth rate of $K_{2n}(1)$.}
\label{rates}
\end{table}

Our main claim about $c^{\sigma\sigma(1, 1)\sigma\sigma}_{2n}$ and $c^{\sigma\sigma(1, 3)\sigma\sigma}_{2n}$ --- that they are positive for finite $n$ --- cannot be proven with asymptotics. Instead, we will use a theorem from \cite{xy11} which bounds the ratio between neighbouring terms in a sequence.
\begin{theorem}
Let $x(n)$ be a solution of
\begin{equation}
x(n) \geq \frac{a(n)}{b(n)} x(n - 1) - \frac{c(n)}{d(n)} x(n - 2) \label{diffeq2}
\end{equation}
where $a(n)$, $b(n)$, $c(n)$ and $d(n)$ are degree-$k$ polynomials with positive leading terms. Also define $f(n) = a(n + 1) d(n + 1) - \frac{2b_k}{a_k} b(n + 1) c(n + 1) - \frac{a_k}{2b_k} b(n + 1) d(n + 1)$. Finally, let $m$ be an integer large enough to guarantee that $b(n)$, $c(n)$, $d(n)$ and $f(n)$ have positive values for $n \geq m$. If $\frac{x(m)}{x(m - 1)} > \frac{a_k}{2b_k}$ then $\frac{x(n)}{x(n - 1)} > \frac{a_k}{2b_k}$ for $n \geq m$.
\end{theorem}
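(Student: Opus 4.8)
The plan is to prove the claim by induction on $n$, tracking the ratio $r(n) \equiv x(n)/x(n-1)$ and the threshold $L \equiv \frac{a_k}{2b_k}$. First I would note that, since the initial term $x(m-1)$ is positive (as it is in our application to (\ref{wilson-poly})) and $b(n),c(n),d(n)$ are positive for $n \geq m$, the hypothesis $r(m) > L > 0$ gives $x(m) > 0$; more generally, once $r(n) > L > 0$ is known, the relation $x(n) = r(n)\,x(n-1)$ propagates positivity along the whole sequence. This is what makes dividing the defining inequality (\ref{diffeq2}) by $x(n-1) > 0$ legitimate and direction-preserving, turning it into the one-step bound
\begin{equation}
r(n) \geq \frac{a(n)}{b(n)} - \frac{c(n)}{d(n)}\,\frac{1}{r(n-1)} \; .
\end{equation}

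The inductive step is where $L$ and the auxiliary polynomial $f$ do their work. Assuming $r(n-1) > L > 0$, I would use $\frac{1}{r(n-1)} < \frac{1}{L}$ together with $\frac{c(n)}{d(n)} > 0$ to eliminate the dependence on the unknown previous ratio, yielding the sequence-independent estimate
\begin{equation}
r(n) > \frac{a(n)}{b(n)} - \frac{c(n)}{d(n)\,L} \; .
\end{equation}
The crux is then the algebraic identity, obtained by substituting $L = \frac{a_k}{2b_k}$ and clearing denominators over $b(n)d(n)$,
\begin{equation}
\frac{a(n)}{b(n)} - \frac{c(n)}{d(n)\,L} - L = \frac{f(n-1)}{b(n)\,d(n)} \; ,
\end{equation}
in which one recognizes exactly the combination defining $f$, shifted by one index. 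Because $n \geq m+1$ in the step, we have $n-1 \geq m$, so $f(n-1) > 0$ by hypothesis, while $b(n)d(n) > 0$; the right-hand side is therefore strictly positive, giving $r(n) > L$. The base case $r(m) > L$ is the assumption, so the induction closes and $r(n) > L$ holds for all $n \geq m$.

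I expect bookkeeping rather than any deep difficulty to be the main obstacle. The off-by-one in the index of $f$—needing $f(n-1) > 0$ to pass from $r(n-1)$ to $r(n)$, which is precisely why the hypothesis demands $f(n) > 0$ starting at $n = m$—must be tracked carefully, as must the simultaneous maintenance of positivity of the $x(n)$ so that every division is valid and every inequality keeps its direction. The one genuinely computational point is verifying the displayed identity, which is immediate once one observes that $f$ is engineered for this purpose: the value $\frac{a_k}{2b_k}$ is asymptotically the midpoint of the two fixed points of the one-step map $r \mapsto \frac{a(n)}{b(n)} - \frac{c(n)}{d(n)\,r}$, and positivity of $f$ is exactly the finite-$n$ condition guaranteeing the iterate cannot drop below it. Everything else reduces to an elementary monotonicity argument.
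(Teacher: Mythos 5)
The paper does not actually prove this theorem; it is quoted from \cite{xy11} and used as a black box, so there is no in-paper argument to compare against. Your induction is correct and supplies the missing proof: dividing (\ref{diffeq2}) by $x(n-1)>0$, bounding $1/r(n-1)<2b_k/a_k$, and clearing denominators over $b(n)d(n)$ yields exactly $f(n-1)/\bigl(b(n)d(n)\bigr)$, and the index shift is precisely compensated by the hypothesis that $f$ is positive from $n=m$ onward, so the step $n\geq m+1$ closes. The one point worth flagging is that positivity of $x(m-1)$ is not among the stated hypotheses even though the argument (and the theorem itself) fails without it --- with $x(m-1)<0$ the ratio condition forces $x(m)<0$ and the division flips the inequality; you correctly identify this as an implicit assumption that holds in the paper's application to the sequences $n^2 y_j(n)$ built from (\ref{wilson-poly}).
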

\begin{proposition}
The sequences $y_1(n) = P_n \left ( \frac{7 - 2\Delta_\sigma}{6}, \frac{4 - 2\Delta_\sigma}{6}, -\frac{1 - 2\Delta_\sigma}{6}, \frac{5 + 2\Delta_\sigma}{6} ; \frac{1 + 4\Delta_\sigma}{6} \right )$ and $y_2(n) = P_n \left ( \frac{1 + \Delta_\sigma}{3}, \frac{5 + 2\Delta_\sigma}{6}, -\frac{1 - 2\Delta_\sigma}{6}, -\frac{1 - 2\Delta_\sigma}{6} ; \frac{1 + 4\Delta_\sigma}{6} \right )$ of Wilson polynomials are positive for $\frac{1}{8} \leq \Delta_\sigma \leq \frac{1}{2}$.
\end{proposition}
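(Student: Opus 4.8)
The plan is to regard $y_1(n)$ and $y_2(n)$ as solutions of the Wilson recurrence (\ref{wilson-recursion}) at the parameters read off from (\ref{wilson-poly}), and to control their sign with the ratio estimate of Theorem 2. First I would solve (\ref{wilson-recursion}) for $P_n(x)$, giving $P_n = \frac{(x^2 - a^2) + A_{n-1} + B_{n-1}}{A_{n-1}} P_{n-1} - \frac{B_{n-1}}{A_{n-1}} P_{n-2}$, and clear the common denominator of $A_{n-1}$ and $B_{n-1}$ so that the two coefficients become ratios $a(n)/b(n)$ and $c(n)/d(n)$ of degree-$5$ polynomials with positive leading terms, as Theorem 2 demands. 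Because $A_n, B_n \sim n^2/4$, one reads off $a(n)/b(n) \to 2$ and $c(n)/d(n) \to 1$, so the characteristic equation is $(\rho-1)^2 = 0$ and the threshold of Theorem 2 is $\tfrac{a_k}{2b_k} = 1$.

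I would then proceed in three steps. First, apply Theorem 1 to fix the asymptotics: this is the degenerate third case, with double root $\rho = 1$ and indicial exponents $\alpha_1 = -2(a+x)$ and $\alpha_2 = -2(a-x)$, both negative throughout $\tfrac18 \le \Delta_\sigma \le \tfrac12$, so each $P_n$ is a decaying combination of two power laws and is eventually positive, with $P_n/P_{n-1} \to 1$. Second, verify the hypotheses of Theorem 2 --- that $b(n)$, $c(n)$, $d(n)$ and the auxiliary combination $f(n)$ are positive for $n$ beyond some explicit $m$. Third, check the finitely many initial terms $P_0, \dots, P_{m-1}$ by hand; each is a fixed rational function of $\Delta_\sigma$, so its positivity reduces to a polynomial inequality to be confirmed on $[\tfrac18, \tfrac12]$.

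The subtlety I expect to dominate the work is that this recurrence sits squarely in the degenerate double-root regime, where the threshold $\tfrac{a_k}{2b_k} = 1$ coincides with the limiting ratio. With $a_k = 4$ and $b_k = 2$ the leading term of $f(n) = a(n+1)d(n+1) - b(n+1)c(n+1) - b(n+1)d(n+1)$ cancels identically, so the sign of $f$ is decided only at subleading order, precisely where the constant $x^2 - a^2 < 0$ and the individual parameters of $y_1$ and $y_2$ finally enter. Moreover, since the dominant exponent $\alpha_2$ is negative, the ratio $P_n/P_{n-1}$ approaches $1$ from below, so one cannot naively hope to push it strictly above the threshold. The real content is therefore to establish the subleading positivity of $f$ uniformly in $\Delta_\sigma$ and to convert it into the statement that the ratio increases toward $1$ while remaining bounded below by its initial positive value; this marginal sign analysis, carried out uniformly over the closed interval, is where I anticipate the genuine difficulty. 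Once it is in place, Theorem 2 rules out any sign change past $n = m$, and together with the direct check of the low-order terms it delivers positivity of both $y_1(n)$ and $y_2(n)$.
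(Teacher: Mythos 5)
Your proposal correctly identifies the structure of the problem --- the Wilson recursion sits in the degenerate double-root regime with $\rho = 1$, the threshold of Theorem 2 is $\tfrac{a_k}{2b_k} = 1$, and the leading term of $f(n)$ cancels --- but the plan of applying Theorem 2 \emph{directly} to $P_n$ cannot be completed. Both sequences decay to zero (Table \ref{rates}: the dominant exponents are $2\Delta_\sigma - 2 < 0$ and $-1$), so the ratio $P_n/P_{n-1}$ tends to $1$ from below and the hypothesis $\frac{x(m)}{x(m-1)} > \frac{a_k}{2b_k} = 1$ of Theorem 2 is never satisfiable; indeed, if it were, the conclusion would force $P_n$ to increase for all $n \geq m$, contradicting the decay. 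You sense this obstruction ("one cannot naively hope to push it strictly above the threshold") but the proposed remedy --- reinterpreting the conclusion as "the ratio remains bounded below by its initial positive value" --- is not what Theorem 2 asserts, and you do not prove the variant that would be needed. As written, the argument has a genuine gap at exactly the step you flag as the real content.

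The missing idea, which is the one step the paper's proof actually turns on, is a rescaling: work with $x_j(n) = n^2\, y_j(n)$ rather than $y_j(n)$ itself. The factor $n^2$ is chosen from the asymptotics so that the dominant behaviours become $n^{2\Delta_\sigma}$ and $n^{1}$, i.e.\ \emph{increasing} sequences whose ratio approaches $1$ from above. The rescaled sequence satisfies a three-term recursion of the same form (with $b$, $c$, $d$ picking up factors like $(n-1)^2$ and $(n-2)^2$), still with $a_k = 2b_k$, but now the modified $f(n)$ is eventually positive and the initial ratio condition can be met. One then verifies $f(n) > 0$ beyond an explicit crossover --- which for $y_1$ degenerates as $\Delta_\sigma \to 0$ (the leading surviving term of $f$ is proportional to $\Delta_\sigma(\Delta_\sigma+1)$) but equals $n = 44$ at the worst case $\Delta_\sigma = \tfrac18$, and $n = 3$ for $y_2$ --- and checks the finitely many initial terms by hand, exactly as in the last step of your outline. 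Without the rescaling, however, that outline does not go through.
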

\begin{proof}
The theorem above is more effective at identifying increasing sequences than ruling out changes of sign directly. Therefore, we will work with $x_1(n) = n^2 y_1(n)$ and $x_2(n) = n y_2(n)$ --- expressions that have been guided by the asymptotics in Table \ref{rates}.

Looking at the more difficult case first, $x_1(n)$ satisfies (\ref{diffeq2}) with
\begin{eqnarray}
b(n) &=& (n - 1)^2(n + 1)(2n + 1)(4n - 3)(6n - 4\Delta_\sigma + 5) \nonumber \\
c(n) &=& n(n - 1)(2n - 1)(2n - 3)(4n + 1)(3n + 2\Delta_\sigma - 4) \nonumber \\
d(n) &=& (n - 2)^2(n + 1)(2n + 1)(4n - 3)(6n - 4\Delta_\sigma + 5) \; . \label{bcd-seq1}
\end{eqnarray}
These are clearly positive for $n > 2$. Also, by writing out the polynomial for $a(n)$, we find a leading coefficient of $a_6 = 2b_6 = 96$. It remains to check $f(n)$ or equivalently $(n + 2)^{-1}(2n + 3)^{-1}(4n + 1)^{-1}(6n - 4\Delta_\sigma + 11)^{-1} f(n)$. This is a sixth degree polynomial in which $64 \Delta_\sigma (\Delta_\sigma + 1) n^6$ is followed immediately by negative coefficients. From this we see that the critical value of $n$, beyond which $f(n) > 0$, increases without bound as $\Delta_\sigma \rightarrow 0$. This reflects the fact that the $n^2$ we introduced is only able to overpower $n^{2\Delta_\sigma - 2}$ for strictly positive $\Delta_\sigma$. Fortunately, the smallest value of $\Delta_\sigma$ that we consider is $\frac{1}{8}$ leading to a critical value of $n = 44$. Since $x_1(44) > x_1(43)$, we establish positivity of the entire sequence $x_1(n)$ by checking its first 44 terms.

Things will be easier for $x_2(n)$ which satisfies (\ref{diffeq2}) for
\begin{eqnarray}
b(n) &=& (n - 1)(6n + 4\Delta_\sigma - 5)^2(6n + 4\Delta_\sigma + 1)(6n + 8\Delta_\sigma - 7)(12n + 8\Delta_\sigma - 19) \nonumber \\
c(n) &=& 48 n(n - 1)(3n + 2\Delta_\sigma - 7)(3n + 2\Delta_\sigma - 4)^2(12n + 8\Delta_\sigma - 7) \label{bcd-seq2} \\
d(n) &=& (n - 2)(6n + 4\Delta_\sigma - 5)^2(6n + 4\Delta_\sigma + 1)(6n + 8\Delta_\sigma - 7)(12n + 8\Delta_\sigma - 19) \; . \nonumber
\end{eqnarray}
Additionally, we find $a_6 = 2b_6 = 31104$ and an $f(n)$ proportional to $(6n + 4\Delta_\sigma + 1)^2(6n + 4\Delta_\sigma + 7)(6n + 8\Delta_\sigma - 1)(12n + 8\Delta_\sigma - 7)$. The non-trivial factor of $f(n)$ begins with $20736 \Delta_\sigma (\Delta_\sigma + 1) n^5$ which is strictly positive in the considered range. From looking at the next coefficients, we find that $f(n)$ is positive for $n > 3$ just like the polynomials above. Checking that $x_2(3) > x_2(2) > x_2(1) > 0$, positivity of $x_2(n)$ has been proven as well.
\end{proof}

Although we do not have closed-form solutions for them, it is possible that $c^{\epsilon\epsilon(1, 1)\epsilon\epsilon}_{2n}$, $c^{\epsilon\epsilon(1, 3)\epsilon\epsilon}_{2n}$ and $c^{\epsilon\epsilon(1, 5)\epsilon\epsilon}_{2n}$ are positive as well. The main hint of this, which we now prove, is that the Virasoro blocks containing them have positive Taylor coefficients around $z = 0$. The fact that this is a necessary condition follows trivially from expanding $g(z) = \sum_{n = 0}^\infty c_{2n} K_{r + 2n}(z)$. The analogous statement in higher dimensions was proven in \cite{hjk16}.
\begin{proposition}
Let $b_k$ be a sequence starting at $b_0 = 1$ with the rest of the terms given by (\ref{4epsilon-recursion}). If $\Delta_\epsilon > 1$, the sequence monotonically increases.
\end{proposition}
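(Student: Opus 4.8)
My plan is to read (\ref{4epsilon-recursion}) as a four-term recurrence $A_k b_{k+1} + B_k b_k + C_k b_{k-1} + D_k b_{k-2} = 0$, where each coefficient is a degree-three polynomial in $\kappa \equiv k + r$ with leading behaviour $A_k \sim -4k^3$, $B_k \sim 12k^3$, $C_k \sim -12k^3$, $D_k \sim 4k^3$. The characteristic polynomial is thus $(\rho-1)^3$, so all three fundamental solutions grow or decay like powers of $k$ times $\rho^k = 1$. A Frobenius analysis of the BPZ equation (\ref{4epsilon-ode}) at the nearest singularity $z = 1$ pins down the rate: the indicial exponents there are $\{-\Delta_\epsilon,\ -\tfrac12\Delta_\epsilon,\ \tfrac12(\Delta_\epsilon+2)\}$, and since the block generically inherits the dominant $(1-z)^{-\Delta_\epsilon}$ piece one gets $b_k \sim k^{\Delta_\epsilon - 1}/\Gamma(\Delta_\epsilon)$ (the third case of Theorem 1 applies, as the exponent spacings are generically non-integer). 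This is positive and increasing exactly when $\Delta_\epsilon > 1$, which is the content of the proposition at the level of asymptotics; the work is to make it hold for every $k$.

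To turn this trend into a finite-$k$ statement I would pass to the first differences $c_k \equiv b_k - b_{k-1}$ and prove $c_k \ge 0$ by induction. Substituting $b_{k+1} = b_k + c_{k+1}$, $b_{k-1} = b_k - c_k$, $b_{k-2} = b_k - c_k - c_{k-1}$ collapses the recurrence to
\[
A_k\, c_{k+1} = -S\, b_k + (C_k + D_k)\, c_k + D_k\, c_{k-1},
\]
with $S \equiv A_k + B_k + C_k + D_k$. The useful observation is that, since every coefficient depends on $k$ only through $\kappa = k+r$, the sum $S$ is a polynomial identity in $\kappa$; its cubic, quadratic and linear parts all cancel, leaving the root-independent constant $S = -(\Delta_\epsilon - 1)(\Delta_\epsilon - 2)(\Delta_\epsilon + 4)$. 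The inductive step, granting $b_j > 0$ and $c_j \ge 0$ for $j \le k$ and using $A_k < 0$ past a small threshold, is then equivalent to the sign statement $(C_k + D_k)\, c_k + D_k\, c_{k-1} \le S\, b_k$.

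The sign of $S$ flips at $\Delta_\epsilon = 2$, which is precisely where the differences switch from decreasing (for $1 < \Delta_\epsilon < 2$, where $c_k \sim k^{\Delta_\epsilon - 2}$ falls to zero) to increasing (for $\Delta_\epsilon > 2$). In the regime $\Delta_\epsilon > 2$ the term $-S b_k$ sits on the unfavourable side but is only of order $k^{\Delta_\epsilon - 1}$, whereas $(C_k + D_k) c_k \sim -8 k^3 c_k$ is of order $k^{\Delta_\epsilon + 1}$ and dominates; for $1 < \Delta_\epsilon < 2$ the term $-S b_k$ helps and the balance reduces to $(C_k + D_k) c_k + D_k c_{k-1} \approx 4 k^3 (c_{k-1} - 2 c_k)$. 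In both cases closing the induction needs one auxiliary estimate carried alongside — a ratio bound of the type $c_{k-1} < 2 c_k$, true asymptotically because $c_{k-1}/c_k \to 1$ — after which finitely many base cases $b_0 \le b_1 \le \dots \le b_{k_\star}$ are verified explicitly, exactly as the first $44$ terms were checked in the Wilson-polynomial proof. The lowest coefficients are where any initial irregularity lives and must be inspected directly.

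The hard part will be the leading-order cancellation. Because the $k^3$ contributions to $A_k, B_k, C_k, D_k$ sum to zero — the triple root at $\rho = 1$ — monotonicity is a genuinely subleading effect, and any attempt to discard one of the four terms to reach a three-term recurrence amenable to the lemma of \cite{xy11} is fatally lossy: dropping the positive $D_k b_{k-2}$ term, for instance, produces a bound whose critical ratio is $\tfrac32$ and therefore predicts exponential growth $b_k \gtrsim (3/2)^k$, flatly contradicting the true polynomial rate. All four terms must be kept together, and the ratio estimate on $c_{k-1}/c_k$ must be sharp enough to survive near $\Delta_\epsilon \to 1^+$ and $\Delta_\epsilon = 2$, where the balance is tightest and the required $k_\star$ is largest.
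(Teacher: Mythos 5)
Your algebraic setup is correct and the structural observations are genuinely nice: the coefficient sum $S = A_k + B_k + C_k + D_k$ really does collapse to the $K$-independent constant $-(\Delta_\epsilon-1)(\Delta_\epsilon-2)(\Delta_\epsilon+4)$ (I verified the cancellation of the $K^3$, $K^2$ and $K^1$ parts), and the asymptotics $b_k \sim k^{\Delta_\epsilon - 1}$ from the exponent $-\Delta_\epsilon$ at $z=1$ are right. But the proof does not close as written, and the gap is exactly where you flag "the hard part." Your induction requires propagating a ratio bound $c_{k-1} < \lambda\, c_k$ alongside $c_k \geq 0$. At leading order the difference recursion is $c_{k+1} \approx 2c_k - c_{k-1}$, so the hypothesis $c_{k-1} \leq \lambda c_k$ yields only $c_{k+1} \geq (2-\lambda)c_k$, i.e.\ $c_k/c_{k+1} \leq 1/(2-\lambda)$; self-propagation demands $1/(2-\lambda) \leq \lambda$, whose only solution is $\lambda = 1$. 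Since $c_k$ is genuinely decreasing for $1 < \Delta_\epsilon < 2$ (it behaves like $k^{\Delta_\epsilon - 2}$), the bound $\lambda = 1$ is false and any fixed $\lambda > 1$ is not reproduced by the leading-order map. Whether the induction closes therefore rests entirely on the subleading parts of $A_k, B_k, C_k, D_k$ and on the $-S b_k$ term, and you have not exhibited an auxiliary statement that these actually propagate. The candidate $c_{k-1} < 2c_k$ is certainly not it. Until that estimate is constructed and verified, this is a plausible plan rather than a proof.

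Your final paragraph also dismisses the route the paper actually takes, and the dismissal is too quick. The paper does reduce to a three-term recursion treatable by the ratio lemma of \cite{xy11}, but not by discarding $D_k b_{k-2}$: it bounds the combination $C_k b_{k-1} + D_k b_{k-2}$ by an ansatz of the form $[4K^2 - 12(\Delta_\epsilon+1)K + M]K\,b_k + 4K^3 b_{k-1}$, whose validity is itself carried through the induction and pins $M$ to the window $10\Delta_\epsilon^2 + 20\Delta_\epsilon + 4 < M < 10\Delta_\epsilon^2 + 29\Delta_\epsilon + 4$. Because the ansatz deliberately peels off a $4K^3 b_{k-1}$ piece, the resulting three-term inequality has leading coefficients proportional to $(4, 8, -4)K^3$, so the critical ratio $a_k/(2b_k)$ of the lemma is exactly $1$ --- precisely the monotonicity statement --- not the fatal $3/2$ you predict for the naive truncation. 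In other words, the subleading bookkeeping that your approach leaves open is what the paper's auxiliary ansatz and the positivity check on $f(n)$ in the lemma are doing. Your differences-based route could well be made to work (the identity for $S$ and the sign change at $\Delta_\epsilon = 2$ suggest it is the natural frame), but it needs its own version of that bookkeeping, supplied explicitly.
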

\begin{proof}
Defining $K = k + r$ for brevity, we have
\begin{eqnarray}
&& [4K^3 + 8(1 - \Delta_\epsilon)K^2 + (3\Delta_\epsilon^2 - 14\Delta_\epsilon + 4)K + 3\Delta_\epsilon(\Delta_\epsilon - 2)]b_{k + 1} = \label{rel1} \\
&& [12K^2 - 4(5\Delta_\epsilon + 1)K + 6\Delta_\epsilon^2]Kb_k \nonumber \\
&& - [12K^3 - 16(\Delta_\epsilon + 2)K^2 + 2(\Delta_\epsilon^2 + 12\Delta_\epsilon + 14)K + 2(\Delta_\epsilon - 2)(\Delta_\epsilon + 1)(\Delta_\epsilon + 2)]b_{k - 1} \nonumber \\
&& + [4K^3 - 4(\Delta_\epsilon + 5)K^2 - (\Delta_\epsilon^2 - 10\Delta_\epsilon - 32)K + (\Delta_\epsilon - 2)(\Delta_\epsilon + 2)(\Delta_\epsilon + 4)]b_{k - 2} \; . \nonumber
\end{eqnarray}
Although this has four terms, we will convert it to a simpler recursion having only three. We do this by assuming that the piece with $b_{k - 1}$ and $b_{k - 2}$ is bounded by some function of $b_k$ and $b_{k - 1}$. Our ansatz for this function is $[4K^2 - 12(\Delta_\epsilon + 1)K + M]Kb_k + 4K^3b_{k - 1}$. In other words, we need to show that
\begin{eqnarray}
&& [4K^2 - 12(\Delta_\epsilon + 1)K + M]Kb_k > \label{assumption} \\
&& [8K^3 - 16(\Delta_\epsilon + 2)K^2 + 2(\Delta_\epsilon^2 + 12\Delta_\epsilon + 14)K + 2(\Delta_\epsilon - 2)(\Delta_\epsilon + 1)(\Delta_\epsilon + 2)]b_{k - 1} \nonumber \\
&& - [4K^3 - 4(\Delta_\epsilon + 5)K^2 - (\Delta_\epsilon^2 - 10\Delta_\epsilon - 32)K + (\Delta_\epsilon - 2)(\Delta_\epsilon + 2)(\Delta_\epsilon + 4)]b_{k - 2} \; . \nonumber
\end{eqnarray}
Using (\ref{assumption}) in (\ref{rel1}), we find an expression of the form $R_+(K) b_{k + 1} > R_0(K) b_k + R_-(K) b_{k - 1}$. We will perform a rescaling to instead write this as
\begin{eqnarray}
S_+(K) b_{k + 1} &>& S_0(K) b_k + S_-(K) b_{k - 1} \label{rel2} \\
S_i(K) &\equiv& [4(K + 1)^2 - 12(\Delta_\epsilon + 1)(K + 1) + M](K + 1) R_+(K)^{-1} R_i(K) \; . \nonumber
\end{eqnarray}
For our assumption to be true, the fractional coefficients in (\ref{rel2}) must exceed the $K \mapsto K + 1$ versions of the ones in (\ref{assumption}). These two conditions each give one side of an inequality for $M$. The result is $10\Delta_\epsilon^2 + 20\Delta_\epsilon + 4 < M < 10\Delta_\epsilon^2 + 29\Delta_\epsilon + 4$ which may be satisfied for any $\Delta_\epsilon > 0$. Having chosen $M$ appropriately, we have moved the problem into the domain of the theorem above. The monotonicity proof for $b_k$ is now identical to the one for $n^2 c^{\sigma\sigma(1, 1)\sigma\sigma}_{2n}$ and $n^2 c^{\sigma\sigma(1, 3)\sigma\sigma}_{2n}$.
\end{proof}

\section{Implementation details}
\subsection{The semidefinite program}
The conformal bootstrap is any technique for demanding that crossing symmetry and unitarity hold for the four-point function:
\begin{equation}
\left < \phi_i(x_1) \phi_j(x_2) \phi_k(x_3) \phi_l(x_4) \right > = \left ( \frac{|x_{24}|}{|x_{14}|} \right )^{\Delta_{ij}} \left ( \frac{|x_{14}|}{|x_{13}|} \right )^{\Delta_{kl}} \frac{\sum_{\mathcal{O}} \lambda_{ij\mathcal{O}} \lambda_{kl \mathcal{O}} g^{\Delta_{ij}, \Delta_{kl}}_{\mathcal{O}}(u, v)}{|x_{12}|^{\Delta_i + \Delta_j}|x_{34}|^{\Delta_k + \Delta_l}} \; . \label{4pt}
\end{equation}
The conformal blocks $g^{\Delta_{ij}, \Delta_{kl}}_{\mathcal{O}}(u, v)$ are functions of the cross-ratios $u = \frac{x_{12}^2 x_{34}^2}{x_{13}^2 x_{24}^2}$ and $v = \frac{x_{14}^2 x_{23}^2}{x_{13}^2 x_{24}^2}$. Invariance under $(1, i) \leftrightarrow (3, k)$, which relates two channels of crossing symmetry, leads to the following sum rule \cite{kps14}.
\begin{eqnarray}
&& \sum_{\mathcal{O}} \left [ \lambda_{ij\mathcal{O}} \lambda_{kl\mathcal{O}} F^{ij ; kl}_{\mp,\mathcal{O}}(u,v) \pm \lambda_{kj\mathcal{O}} \lambda_{il\mathcal{O}} F^{kj ; il}_{\mp,\mathcal{O}}(u,v) \right ] = 0 \label{rule} \\
&& F^{ij ; kl}_{\pm,\mathcal{O}} \equiv v^{\frac{\Delta_k + \Delta_j}{2}} g^{\Delta_{ij}, \Delta_{kl}}_{\mathcal{O}}(u, v) \pm u^{\frac{\Delta_k + \Delta_j}{2}} g^{\Delta_{ij}, \Delta_{kl}}_{\mathcal{O}}(v, u) \nonumber
\end{eqnarray}
To apply this rule, we choose an odd scalar $\sigma$ and an even scalar $\epsilon$ and let our external operators run over all admissible combinations of these. Using $\lambda_{\sigma\epsilon\mathcal{O}} = (-1)^\ell \lambda_{\epsilon\sigma\mathcal{O}}$, this yields
\begin{equation}
\sum_{\mathcal{O}, 2 | \ell} \left ( \lambda_{\sigma\sigma\mathcal{O}} \; \lambda_{\epsilon\epsilon\mathcal{O}} \right ) V_{+, \Delta, \ell} \left (
\begin{tabular}{c}
$\lambda_{\sigma\sigma\mathcal{O}}$ \\ $\lambda_{\epsilon\epsilon\mathcal{O}}$
\end{tabular}
\right ) + \sum_{\mathcal{O}} \lambda^2_{\sigma \epsilon \mathcal{O}} V_{-, \Delta, \ell} = 0 \label{mixed-rule}
\end{equation}
where
\begin{equation}
V_{+, \Delta, \ell} = \left [
\begin{tabular}{c}
$\left ( \begin{tabular}{cc} $F_{-, \Delta, \ell}^{\sigma \sigma ; \sigma \sigma}$ & $0$ \\ $0$ & $0$ \end{tabular} \right )$ \\
$\left ( \begin{tabular}{cc} $0$ & $0$ \\ $0$ & $F_{-, \Delta, \ell}^{\epsilon \epsilon ; \epsilon \epsilon}$ \end{tabular} \right )$ \\
$\left ( \begin{tabular}{cc} $0$ & $0$ \\ $0$ & $0$ \end{tabular} \right )$ \\
$\left ( \begin{tabular}{cc} $0$ & $\frac{1}{2} F_{-, \Delta, \ell}^{\sigma \sigma ; \epsilon \epsilon}$ \\ $\frac{1}{2} F_{-, \Delta, \ell}^{\sigma \sigma ; \epsilon \epsilon}$ & $0$ \end{tabular} \right )$ \\
$\left ( \begin{tabular}{cc} $0$ & $\frac{1}{2} F_{+, \Delta, \ell}^{\sigma \sigma ; \epsilon \epsilon}$ \\ $\frac{1}{2} F_{+, \Delta, \ell}^{\sigma \sigma ; \epsilon \epsilon}$ & $0$ \end{tabular} \right )$
\end{tabular}
\right ] \; , \; 
V_{-, \Delta, \ell} = \left [ \begin{tabular}{c} $0$ \\ $0$ \\ $F_{-, \Delta, \ell}^{\sigma \epsilon ; \sigma \epsilon}$ \\ $(-1)^\ell F_{-, \Delta, \ell}^{\epsilon \sigma ; \sigma \epsilon}$ \\ $-(-1)^\ell F_{+, \Delta, \ell}^{\epsilon \sigma ; \sigma \epsilon}$ \end{tabular} \right ] \; . \label{mixed-notation}
\end{equation}
In ruling out solutions to (\ref{mixed-rule}), which is a set of five functional equations, we must approximate each row as a finite-dimensional vector. The standard way to do this is to expand around the point $(u, v) = \left ( \frac{1}{4}, \frac{1}{4} \right )$. We may either take derivatives with respect to $z, \bar{z}$, defined by $u = |z|^2, v = |1 - z|^2$, or the diagonal / off-diagonal variables $a = z + \bar{z}, b = (z - \bar{z})^2$ \cite{ep12}. We choose $a, b$ and control the order of our derivatives $\frac{\partial^{m + n} g_{\Delta, \ell}}{\partial a^m \partial b^n}$ with two parameters $m_{\mathrm{max}}$ and $n_{\mathrm{max}}$:
\begin{eqnarray}
n &\in& \{ 0, \dots, n_{\mathrm{max}} \} \nonumber \\
m &\in& \{ 0, \dots, 2(n_{\mathrm{max}} - n) + m_{\mathrm{max}} \} \; . \label{mn-params}
\end{eqnarray}
Since half of the derivatives vanish when our conformal blocks are added or subtracted, the resulting number of components is
\begin{equation}
N = \lfloor (n_{\mathrm{max}} + 1)(m_{\mathrm{max}} + n_{\mathrm{max}} + 1) / 2 \rfloor \; . \label{mn-comps}
\end{equation}
The three-correlator plot in this work was obtained with $(m_{\mathrm{max}}, n_{\mathrm{max}}) = (3, 5)$. All one-correlator results, on the other hand, use $(m_{\mathrm{max}}, n_{\mathrm{max}}) = (5, 10)$. There are two additional parameters needed to turn (\ref{mixed-rule}) into a concrete bootstrapping problem. One is a cutoff on the number of spins, which we take to be $\ell_{\mathrm{max}} = 30$. The other is the accuracy parameter for a single conformal block, which we take to be $k_{\mathrm{max}} = 40$. This controls how many poles from the triple series
\begin{equation}
\begin{tabular}{ll}
$\Delta_1(\ell) = 1 - \ell - k$ & $k = 1, 2, \dots$ \\
$\Delta_2(\ell) = \frac{d}{2} - k$ & $k = 1, 2, \dots$ \\
$\Delta_3(\ell) = d - 1 + \ell - k$ & $k = 1, 2, \dots, \ell$
\end{tabular} \label{block-poles}
\end{equation}
appear in the function
\begin{eqnarray}
\chi_\ell(\Delta) &=& \frac{r_*^\Delta}{\Pi_i (\Delta - \Delta_i(\ell))} \label{block-prefactor} \\
r_* &\equiv& 3 - 2\sqrt{2} \; . \nonumber
\end{eqnarray}
As explained in \cite{hr13, hor13, kps14}, there are algorithms for explicitly constructing each conformal block derivative as a rational approximation:
\begin{equation}
\frac{\partial^{m + n}}{\partial a^m \partial b^n} F^{ij ; kl}_{\pm, \Delta, \ell}(a = 1, b = 0) = \chi_\ell(\Delta) P^{ij ; kl ; mn}_{\pm, \ell}(\Delta) \; . \label{block-approx}
\end{equation}
Here, $P^{ij ; kl ; mn}_{\pm, \ell}$ is a polynomial with the same degree as $\chi_\ell$ for $m = n = 0$. Its degree goes up by one whenever the derivative order is increased. Our task of inputting $(k_{\mathrm{max}}, \ell_{\mathrm{max}}, m_{\mathrm{max}}, n_{\mathrm{max}})$ and computing a table suitable for approximating (\ref{mixed-rule}) is accomplished with the program \texttt{PyCFTBoot} \cite{b16}.

With the truncations described above, problems of this form are tractable with semidefinite programming \cite{s15}. In the dual formulation, one wishes to find a linear functional $\textbf{y}$ which sends each term of (\ref{mixed-rule}) to a positive-definite matrix, thereby certifying that no solution to crossing symmetry exists. For illustrative purposes, we consider a single correlator problem which allows us to drop the $ij ; kl$ and $\pm$ labels on $P^{ij ; kl ; mn}_{\pm, \ell}$. We will also drop $mn$ through our understanding that $\textbf{P}_\ell$ is a vector with components $P^i_\ell$. If we single out the contribution of the identity operator as $\textbf{n}$, we arrive at the polynomial matrix program (PMP) where we include an objective $\textbf{b}$ for generality.
\begin{eqnarray}
&& \mathrm{maximize} \; \textbf{b}^{\mathrm{T}} \textbf{y} \; \mathrm{over} \; \textbf{n}^{\mathrm{T}} \textbf{y} = 1 \nonumber \\
&& \mathrm{such \; that} \; \textbf{P}_\ell(\Delta)^{\mathrm{T}} \textbf{y} \geq 0 \; \mathrm{for \; all} \; \ell \leq \ell_{\mathrm{max}}, \Delta \geq \Delta_{\mathrm{min}} \label{pmp1}
\end{eqnarray}
After all, the crossing equation
\begin{equation}
\sum_{k, \ell} \lambda^2_{k, \ell} \textbf{P}_\ell(\Delta_k) = \textbf{n} \label{toy-rule}
\end{equation}
becomes a contradiction when $\textbf{y}$ solving the above conditions is applied to both sides. If we reshuffle each vector according to
\begin{eqnarray}
\tilde{P}^0_\ell &=& \frac{1}{n^0} P^0_\ell \nonumber \\
\tilde{P}^i_\ell &=& P^i_\ell - \frac{n^i}{n^0} P^0_{\ell} \; , \label{reshuffling}
\end{eqnarray}
dotting $\textbf{P}_\ell$ with a functional whose action on $\textbf{n}$ is 1 becomes the same as dotting $\tilde{\textbf{P}}_\ell$ with a functional whose leading component is 1. This is precisely the choice to work with crossing equations projectively as (\ref{toy-rule}) becomes
\begin{equation}
\sum_{k, \ell} \lambda^2_{k, \ell} \left [ \begin{tabular}{c} $\tilde{P}^0_\ell(\Delta_k)$ \\ $\tilde{P}^i_\ell(\Delta_k)$ \end{tabular} \right ] = \left [ \begin{tabular}{c} $1$ \\ $0$ \end{tabular} \right ] \label{reshuffled-toy-rule}
\end{equation}
after reshuffling both sides. Any spectrum satisfying the bottom row can automatically be made to satisfy the top row through a rescaling. From now on, we will denote the polynomial vector in the bottom row of (\ref{reshuffled-toy-rule}) by $\textbf{P}_\ell$ to rewrite the PMP.
\begin{eqnarray}
&& \mathrm{maximize} \; \textbf{b}^{\mathrm{T}} \textbf{y} \nonumber \\
&& \mathrm{such \; that} \; P^0_\ell(x) + \textbf{P}_\ell(x)^{\mathrm{T}} \textbf{y} \geq 0 \; \mathrm{for \; all} \; \ell \leq \ell_{\mathrm{max}}, x \geq 0 \label{pmp2}
\end{eqnarray}
Here, $x = \Delta - \Delta_{\mathrm{min}}$. To solve this type of problem efficiently, we use the program \texttt{SDPB} \cite{s15}. Because we will see an alternative choice shortly, we briefly review the process by which \texttt{SDPB} translates (\ref{pmp2}) into a semidefinite program (SDP).

Positivity of $P^0_\ell(x) + \textbf{P}_\ell(x)^{\mathrm{T}} \textbf{y}$ on the half-line is equivalent to the requirement that it be equal to
\begin{equation}
\mathrm{Tr} \left ( \left [ \begin{tabular}{cc} $\textbf{q}_\ell(x) \textbf{q}_\ell^{\mathrm{T}}(x)$ & $0$ \\ $0$ & $x\textbf{q}_\ell(x) \textbf{q}_\ell^{\mathrm{T}}(x)$ \end{tabular} \right ] Y_\ell \right ) \label{constraint1}
\end{equation}
where $Y_\ell$ is positive-definite and $\textbf{q}_\ell$ is a vector of orthogonal polynomials. We have abused notation slightly since the maximum power of $x$ that (\ref{constraint1}) needs to express may be even or odd. Because of this, the first $\textbf{q}_\ell$ might have one more component than the second $\textbf{q}_\ell$. It is sufficient to demand this equality on a set of sample points which we denote $x_k$. It is also possible to combine all $Y_{(k, \ell)}$ into a single matrix $Y$. Making the identifications
\begin{eqnarray}
A_{(k, \ell)} &=& \mathrm{diag} \left ( 0, \dots, 0,  \left [ \begin{tabular}{cc} $\textbf{q}_\ell(x_k) \textbf{q}_\ell^{\mathrm{T}}(x_k)$ & $0$ \\ $0$ & $x_k\textbf{q}_\ell(x_k) \textbf{q}_\ell^{\mathrm{T}}(x_k)$ \end{tabular} \right ], 0, \dots, 0 \right ) \nonumber \\
B_{(k, \ell), i} &=& -P_\ell^i(x_k) \label{constraint2} \\
c_{(k, \ell)} &=& P_\ell^0(x_k) \nonumber \\
C &=& 0 \; , \nonumber
\end{eqnarray}
(\ref{pmp2}) becomes
\begin{eqnarray}
&& \mathrm{maximize} \; \mathrm{Tr}(CY) + \textbf{b}^{\mathrm{T}} \textbf{y} \; \mathrm{over} \; Y \succeq 0 \nonumber \\
&& \mathrm{such \; that} \; \mathrm{Tr}(A_*Y) + B\textbf{y} = \textbf{c} . \label{sdp}
\end{eqnarray}
In the numerical bootstrap, (\ref{sdp}) and the primal problem corresponding to it are typically solved together, in order to see which one becomes feasible first \cite{s15}.

\subsection{The extremal functional method}
It was shown in \cite{ep12} that solutions to crossing symmetry may be built by locating the zeros of $\textbf{y}$. This functional may be found either by ruling out CFTs just outside the allowed region or by maximizing an OPE coefficient just inside it. Ideally, elements of the spin-$\ell$ spectrum are dimensions $\Delta_k$ such that
\begin{equation}
\mathrm{det} \left ( \textbf{y}^{\mathrm{T}} \left [ \begin{tabular}{ccc} $(\textbf{P}_\ell(\Delta_k))_{0,0}$ & $\dots$ & $(\textbf{P}_\ell(\Delta_k))_{0,n}$ \\ $\vdots$ & $\ddots$ & $\vdots$ \\ $(\textbf{P}_\ell(\Delta_k))_{n,0}$ & $\dots$ & $(\textbf{P}_\ell(\Delta_k))_{n,n}$ \end{tabular} \right ] \right ) = 0 \label{ideal-zero}
\end{equation}
where $\textbf{P}_\ell$ is one of the polynomial vectors appearing in (\ref{block-approx}). On the other hand, numerical errors usually prevent this polynomial from ever reaching zero for real $\Delta$. There are two approaches, both based on the \texttt{spectrum.py} script \cite{s17a}, which make it easy to account for this. The first is to run the script as written after spending several iterations to bring the primal and dual solutions close together. A highly converged functional is needed since \texttt{spectrum.py} assumes that the would-be zeros are close to the local minima of (\ref{ideal-zero}) which uses only the polynomial numerator. The second is to modify the script to use a non-polynomial function minimizer, allowing us to multiply (\ref{ideal-zero}) by the prefactor $\chi_\ell$ from (\ref{block-approx}). The advantage is that when $\textbf{y}$ acts on a full convolved block, the local minima are closer to the physical $\Delta_k$. As a result, punishing \texttt{SDPB} parameters are no longer required. For the data in Figure \ref{manifold}, we have used the first approach and specified
\begin{eqnarray}
&& \texttt{--precision=660} \nonumber \\
&& \texttt{--dualityGapThreshold=1e-75} \nonumber
\end{eqnarray}
as the non-default parameters. The rest of the script obtains high precision OPE coefficients directly from the primal solution.\footnote{For an implementation with linear programming, see \cite{epprsv14}. The semidefinite programming version first appeared in \cite{bllrv14} which uses opposite conventions for what the primal and dual problems are.}

To plot the points in Figure \ref{non-manifold}, we have used an older method which minimizes the error in a set of crossing equations with known scaling dimensions.\footnote{An analogous approach to the severely truncated bootstrap --- fitting operator dimensions and OPE coefficients at the same time --- was recently explored in \cite{l17}.} Once a set of $Z$ stable operators has been found, we may consider a truncated crossing equation of the form
\begin{equation}
\sum_{k = 1}^Z a_k \textbf{F}_k = \textbf{n} \; . \label{rows}
\end{equation}
When studying a single correlator involving $\mathbb{Z}_2$-even operators, we make the identifications
\begin{eqnarray}
a_k &=& \lambda^2_{\phi\phi\mathcal{O}_k} \nonumber \\
\textbf{F}_k &=& \textbf{F}^{\phi\phi ; \phi\phi}_{-,\Delta_k,\ell_k} \nonumber \\
\textbf{n} &=& -\textbf{F}^{\phi\phi ; \phi\phi}_{-,0,0} \; . \label{rows-decode1}
\end{eqnarray}
The point is that $\phi$ may denote $\sigma$, $\epsilon$ or indeed any scalar whose self-OPE is well approximated by the $Z$ operators in our set. We regard (\ref{rows}) as a set of $N > Z$ linear equations with $N$ given by (\ref{mn-comps}). A naive approach is to remove rows corresponding to high derivatives bringing the number of equations down to $Z$. This often leads to $a_k$ coefficients that are negative, even when we use dimensions that are accurate to four digits. To overcome this, we follow \cite{ep12} and leave one extra equation so that $N = Z + 1$. With this overdetermined system, we will not be able to find $a_k$ that satisfy (\ref{rows}) exactly. Rather, we find the $a_k$ that minimize the distance between the left and right sides of (\ref{rows}). In this fit, the constraint $a_k \geq 0$ may be specified by hand. If our norm for this is the 1-norm, it is convenient to introduce a vector of positive entries $\textbf{t}$ such that
\begin{eqnarray}
&& -\textbf{t} \leq \textbf{n} - \sum_{k = 1}^Z a_k \textbf{F}_k \leq \textbf{t} \nonumber \\
&& \left\Vert \textbf{n} - \sum_{k = 1}^Z a_k \textbf{F}_k \right\Vert \leq \sum_{k = 1}^{Z + 1} t_k \; . \label{norm}
\end{eqnarray}

We may now concatenate $\textbf{t}$ and $\textbf{a}$ into a vector $\textbf{y}$ and recognize that the norm (\ref{norm}) is $\textbf{b}^{\mathrm{T}} \textbf{y}$ with the objective $\textbf{b}^{\mathrm{T}} = [1, \dots, 1, 0, \dots, 0]$. Under the identifications
\begin{eqnarray}
B &=& \left [ \begin{tabular}{rrrr} $-I_{Z + 1}$ & $\textbf{F}_1$ & $\dots$ & $\textbf{F}_Z$ \\ $-I_{Z + 1}$ & $-\textbf{F}_1$ & $\dots$ & $-\textbf{F}_Z$ \end{tabular} \right ] \nonumber \\
\textbf{c} &=& \left [ \begin{tabular}{r} $\textbf{n}$ \\ $-\textbf{n}$ \end{tabular} \right ] \; , \label{constraint2}
\end{eqnarray}
this becomes the linear program (LP):
\begin{eqnarray}
&& \mathrm{minimize} \; \textbf{b}^{\mathrm{T}} \textbf{y} \; \mathrm{over} \; \textbf{y} \geq 0 \nonumber \\
&& \mathrm{such \; that} \; B\textbf{y} \leq \textbf{c} . \label{lp}
\end{eqnarray}

It is amusing to point out that a reader using \texttt{SDPB} may continue to use it for solving (\ref{lp}) because every linear program is also a semidefinite program. To do this, the $(2Z + 2) \times (2Z + 1)$ and $(2Z + 2) \times 1$ matrices $B$ and $\textbf{c}$ must be enlarged to $(4Z + 3) \times (2Z + 1)$ and $(4Z + 3) \times 1$ so that they encode component-wise positivity of $\textbf{y}$ in addition to (\ref{norm}). The next necessary step is replacing $\textbf{b}$ with $-\textbf{b}$ (and prepending an arbitrary number) as \texttt{SDPB}'s objective is maximized instead of minimized. Finally, the orthogonal polynomials should all be $1$. With this choice, constraint matrices in (\ref{sdp}) simply pull off a diagonal component of $Y$ and $\textbf{c} - B\textbf{y} = \mathrm{Tr}(A_*Y)$ reads $\textbf{c} - B\textbf{y} \geq 0$. In any case, whether (\ref{lp}) is solved as an LP or SDP, it is clear that the returned $\textbf{y}$ will give us the OPE coefficients $\lambda^2_{\phi\phi\mathcal{O}_k}$.

\begin{figure}[h]
\centering
\subfloat[][$|\lambda_{\sigma\sigma\mathcal{O}}|$ comparison]{\includegraphics[scale=0.45]{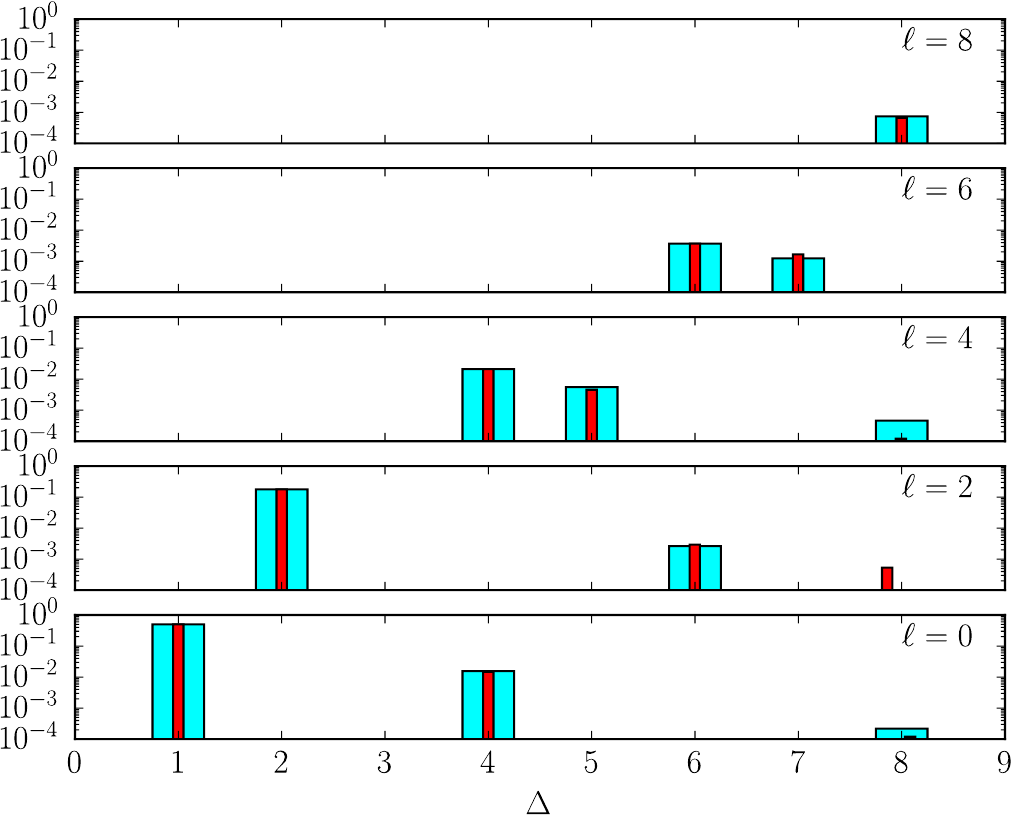}}
\subfloat[][$|\lambda_{\epsilon\epsilon\mathcal{O}}|$ comparison]{\includegraphics[scale=0.45]{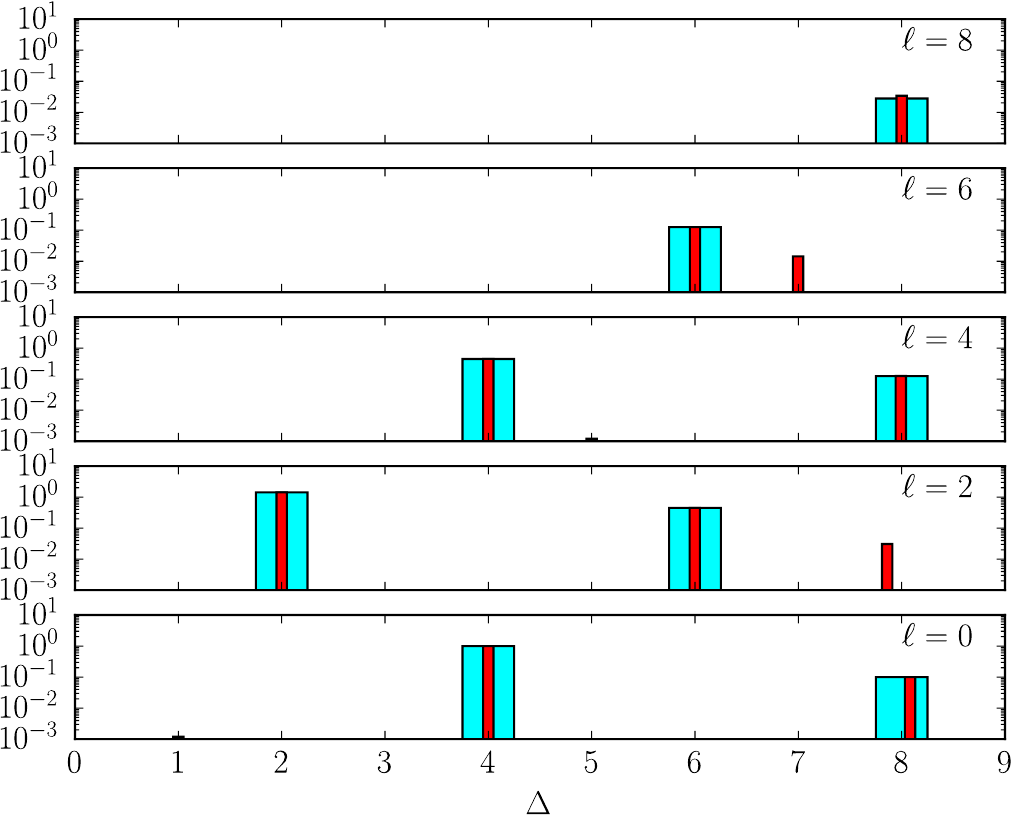}}
\caption{OPE coefficients in the 2D Ising model for operators with $\Delta \leq 8$. Wide blue bars show exact values and narrow red bars show estimates from the fit used in \cite{ep12}.}
\label{2d-fit}
\end{figure}

To verify that this gives reliable values for $a_k$ in two dimensions, we have compared exact and approximate results in the Ising model where $\epsilon \times \epsilon$ operators are also in $\sigma \times \sigma$. Starting with $\Delta_\phi = \Delta_\sigma$, the $\lambda^2_{\sigma\sigma\mathcal{O}}$ coefficients are very close to the ones found in \cite{ep12}. Changing the external dimension to $\Delta_\phi = \Delta_\epsilon$, the same algorithm produces accurate $\lambda^2_{\epsilon\epsilon\mathcal{O}}$ coefficients as well. These are shown in Figure \ref{2d-fit}.

\begin{figure}[h]
\centering
\subfloat[][$|\lambda_{\sigma\sigma\mathcal{O}}|$ comparison]{\includegraphics[scale=0.45]{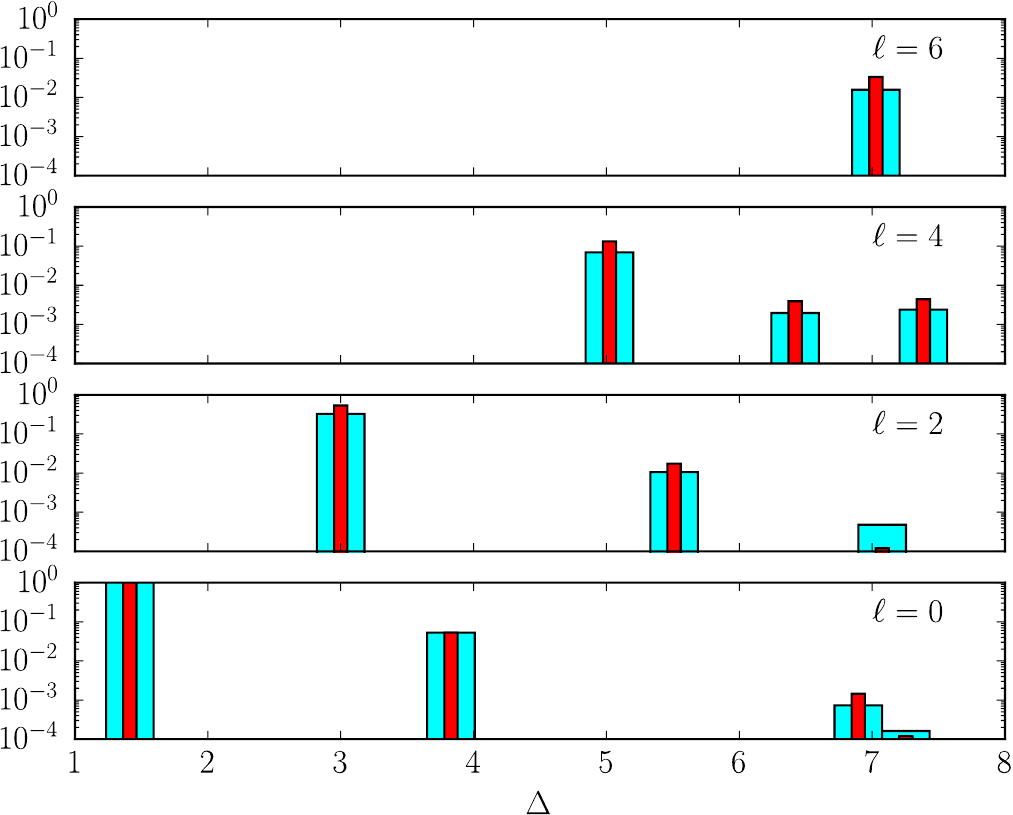}}
\subfloat[][$|\lambda_{\epsilon\epsilon\mathcal{O}}|$ comparison]{\includegraphics[scale=0.45]{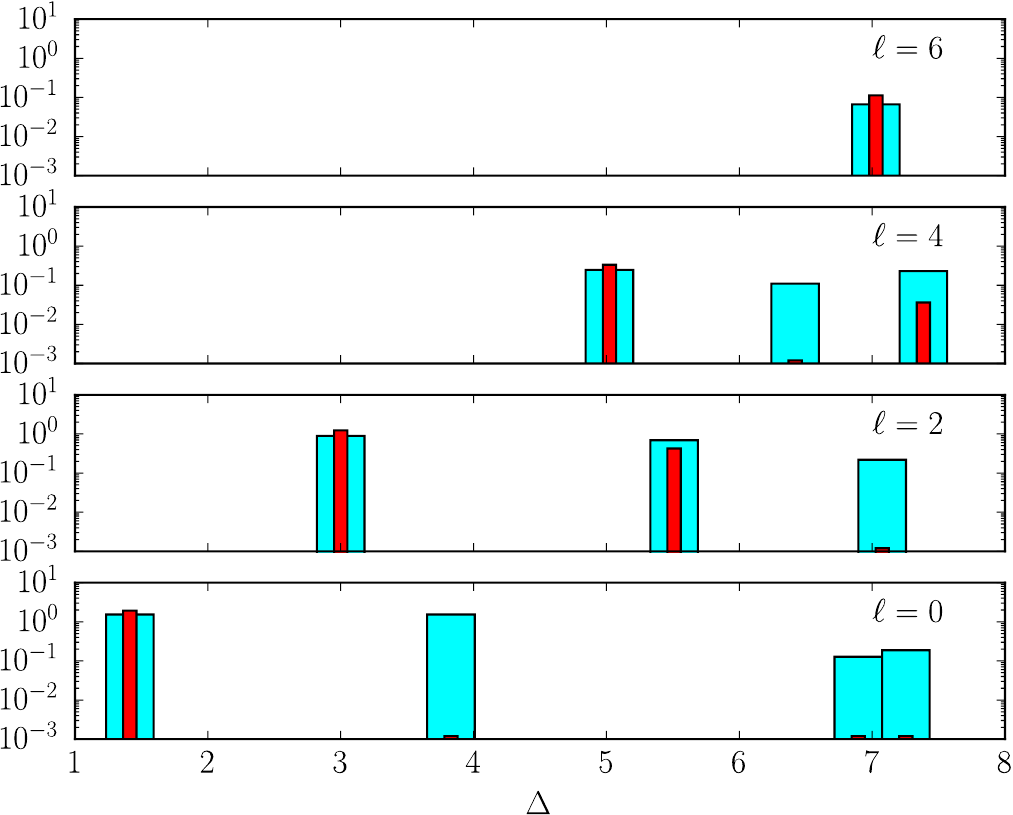}}
\caption{The analogue of Figure \ref{2d-fit} for the 3D Ising model. In the fit, we have used the $\mathbb{Z}_2$-even operator dimensions from \cite{s17a}. This is already a longer list than anything that is likely to come from a one-correlator bootstrap. Due to a remaining bias in the spectrum, there is strong disagreement between the primal and dual methods.}
\label{3d-fit}
\end{figure}

Turning to three dimensions, exact Ising CFT data are not available, but the tables in \cite{s17a} have negligible error for our purposes. Despite all the progress in isolating this model, it appears that several spin-$\ell$ operators are still missing from these spectra: the ones that do not fall into $[\sigma\sigma]_n$, $[\epsilon\epsilon]_n$ and $[\sigma\epsilon]_n$ twist families. As evidence of this, Figure \ref{3d-fit} shows that OPE coefficients fit with the dual method differ greatly from the ones returned by \texttt{spectrum.py}. To guess where the first missing operator appears, we can take the naive view that twist families exist all the way down to $\ell = 0$. Constructing one out of an irrelevant operator $\Phi$, we have $\Delta \approx 2\Delta_\Phi + 2n + \ell > 2d$. Four-point functions approximated by bootstrap data have already proven useful for conformal perturbation theory and measuring non-Gaussianity \cite{ks16, rsz17}. As it is significantly affected by the incompleteness of the spectrum above $\Delta_* \approx 2d$, a fit to the crossing equations must be a less forgiving problem.

Although we have not done so in this work, it should also be possible to fit mixed OPE coefficients in a 2D theory. The last two rows of (\ref{mixed-rule}) would lead to coupled quadratic equations which rapidly become impractical to solve as our system grows. Therefore, we will focus on the third row. If only the identity has been singled out, this is a homogenous equation and its parameters are ambiguous up to a rescaling. This is why it helps to find OPE coefficients in a prescribed order. As long as $\lambda_{\phi\phi\phi^2}$ has been found from the $\phi \times \phi$ fit described above, permutation symmetry of this coefficient may be used to make our equation inhomogeneous. In this case, it takes the form of (\ref{rows}) but now with
\begin{eqnarray}
a_k &=& \lambda^2_{\phi\phi^2\mathcal{O}_k} \nonumber \\
\textbf{F}_k &=& \textbf{F}^{\phi\phi^2 ; \phi\phi^2}_{-,\Delta_k,\ell_k} \nonumber \\
\textbf{n} &=& -\lambda^2_{\phi\phi^2\phi}\textbf{F}^{\phi\phi^2 ; \phi\phi^2}_{-,\Delta_\phi,0} \; . \label{rows-decode2}
\end{eqnarray}
Essentially, $\phi$ plays the same role that the identity operator played before.

\bibliographystyle{unsrt}
\bibliography{references}
\end{document}